\documentclass[12pt]{article}
\usepackage{setspace}
\usepackage[margin=1.25in]{geometry}

\usepackage[utf8]{inputenc}
\usepackage{hanging}	
\usepackage{soul}
\usepackage{xcolor}

\usepackage{amsmath}
\usepackage{amsthm}
\usepackage{amssymb}
\usepackage{amsfonts}
\usepackage{setspace}
\usepackage{multirow}
\usepackage{caption}
\DeclareCaptionLabelFormat{AppendixTables}{A.#2}
\usepackage{subcaption}
\usepackage{graphics}
\usepackage{lscape}
\usepackage{changepage}
\usepackage{graphicx}
\usepackage{xcolor,soul}
\usepackage{booktabs}
\usepackage{hyperref}
\usepackage{tikz-cd}
\usepackage[makeroom]{cancel}
\usepackage{mathrsfs}
\usepackage{natbib}
\usepackage{xr-hyper}
\usepackage{subfiles}
\tikzset{
  symbol/.style={
    draw=none,
    every to/.append style={
      edge node={node [sloped, allow upside down, auto=false]{$#1$}}}
  }
}

\theoremstyle{definition}
\newtheorem{theorem}{Theorem}[section]  \newtheorem{assumption}{Assumption}[section] \newtheorem{remark}{Remark}[section] \newtheorem{lemma}{Lemma}[section] \newtheorem{proposition}{Proposition}[section] \newtheorem{definition}{Definition}[section]
\newtheorem{corollary}{Corollary}[section]

\title{Informativeness under Model Uncertainty: Shadow Prices and Ridge Penalties}
\author{Jieun Lee\footnote{Department of Economics, Emory University. \textit{jieun.lee@emory.edu; jieunlee.sophia@gmail.com}}\; and Esfandiar Maasoumi\footnote{Department of Economics, Emory University. \textit{esfandiar.maasoumi@emory.edu}}}

\begin{document}
\onehalfspacing

\maketitle

\begin{abstract} 
We develop inference under model uncertainty due to weak, noisy, multiple candidate restrictions and theories, and nuisance control covariates. A unified framework is given with degrees of misspecification and corresponding shadow prices, based on a Lagrangian constrained optimization approach, and a data-driven tolerance parameter selected via a Stein-type (shrinkage) risk criterion. A debiasing step is based on Karush--Kuhn--Tucker conditions. We introduce individual shadow prices (ISP) for different restrictions to measure empirical relevance and propose a plateau rule to separate signal from noise. We establish consistency and asymptotic normality of the estimators and characterize the ISP. Simulations and an application to a Solow growth model illustrate the method’s practical usefulness.

\noindent Keywords: model uncertainty; multiple restrictions; individual shadow prices; regularization bias; Stein-type risk; debiasing; structural break detection. \\
\noindent JEL codes: C12, C13, C52, C61.
\end{abstract}

\section{Introduction}
Empirical research often involves competing theories under multiple model restrictions. Importance of accommodating this \textit{model uncertainty} is gaining overdue attention. We focus on a setting with potentially invalid or approximate restrictions on a target component of models, $m_1$, say, with possibly further nuisance components. Our goal is to assess degrees of compatibility with the data evidence, priced by shadow prices, avoiding suboptimal model selection strategies on $m_1$.

This perspective aligns with semiparametric and modern machine learning frameworks under sparsity. We view the model as consisting of a target component $m_1$, a nuisance component $m_2$, managed by Machine Learning, and an error term (see, e.g., \citet{Chernozhukovetal2018}). After accounting for the nuisance component $m_2$, inference reduces to the finite-dimensional target $m_1$, on which economically meaningful, theory-driven restrictions are imposed. This is a generalization of partially linear models and General Linear Models (GLM). Methods for robust and debiased ML, prior to inference on $m_1$ abound. See \citet{DrukkerandLiu2022} for a recent review and algorithms, highlighting the centrality of Neyman Orthogonalization and computational time and efficiency.

Empirical decisions are sensitive to weak or misspecified restrictions. This is the case whether restrictions produce greater sparsity or enlarge predictive sets by irrelavant variables; see Gospodinov, Kan, and Robbati (\citeyear{Gospodinovetal2014})-GKR. Fundamental tradeoff, imposing restrictions may introduce bias, while discarding them sacrifices efficiency, is exacerbated by possibility of limiting properties of statistics, see GKR (\citeyear{Gospodinovetal2014}) and \citet{MaasoumiandPhillips1982}. This tension mirrors the bias--variance tradeoff inherent in modern regularization and shrinkage methods.

We propose a unified framework for estimation and the assessment of informativeness (empirical relevance) under multiple restrictions and model uncertainty. Rather than selecting a single model, our approach allows for controlled degrees of misspecification and lets the data determine the empirical importance of each restriction. Specifically, the framework delivers consistent estimation in the presence of multiple candidate restrictions while simultaneously quantifying the empirical relevance of each restriction for model fit. In this sense, we move from model selection toward a structured form of model averaging or mixed estimation (of $m_1$) that better reflects practical empirical work.

Initial assessment of restrictions based on their empirical relevance precedes traditional testing. This effectively prioritizes informative restrictions and downweights or discards those that are weak or severely misspecified. The approach is consistent with the spirit of modern double machine learning, where nuisance components are flexibly controlled and inference focuses on a low-dimensional target. By incorporating a data-driven assessment of restrictions, the framework reduces the multiplicity burden while maintaining a fixed error rate, thereby improving the power of subsequent inference.

At the estimation stage, we adopt a bi-level (profiling) algorithm. The inner problem solves a Lagrangian constrained optimization for a given tolerance level governing restriction misspecification, yielding the constrained estimator and its associated shadow price. The outer problem selects the tolerance parameter by optimizing a Stein-type risk criterion, allowing the estimator to adapt to the empirical environment and optimally balance regularization bias and variance. Importantly, this approach does not require prior knowledge of which restrictions are valid or informative and is robust to weak or misspecified restrictions.

We further refine estimation by correcting a residual centering bias. Although the optimized tolerance parameter achieves an optimal bias--variance tradeoff, it does not eliminate the regularization bias induced by the constraint. We therefore introduce a debiasing step based on a local characterization of this bias. From the Karush--Kuhn--Tucker conditions, the deviation of the pseudo-true parameter from the unconstrained target is proportional to the Lagrange multiplier, which captures the local tightness of the constraint. This leads to a simple correction that removes the leading distortion and recenters the estimator without affecting its first-order variance.

Building on this estimation framework, we assess the empirical relevance of restrictions through \emph{individual shadow prices} (ISP), which decompose the global shadow price into restriction-specific components. These ISP measure the marginal gain from relaxing each restriction and provide a coherent and comparable metric of empirical relevance. We then propose a \emph{plateau rule} to separate signal from noise by detecting a structural break in the ordered ISP sequence.

We establish consistency and asymptotic normality of the constrained estimator with optimized misspecification tolerance and its debiased counterpart, and characterize the asymptotic behavior of the ISP. Notably, the constrained optimization approach yields an estimator that is robust to restriction misspecification, as the tolerance parameter allows the data to determine the extent to which restrictions are enforced. The asymptotic analysis formalizes this property: the optimized tolerance converges to a pseudo-true value governing the bias--variance tradeoff, so that the constrained estimator targets a pseudo-true parameter reflecting both sampling variation and misspecification. While this does not eliminate regularization bias, the debiasing step corrects the leading distortion and restores convergence to the structural target.

Simulation results demonstrate strong finite-sample performance: the proposed estimator is consistent, analytical inference attains near-nominal coverage comparable to the bootstrap, and the plateau rule effectively separates signal from noise. The empirical application highlights a key insight: statistical validity and empirical relevance are distinct. A restriction may be rejected by standard tests yet remain empirically irrelevant, as relaxing it does not improve model fit. This highlights the value of the proposed approach in identifying which restrictions matter for estimation.

The rest of the paper is organized as follows. In Section 1.1, we discuss the relation of our framework to the existing literature. In Section 2, we present our theory framework. In Section 3, we develop the asymptotic theory. In Section 4, we develop bootstrap-based inference as an extension. In Section 5, we report Monte Carlo simulation results. In Section 6, we provide an empirical application. In Section 7, we conclude. Additional discussions are provided in the Supplement.

\subsection{Related literature}
\paragraph{Restrictions from economic theory.} We consider a priori restrictions on the parameter space and through statistical distributions (e.g., moment restrictions). Direct restrictions may be imposed in restricted estimation, and implied restrictions may be tested by diagnostic testing. 

For example, in macroeconomics and growth, structural approaches such as \citet{Solow1956} and \citet{Swan1956}, the human-capital-augmented growth regressions of \citet{MankiwRomerandWeil1992}, and the spatial growth model of \citet{ErturandKoch2007} incorporate cross-parameter restrictions implied by theory directly into the regression specification. Similarly, in microeconomics and industrial organization, rational expectations models \citep{HansenandSargent1980} and auction models based on Bayesian Nash equilibrium \citep{GuerrePerrigneandVuong2000} embed equilibrium restrictions within the estimating equations.

Other ``restrictions'' are implied and treated as \emph{testable implications of an empirical model}. In this case, estimation proceeds without imposing the full set of theoretical restrictions, and the restrictions are evaluated ex post using the data. Examples include present-value tests based on vector autoregressions \citep{CampbellandShiller1987}, revealed-preference and shape-restriction tests in demand analysis \citep{Blundell2005}, and asset pricing tests that assess zero-alpha conditions or bounds on risk aversion \citep{FamaandMacBeth1973,MehraandPrescott1985}.

We provide a unified, data-driven method to rank and select restrictions prior to formal validity testing. The proposed constrained optimization framework accommodates a broad class of linear and nonlinear restrictions.

\paragraph{Model uncertainty, ambiguity, and misspecification.}
Model uncertainty arises when multiple candidate restrictions are considered as if they were (approximately) valid. Constrained models may be within a limited ball of deviating candidate models (model ambiguity), or globally misspecified. Metrics and different rates for deviations are necessary. It has been increasingly recognized that shrinkage methods, with or without pretesting interpretation, provide an actionable approach to formalizing degrees of misspecification.  

Early work by \citet{Maasoumi1978} proposed combining estimators under model uncertainty using information from test statistics. More recent literature on model ambiguity and sensitivity (e.g., \citet{BonhommeandWeider2022}, \citet{ChristensenandConnault2023}) treats such uncertainty as intrinsic, allowing models or distributions to vary within neighborhoods of a benchmark specification. These approaches quantify the sensitivity of conclusions to misspecification, often through a notion of \emph{misspecification size}. For instance, \citet{ChristensenandConnault2023} derive counterfactual bounds by optimizing over divergence-based neighborhoods. Our framework builds on this perspective but differs in a key dimension. Rather than varying distributions or moment conditions directly, we parameterize misspecification through a tolerance level $c$ on restriction violations and select it endogenously. This shifts the focus from evaluating robustness over a fixed neighborhood to optimizing the degree of deviation from imposed restrictions.

A context is provided by the Hansen--Jagannathan distance (GKR, \citeyear{Gospodinovetal2013,Gospodinovetal2016}), which provides a geometric measure of misspecification. While the HJ-distance treats misspecification as an \emph{ex post} phenomenon, our framework treats it as a \emph{decision variable}. Moreover, Lagrange multipliers are central in our analysis and are interpreted as ISP, which quantify the marginal contribution of each restriction. A bound on a constraining hyper sphere (or elipsoid) which represents tolerance for model deviations (see below for definition of ``c''), is optimized in the spirit of Stein-Like estimators, for an explicit bias--variance tradeoff, extending existing robustness frameworks to an optimization-based approach to model uncertainty and misspecification.

In contrast to local misspecification as the key feature of model ambiguity, global or other considerations of misspecification raise the question of model selection and model averaging. In the ML context, LASSO is an example of model selection, whereas Ridge is a form of model averaging. Model selection entails biases and generally requires presumption of a reference as ``the true DGP''. This is known to be suboptimal.\footnote{Bayesian methods are also generally model selection procedures, but may be extended to model averaging.} Model averaging can avoid these shortcomings at the cost of adopting an oracle model average object; see \citet{GospodinovandMaasoumi2021}.  

\paragraph{Inference-based insights for regularization methods in machine learning.}
Our framework is closely related to machine learning literature through the duality between constrained optimization and penalized estimation \citep{HoerlandKennard1970,Tibshirani1996,ZouandHastie2005}. This duality implies that the estimator can be viewed as solving a constrained problem in which deviations from imposed restrictions are controlled within a tolerance level, thereby allowing for restriction misspecification and enhancing robustness. Importantly, while the machine learning literature typically addresses regularization bias through tuning parameters, our approach treats it as a statistical optimization problem, using a Stein-type risk criterion for selecting the tolerance parameter and a debiasing correction based on analytic theory. Optimization techniques for choice of ``tuning parameters'', such as by Bayesian Information Criterion (BIC), compete with Cross Validation (CV) and other ad hoc methods.

Our framework extends these methods to an inferential setting and contributes to the growing literature on inference in machine learning. While inference tools for machine learning models, such as post hoc decomposition methods like SHAP \citep{LundbergandLee2017}, quantify the contribution of individual covariates to predictions from a fixed model, our approach develops statistical decision rules for assessing the relative empirical support for restrictions (that may be false). This is consistent with our preference to avoid assumption of a ``true DGP''.

\section{Theory framework} \label{sec:model_spec}
Let \(g(\theta) \in \mathbb{R}^q\) denote a vector function of the parameter \(\theta \in \mathbb{R}^p\), and let \(\Sigma \in \mathbb{R}^{q \times q}\) be a known positive definite weighting matrix. We represent the restriction system through the quadratic form
$
h(\theta)=g(\theta)^{\prime}\Sigma^{-1}g(\theta).$

We introduce a finite upper bound $c_0 > 0$ on $h(\theta)$, from a candidate set \(\mathcal C \subset (0,c_0]\) of tolerance levels. For each \(c \in \mathcal C\), the scalar \(c\) determines how tightly the restrictions are enforced: small values impose strong adherence, while larger values allow greater deviation and thus provide robustness to misspecification. Rather than fixing \(c\) \textit{a priori}, we select it from the data using a Stein-type analytical risk criterion.
These restrictions represent many test statistics and penalization functions, including the canonical Ridge. The scalar $c$ resembles the critical level of frequentist testing procedures, but has counterparts in Bayesian prior interpretations. 

Our framework is cast in a general M-estimation setting. Let
$
\phi_n(\theta)=\frac{1}{n}\sum_{i=1}^n \phi(Z_i,\theta)$,
$\phi(\theta)=\mathbb{E}[\phi(Z_{i},\theta)]$, and
$\phi^0=\mathbb{E}[\phi(Z_i,\theta^{0})],
$
where \(Z_i=(y_i,x_i)\) are i.i.d.\ observations, and \(\phi(\cdot)\) is a loss function. For each \(c \in \mathcal C\), define the constrained estimator
\begin{equation}
\hat{\theta}(c)
\in
\arg\min_{\theta \in \Theta} \phi_n(\theta)
\quad
\text{s.t.}
\quad
h(\theta)\le c.
\label{eq:general_problem}
\end{equation}

\noindent The associated Lagrangian is
\begin{equation}
\mathcal{L}_n(\theta,\lambda;c)
=
\phi_n(\theta)
+
\lambda\bigl(h(\theta)-c\bigr),
\qquad
\lambda\ge 0,
\end{equation}
with first-order condition
$
\nabla_\theta \phi_n(\theta)
+
2\lambda \nabla_\theta g(\theta)^{\prime}\Sigma^{-1}g(\theta)
=
0.
$ Thus, for each \(c\), the pair \((\hat{\theta}(c),\hat{\lambda}(c))\) is determined jointly from the Karush--Kuhn--Tucker (KKT) system. The multiplier \(\hat{\lambda}(c)\) measures the marginal gain from relaxing the restriction and admits a natural interpretation as a shadow price.

The outer step selects the tolerance level via
\begin{equation}
\hat c \in \arg\min_{c \in \mathcal C} \widehat R_{\mathrm{lin}}(c),
\end{equation}
yielding the optimized estimator
$
\hat{\theta}=\hat{\theta}(\hat c),\;
\hat{\lambda}=\hat{\lambda}(\hat c).
$ The procedure is therefore bi-level: the inner problem computes \(\hat{\theta}(c)\), while the outer problem selects \(c\) to balance regularization bias and variance.

\paragraph{Population characterization.}
Let \((\theta^*(c),\lambda^*(c))\) denote the population KKT solution. If \(\lambda^*(c)>0\), the constraint binds and affects the solution; if \(\lambda^*(c)=0\), the constraint is slack and the estimator coincides locally with the unconstrained solution. In general, \(\theta^*(c)\) differs from the true parameter and represents a pseudo-true value.

\paragraph{Soft restrictions via constrained optimization.}
This framework departs from exact-restriction regression by treating theory as approximate rather than binding. Instead of imposing $g(\theta)=0$, we allow deviations through the constraint $g(\theta)^{\prime}\Sigma^{-1}g(\theta)\leq c$, letting the data determine the extent of enforcement. This preserves feasibility under misspecification and introduces a bias--variance tradeoff governed by $c$, where smaller values approximate exact restrictions and larger values approach the unrestricted estimator. The Lagrange multiplier provides a shadow-price interpretation, measuring the marginal gain in fit from relaxing the restriction.

\paragraph{Economic interpretation of ISP.}
The global shadow price is given by $\partial \mathcal{L}/\partial c = -\lambda$, which measures the marginal reduction in the objective from relaxing the tolerance level. The ISP decompose this quantity across restrictions via directional derivatives:
\[
\mathrm{ISP}_j(c)
=
2\lambda(c)\,\mathrm{sign}(g_j)\,[\Sigma^{-1}g]_j.
\]
Each ISP captures the marginal gain in fit from relaxing the $j$th restriction. Their magnitudes therefore provide a natural measure of empirical relevance and offer a principled basis for ranking restrictions.

\section{Asymptotic Theory}

We develop the theory in several steps. First, we show that on the active region the KKT system can be locally profiled with respect to the tolerance parameter \(c\), so that the constrained estimator and its associated shadow price vary smoothly with \(c\). Next, treating \(c\) as fixed, we derive the inner asymptotic theory for the constrained estimator, the global shadow price, and the ISP vector. We then analyze the outer problem that selects \(c\) by minimizing a Stein-type analytical risk proxy and characterize the resulting debiased optimized estimator. Finally, we introduce a plateau rule for inference on the ISP vector, which separates signal from noise using a bootstrap procedure.

\subsection{Profiling with respect to the tolerance parameter}
\label{subsec:pf_profile_c}

We first study the local profiling map with respect to the tolerance parameter \(c\). Because smooth profiling requires the equality form of the KKT system, the result is stated on the active region \(\mathcal C_{\mathrm{active}}\), where the quadratic restriction binds and the multiplier is strictly positive. This profiling result is therefore conditional on local empirical relevance of the aggregate restriction, not an assumption imposed on the whole model. For each \(c\in\mathcal C\), define the sample KKT map
\begin{equation*}
\Phi_n(\theta,\lambda;c)
:=
\begin{pmatrix}
\nabla_{\theta}\phi_n(\theta)+\lambda \nabla_{\theta}h(\theta)\\[4pt]
h(\theta)-c
\end{pmatrix},
\end{equation*}
and the population KKT map
$
\Phi(\theta,\lambda;c)
:=
\begin{pmatrix}
\nabla_{\theta}\phi(\theta)+\lambda \nabla_{\theta}h(\theta)\\[4pt]
h(\theta)-c
\end{pmatrix}.
$ Under the binding regime, the population solution \((\theta^*(c),\lambda^*(c))\) satisfies
$
\Phi(\theta^*(c),\lambda^*(c);c)=0,
\;
\lambda^*(c)>0,
$
and the sample one \((\hat\theta(c),\hat\lambda(c))\) satisfies
$
\Phi_n(\hat\theta(c),\hat\lambda(c);c)=0,
\;
\hat\lambda(c)>0.
$

\begin{assumption}[Local regularity of the profiled KKT]
\label{assump:profile_regularity}
Fix \(c^{\dagger}\in\mathcal C_{\mathrm{active}}\). Suppose
\begin{enumerate}
    \item \(\phi^0(\theta)\) is twice continuously differentiable in a neighborhood of \(\theta^*(c^{\dagger})\), and \(h(\theta)\) is twice continuously differentiable in a neighborhood of \(\theta^*(c^{\dagger})\);

    \item the active-binding regime holds at \(c^{\dagger}\):
    $
    h(\theta^*(c^{\dagger}))=c^{\dagger},
    \;
    \lambda^*(c^{\dagger})>0;
    $

    \item the Jacobian of \(\Phi(\theta,\lambda;c)\) with respect to \((\theta,\lambda)\), evaluated at \((\theta^*(c^{\dagger}),\lambda^*(c^{\dagger}),c^{\dagger})\), is nonsingular:
    $
    J_{\Phi}(c^{\dagger})
    :=
    \nabla_{(\theta,\lambda)}\Phi(\theta,\lambda;c)
    \big|_{(\theta,\lambda,c)=(\theta^*(c^{\dagger}),\lambda^*(c^{\dagger}),c^{\dagger})}
    =
    \begin{pmatrix}
    A(c^{\dagger}) & a(c^{\dagger})\\
    a(c^{\dagger})^{\prime} & 0
    \end{pmatrix},
    $
    where
    $
    a(c):=\nabla_{\theta}h(\theta^*(c)),
    \;
    A(c):=
    \nabla_{\theta\theta}^2\phi(\theta^*(c))
    +
    \lambda^*(c)\nabla_{\theta\theta}^2 h(\theta^*(c)).
    $
\end{enumerate}
\end{assumption}

\begin{lemma}[Population profiling map exists and is locally unique]
\label{lem:population_profile}
Under Assumption~\ref{assump:profile_regularity}, there exists an open neighborhood \(\mathcal N(c^{\dagger})\) of \(c^{\dagger}\) and unique continuously differentiable functions
$
c\mapsto \theta^*(c),
\;
c\mapsto \lambda^*(c),
$
defined on \(\mathcal N(c^{\dagger})\), such that
$
\Phi(\theta^*(c),\lambda^*(c);c)=0
\;\text{for all } c\in\mathcal N(c^{\dagger}),
$
with
$
h(\theta^*(c))=c,
\;
\lambda^*(c)>0.
$
\end{lemma}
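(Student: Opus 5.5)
The plan is to apply the implicit function theorem to the population KKT map \(\Phi(\theta,\lambda;c)\) at the base point \((\theta^*(c^{\dagger}),\lambda^*(c^{\dagger}),c^{\dagger})\), and then use continuity to preserve the strict inequality \(\lambda^*(c)>0\).

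First I will check the hypotheses of the implicit function theorem. Treat \(\Phi:\mathbb{R}^p\times\mathbb{R}\times\mathbb{R}\to\mathbb{R}^{p+1}\) as a map of \((\theta,\lambda)\) with parameter \(c\). By Assumption~\ref{assump:profile_regularity}(1), \(\phi\) and \(h\) are \(C^2\) near \(\theta^*(c^{\dagger})\). (I read \(\phi^0(\theta)\) as the population loss \(\phi(\theta)\).) Hence \(\nabla_\theta\phi\) and \(\nabla_\theta h\) are \(C^1\) there.

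Since \(\lambda\) enters linearly and \(c\) enters affinely, \(\Phi\) is \(C^1\) jointly in \((\theta,\lambda,c)\) on a neighborhood of the base point. Its partial Jacobian in \((\theta,\lambda)\) is exactly the bordered matrix \(J_{\Phi}(c^{\dagger})\), with blocks \(A(c^{\dagger})\) and \(a(c^{\dagger})\). Its derivative in \(c\) is \((0,\dots,0,-1)'\).

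By Assumption~\ref{assump:profile_regularity}(2), the base point satisfies \(\Phi=0\): stationarity holds by the definition of the population KKT solution, and \(h(\theta^*(c^{\dagger}))=c^{\dagger}\). By (3), \(J_{\Phi}(c^{\dagger})\) is nonsingular.

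The implicit function theorem then gives the main conclusions directly. There are open neighborhoods \(\mathcal N_0\) of \(c^{\dagger}\) and \(\mathcal U\) of \((\theta^*(c^{\dagger}),\lambda^*(c^{\dagger}))\), and a unique \(C^1\) map \(c\mapsto(\theta^*(c),\lambda^*(c))\) from \(\mathcal N_0\) into \(\mathcal U\), such that \(\Phi(\theta^*(c),\lambda^*(c);c)=0\). Moreover, these are the only zeros of \(\Phi(\cdot;c)\) in \(\mathcal U\).

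Reading off the second component of \(\Phi=0\) gives \(h(\theta^*(c))=c\) for every \(c\in\mathcal N_0\). As a by-product, differentiating the identity gives
\[
J_{\Phi}(c)\,\frac{d}{dc}\bigl(\theta^*(c)',\lambda^*(c)\bigr)'=(0,\dots,0,1)'.
\]
This formula will be used later in the profiling analysis.

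Finally, positivity of the multiplier follows from continuity. The map \(\lambda^*(\cdot)\) is continuous and \(\lambda^*(c^{\dagger})>0\). I will shrink \(\mathcal N_0\) to an open \(\mathcal N(c^{\dagger})\) on which \(\lambda^*(c)>0\). On this set, the pair also satisfies the full KKT conditions with complementary slackness, since the constraint binds.

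The main subtlety is the meaning of ``unique.'' The implicit function theorem gives uniqueness only locally, within \(\mathcal U\), and not among all global KKT points. I will state it that way: the branch is the unique \(C^1\) (indeed, unique continuous) branch through the base point. Any two such branches agree near \(c^{\dagger}\) by local uniqueness, and hence on the connected neighborhood.

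A secondary point is to ensure that the \(C^2\) neighborhoods from (1) contain \(\theta^*(c)\) for every \(c\in\mathcal N(c^{\dagger})\). This follows by shrinking \(\mathcal U\) so that its \(\theta\)-projection lies inside those neighborhoods before applying the theorem.
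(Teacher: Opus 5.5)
Your proposal is correct and follows the paper's proof: it applies the implicit function theorem to $\Phi$ at $(\theta^*(c^{\dagger}),\lambda^*(c^{\dagger}),c^{\dagger})$ using parts (1)--(3) of Assumption~\ref{assump:profile_regularity}, reads $h(\theta^*(c))=c$ off the second block, and shrinks the neighborhood by continuity of $\lambda^*(\cdot)$ to keep $\lambda^*(c)>0$. Your extra remarks are accurate refinements that the paper leaves implicit: stationarity at the base point comes from the definition of the KKT solution rather than from part (2), uniqueness holds only locally within the implicit-function-theorem neighborhood, and you record the derivative identity for $c\mapsto(\theta^*(c),\lambda^*(c))$.
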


\noindent \textbf{Proof.} See Appendix \ref{subsec:pf_lem-population-profile}.

\begin{assumption}[Sample regularity for profiling]
\label{assump:sample_profile}
With probability approaching 1,
\begin{enumerate}
    \item \(\phi_n(\theta)\) is twice continuously differentiable in a neighborhood of \(\theta^*(c^{\dagger})\);

    \item the sample KKT Jacobian
    $
    J_{\Phi_n}(\theta,\lambda;c)
    :=
    \nabla_{(\theta,\lambda)}\Phi_n(\theta,\lambda;c)
    =
    \begin{pmatrix}
    \nabla_{\theta\theta}^2\phi_n(\theta)+\lambda \nabla_{\theta\theta}^2 h(\theta)
    &
    \nabla_{\theta}h(\theta)\\[4pt]
    \nabla_{\theta}h(\theta)^{\prime} & 0
    \end{pmatrix}
    $
    is nonsingular uniformly over \((\theta,\lambda,c)\) in a neighborhood of \((\theta^*(c^{\dagger}),\lambda^*(c^{\dagger}),c^{\dagger})\);

    \item the sample KKT map \(\Phi_n(\theta,\lambda;c)\) converges uniformly to \(\Phi(\theta,\lambda;c)\) in that neighborhood.
\end{enumerate}
\end{assumption}

\begin{theorem}[Sample profiling map: local existence, uniqueness, and smoothness]
\label{thm:sample_profile}
Under Assumptions~\ref{assump:profile_regularity} and \ref{assump:sample_profile}, there exists a neighborhood \(\mathcal N(c^{\dagger})\) of \(c^{\dagger}\) such that, with probability approaching one, for every \(c\in\mathcal N(c^{\dagger})\), the sample KKT system
$
\Phi_n(\theta,\lambda;c)=0
$
admits a unique local solution
$
(\hat\theta(c),\hat\lambda(c))
$
satisfying
$
h(\hat\theta(c))=c,
\;
\hat\lambda(c)>0.
$
Moreover, the map
$
c\mapsto (\hat\theta(c),\hat\lambda(c))
$ 
is continuously differentiable on \(\mathcal N(c^{\dagger})\) with probability approaching one.
\end{theorem}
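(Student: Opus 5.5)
The plan is a perturbation argument around the population profile of Lemma~\ref{lem:population_profile}. The sample KKT system is treated as a uniformly small perturbation of the population one, and a quantitative implicit function theorem (or, equivalently, a Newton--Kantorovich / contraction-mapping argument) is applied jointly in $(\theta,\lambda)$ with $c$ as a parameter.

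\textbf{Step 1: a fixed neighborhood on which the population map is well behaved.} By Lemma~\ref{lem:population_profile} and continuity of $J_\Phi$ (Assumption~\ref{assump:profile_regularity}.1 and .3), I choose radii $r,\delta>0$ with the following properties. Set $B_r(c)$ to be the ball of radius $r$ around $(\theta^*(c),\lambda^*(c))$, and $\mathcal N(c^\dagger)=(c^\dagger-\delta,c^\dagger+\delta)$. For all $c\in\mathcal N(c^\dagger)$:
\begin{itemize}
\item $B_r(c)$ lies inside the neighborhood of Assumption~\ref{assump:sample_profile};
\item $\|J_\Phi(\theta,\lambda;c)^{-1}\|\le M$ on $B_r(c)$;
\item $\lambda^*(c)\ge 2\underline\lambda>0$, and $r<\underline\lambda$.
\end{itemize}
The constant $\underline\lambda$ exists because $\lambda^*$ is continuous and positive at $c^\dagger$.

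\textbf{Step 2: existence and uniqueness via a contraction.} Define the simplified Newton map
\[
T_c(\theta,\lambda)=(\theta,\lambda)-J_\Phi(\theta^*(c),\lambda^*(c);c)^{-1}\Phi_n(\theta,\lambda;c).
\]
I will show that, w.p.a.1, $T_c$ maps $B_r(c)$ into itself and is a contraction, uniformly in $c\in\mathcal N(c^\dagger)$. This rests on two estimates.
\begin{itemize}
\item \emph{Self-map.} We have $\|\Phi_n(\theta^*(c),\lambda^*(c);c)\|=\|\Phi_n-\Phi\|\le\sup\|\Phi_n-\Phi\|=o_p(1)$ by Assumption~\ref{assump:sample_profile}.3. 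This controls how far $T_c$ moves the center of the ball.
\item \emph{Contraction.} The derivative of $T_c$ is $I-J_\Phi(\cdot^*)^{-1}J_{\Phi_n}(\theta,\lambda;c)$. Its norm is small when $J_{\Phi_n}$ is uniformly close to $J_\Phi(\cdot^*)$ on $B_r(c)$.
\end{itemize}
The contraction estimate is the main obstacle. Assumption~\ref{assump:sample_profile} gives uniform convergence of $\Phi_n$ and nonsingularity of $J_{\Phi_n}$, but not explicitly uniform convergence of $J_{\Phi_n}$. I would handle this in one of two ways:
\begin{itemize}
\item read Assumption~\ref{assump:sample_profile}.2 as including a uniform bound on $\|J_{\Phi_n}^{-1}\|$, and combine it with equicontinuity of the Hessians (twice continuous differentiability w.p.a.1) together with a uniform law of large numbers for $\nabla^2\phi_n$; or
\item if only nonsingularity is available, fall back on a degree/Brouwer argument. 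Since $\Phi_n(\cdot;c)$ is uniformly close to $\Phi(\cdot;c)$, and the latter has a nondegenerate zero in $B_r(c)$ with $\|\Phi\|$ bounded below on $\partial B_r(c)$ by roughly $r/M$, the sample map has a zero in $B_r(c)$. Uniqueness then follows from the local inverse function theorem applied to $\Phi_n$ (nonsingular Jacobian everywhere on a convex ball), made global on the ball by shrinking $r$ so that $J_{\Phi_n}$ stays near a fixed invertible matrix.
\end{itemize}
Either route yields a unique zero $(\hat\theta(c),\hat\lambda(c))\in B_r(c)$ for every $c\in\mathcal N(c^\dagger)$, on a single event of probability approaching one.

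\textbf{Step 3: reading off the constraint and the sign of the multiplier.} The second block of $\Phi_n=0$ is exactly $h(\hat\theta(c))=c$. Positivity holds because $\hat\lambda(c)\ge\lambda^*(c)-r\ge\underline\lambda>0$.

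\textbf{Step 4: differentiability of the profile.} On the same event, $\Phi_n$ is $C^1$ in $(\theta,\lambda,c)$, since $c$ enters linearly with $\partial_c\Phi_n=(0,-1)'$, and $J_{\Phi_n}$ is nonsingular at $(\hat\theta(c),\hat\lambda(c),c)$. The classical implicit function theorem therefore gives a local $C^1$ solution branch through each point. By the uniqueness in Step 2, this branch coincides with $c\mapsto(\hat\theta(c),\hat\lambda(c))$. Hence the map is $C^1$ on $\mathcal N(c^\dagger)$, with
\[
\frac{d}{dc}(\hat\theta,\hat\lambda)=J_{\Phi_n}^{-1}(0,1)'.
\]
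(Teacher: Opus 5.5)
Your argument is correct in substance and follows the same skeleton as the paper's proof. That skeleton is: population profile from Lemma~\ref{lem:population_profile}, uniform closeness of $\Phi_n$ to $\Phi$, nonsingular $J_{\Phi_n}$, $h(\hat\theta(c))=c$ from the second block, and $\hat\lambda(c)>0$ by proximity to $\lambda^*(c)$. You handle the key step with a different and more rigorous device. The paper simply ``applies the implicit function theorem to $\Phi_n$ pointwise in $c$'' and then asserts that $(\hat\theta(c),\hat\lambda(c))$ converges locally to $(\theta^*(c),\lambda^*(c))$. Since the implicit function theorem needs a base point where $\Phi_n$ already vanishes, that appeal does not by itself produce a sample zero. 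Your simplified-Newton contraction (or degree) argument supplies exactly this missing persistence-of-a-nondegenerate-zero step. It also works on a single event, uniformly in $c\in\mathcal N(c^\dagger)$, in balls of fixed radius. That gives $\hat\lambda(c)>0$ directly, and, by rerunning it with arbitrarily small radius and using uniqueness in $B_r(c)$, the uniform consistency $\sup_{c}\|(\hat\theta(c),\hat\lambda(c))-(\theta^*(c),\lambda^*(c))\|\xrightarrow{p}0$. The paper later uses this in the proof of Corollary~\ref{cor:theta_hat_opt_consistency} as though it were part of this theorem.

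Two points in your Step 2 need tightening.

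\textbf{First route.} It requires $\sup\|\nabla^2_{\theta\theta}\phi_n-\nabla^2_{\theta\theta}\phi\|\xrightarrow{p}0$ on the ball. This is not among the theorem's hypotheses, but Assumption~\ref{assump:S_CLT} supplies it at $c^\dagger\in\mathcal C_{\mathrm{active}}$, so cite it there. Twice continuous differentiability w.p.a.1 gives no modulus of continuity uniform in $n$. A bound on $\|J_{\Phi_n}^{-1}\|$ is unnecessary, because $T_c$ uses the population inverse.

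\textbf{Second route.} The uniqueness step fails as written. A nonsingular Jacobian on a convex set does not imply injectivity in dimension $p+1\ge 2$. Keeping $J_{\Phi_n}$ near a fixed matrix by shrinking $r$ again needs an $n$-uniform modulus; if $r$ shrinks with $n$, the boundary bound $r/(2M)$ need not dominate $\sup\|\Phi_n-\Phi\|$. The degree argument repairs itself, though:
\begin{enumerate}
\item $\det J_{\Phi_n}$ is continuous and nonvanishing on the connected ball, hence of constant sign $s$.
\item So all zeros are regular with index $s$, and $\deg(\Phi_n(\cdot;c),B_r(c),0)=Ns$, where $N$ is the number of zeros.
\item Homotopy invariance gives degree $\operatorname{sign}\det J_\Phi=\pm1$ at $(\theta^*(c),\lambda^*(c))$, hence $N=1$.
\end{enumerate}
With this fix, your second route proves the theorem under Assumptions~\ref{assump:profile_regularity} and \ref{assump:sample_profile} alone, with no Hessian convergence needed.

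\textbf{Step 4.} Make sure the zero lies in the open ball. This is automatic in the degree route; in the contraction route, use a slightly smaller self-map radius. Then the local implicit-function branch stays in $B_r(c')$ and is identified with $(\hat\theta(c'),\hat\lambda(c'))$ by uniqueness.
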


\noindent \textbf{Proof.} See Appendix \ref{subsec:pf_thm-sample-profile}.

\begin{corollary}[Local validity of profiling with respect to \(c\)]
\label{cor:profiling_valid}
Under Assumptions~\ref{assump:profile_regularity} and \ref{assump:sample_profile}, profiling with respect to the tolerance parameter is locally well defined on the active region. In particular, for all \(c\) in a neighborhood of \(c^{\dagger}\), the constrained estimator can be written as
$
(\hat\theta(c),\hat\lambda(c)),
$
where both coordinates are uniquely determined and vary smoothly with \(c\). Consequently, optimization of any smooth criterion over \(c\), including the analytical Stein-type proxy \(\widehat R_{\mathrm{lin}}(c)\), is locally well posed in that region.
\end{corollary}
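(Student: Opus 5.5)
The corollary is essentially a repackaging of Theorem~\ref{thm:sample_profile}, so the plan is to read its conclusions off that theorem and then add one elementary argument about optimization over \(c\).

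\emph{Step 1 (existence, uniqueness, smoothness).} Invoke Theorem~\ref{thm:sample_profile}. It supplies a neighborhood \(\mathcal N(c^{\dagger})\) and an event \(E_n\) with \(\Pr(E_n)\to 1\). On \(E_n\), for every \(c\in\mathcal N(c^{\dagger})\), the sample KKT system \(\Phi_n(\theta,\lambda;c)=0\) has a unique local solution \((\hat\theta(c),\hat\lambda(c))\) with \(h(\hat\theta(c))=c\) and \(\hat\lambda(c)>0\). The map \(c\mapsto(\hat\theta(c),\hat\lambda(c))\) is \(C^1\) on \(\mathcal N(c^{\dagger})\).

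\emph{Step 2 (identification with the constrained estimator).} I need that this KKT solution is the constrained estimator of \eqref{eq:general_problem}, at least locally. Since \(\hat\lambda(c)>0\), the constraint binds. The KKT equations are therefore the equality-form first-order conditions, so any local minimizer of \eqref{eq:general_problem} near \(\theta^*(c^{\dagger})\) satisfies \(\Phi_n=0\). By uniqueness in Step 1, that minimizer must equal \(\hat\theta(c)\). It follows that the estimator is well defined as the pair \((\hat\theta(c),\hat\lambda(c))\), and both coordinates are uniquely determined and smooth in \(c\). I would add a remark that Lemma~\ref{lem:population_profile} gives the population analogue \((\theta^*(c),\lambda^*(c))\) on the same neighborhood, after shrinking it if needed. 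The sample map stays near the population one, so the active-region label is the same for both.

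\emph{Step 3 (well-posedness of optimizing over \(c\)).} Let \(R(c)=\widetilde R(c,\hat\theta(c),\hat\lambda(c))\) be any smooth criterion of \(c\) and of the profiled estimator, including \(\widehat R_{\mathrm{lin}}(c)\). By the chain rule, \(R\) is \(C^1\) on \(\mathcal N(c^{\dagger})\) on the event \(E_n\). Restrict to a compact interval \([c^{\dagger}-\delta,c^{\dagger}+\delta]\subset\mathcal N(c^{\dagger})\). Continuity then gives the existence of a minimizer by Weierstrass. Any interior minimizer satisfies the first-order condition \(R'(c)=0\), and the derivative is computable by implicit differentiation:
\[
\frac{d}{dc}\begin{pmatrix}\hat\theta(c)\\ \hat\lambda(c)\end{pmatrix}=J_{\Phi_n}^{-1}\begin{pmatrix}0\\ 1\end{pmatrix}.
\]
This is exactly what "locally well posed" requires.

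\emph{Main obstacle.} The one nontrivial point is Step 2: ensuring the global minimizer of \eqref{eq:general_problem} coincides with the local KKT solution, rather than with another solution of \(\Phi_n=0\) outside the neighborhood. I would handle this by interpreting the corollary locally, as "the estimator near \(\theta^*(c)\)." If a global statement is desired, I would add a consistency argument: uniform convergence of \(\Phi_n\) (Assumption~\ref{assump:sample_profile}(3)) together with local uniqueness of the population solution places \(\hat\theta(c)\) in the neighborhood with probability approaching one.
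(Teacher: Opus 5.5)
Your proposal is correct and takes the same route as the paper. The paper's proof reads existence, uniqueness, and continuous differentiability of $c\mapsto(\hat\theta(c),\hat\lambda(c))$ off Theorem~\ref{thm:sample_profile}, and concludes that the profiled criterion $c\mapsto\widehat R_{\mathrm{lin}}(c)$ is locally well behaved; these are your Steps 1 and 3.

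Your Step 2 goes beyond the paper, which simply takes the binding KKT characterization of the estimator on the active region as given, and as written its justification is slightly circular: $\hat\lambda(c)>0$ is a property of the KKT root, not of an arbitrary nearby minimizer. To place such a minimizer on $\{h=c\}$, use instead that $\nabla_\theta\phi(\theta^*(c^{\dagger}))=-\lambda^*(c^{\dagger})a(c^{\dagger})\neq 0$ (here $a(c^{\dagger})\neq 0$ is forced by nonsingularity of $J_{\Phi}(c^{\dagger})$), together with uniform convergence of $\nabla_\theta\phi_n$, to rule out interior critical points.
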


\noindent \textbf{Proof.} See Appendix \ref{subsec:pf_cor-profiling-valid}.

\subsection{Inner asymptotic theory for fixed \(c\)}

We now treat \(c\) as fixed and derive the inner asymptotic theory for the constrained estimator, the global shadow price, and the ISP vector. The theory is stated without imposing a global positive-multiplier assumption. Instead, whenever a result requires a binding quadratic restriction, this is stated explicitly by restricting attention to \(c\in\mathcal C_{\mathrm{active}}\).

\begin{assumption}[Slater's condition / constraint qualification]
\label{assump:slater}
For each $c \in \mathcal C$, the feasible set
$
\Theta(c) := \{\theta \in \Theta : h(\theta) \le c\}
$
is nonempty, and there exists $\bar{\theta}(c) \in \Theta$ such that
$
h(\bar{\theta}(c)) < c.
$
\end{assumption}

\begin{assumption}\label{assump:smooth_cvx}
\textbf{(Smoothness and convexity)}
For each \(c\in\mathcal C\), \(\phi(z;\theta)\) is twice continuously differentiable in \(\theta\), and the population objective
$
\phi^0\equiv \mathbb E[\phi(Z_i;\theta^{0})]
$
is strictly convex in \(\theta\) on a neighborhood of \(\theta^*(c)\).
\end{assumption}

\begin{assumption}\label{assump:regular}
\textbf{(Restriction regularity)}
For each \(c\in\mathcal C\), \(g(\theta)\) is continuously differentiable on a neighborhood of \(\theta^*(c)\), and the Jacobian
$
G(\theta)\equiv \nabla_\theta g(\theta)
$
has full row rank \(q\) at \(\theta^*(c)\).
\end{assumption}

\begin{definition}[Active and inactive regions]
\label{def:active_region}
Define
$
\mathcal C_{\mathrm{active}}
:=
\{c\in\mathcal C:\lambda^*(c)>0\},
\;
\mathcal C_{\mathrm{inactive}}
:=
\{c\in\mathcal C:\lambda^*(c)=0\}.
$
\end{definition}

\begin{assumption}\label{assump:S_CLT}
\textbf{(Uniform laws of large numbers and central limit theorem)}
Let \(\psi_i(\theta)\equiv \nabla_\theta \phi(Z_i;\theta)\) denote the score contribution and
$
\psi_n(\theta)\equiv n^{-1}\sum_{i=1}^n \psi_i(\theta)
$
the sample average score with
$
\psi_0\equiv \mathbb E[\psi_i(\theta^{0})].
$ 

For each \(c\in\mathcal C_{\mathrm{active}}\), assume that
$
\nabla_\theta \phi(\theta^*(c))
+
\lambda^*(c)\nabla_\theta h(\theta^*(c))=0,
$
and \\
$
\sqrt{n}\,\psi_n(\theta^*(c)) \xrightarrow{d} N(0,\Sigma_{\psi}(c)),
\;
\Sigma_{\psi}(c)\equiv \mathrm{Var}\!\big(\psi_i(\theta^*(c))\big)
$, and there exists a neighborhood \(\mathcal N(c)\) of \(\theta^*(c)\) where uniform laws of large numbers hold for \(\phi_n(\theta)\) and its first two derivatives:
$
\sup_{\theta\in\mathcal N(c)}
\big|\phi_n(\theta)-\phi(\theta)\big| \xrightarrow{p} 0, \;
\sup_{\theta\in\mathcal N(c)}
\big\|\nabla_\theta \phi_n(\theta)-\nabla_\theta \phi(\theta)\big\| \xrightarrow{p} 0, \;
\sup_{\theta\in\mathcal N(c)}
\big\|\nabla_{\theta\theta}^2 \phi_n(\theta)-\nabla_{\theta\theta}^2 \phi(\theta)\big\| \xrightarrow{p}0.$
\end{assumption}

Assumption~\ref{assump:slater} ensures that the feasible set has a nonempty interior and that the KKT conditions characterize the constrained optimum. Assumption~\ref{assump:smooth_cvx} ensures existence and uniqueness of the population constrained minimizer and justifies a second-order expansion of the Lagrangian. Assumption~\ref{assump:regular} guarantees local regularity of the constraint surface and identification of the KKT system. Assumption~\ref{assump:slater} ensures strong duality and validity of the KKT characterization. Definition~\ref{def:active_region} distinguishes the active and inactive regimes. On the active region, the shadow-price system is nondegenerate and admits a smooth first-order expansion. On the inactive region, the quadratic restriction is slack and the constrained estimator reduces locally to the unconstrained regime. Accordingly, the asymptotic results below for \((\hat\theta(c),\hat\lambda(c))\) and the ISP vector are stated for fixed \(c\in\mathcal C_{\mathrm{active}}\). Assumption~\ref{assump:S_CLT} provides the stochastic equicontinuity and central limit theorem required for a first-order expansion of the score around \(\theta^*(c)\). The uniform laws of large numbers for \(\phi_n(\theta)\) and its derivatives guarantee local stability of the objective and its curvature, allowing a valid Taylor expansion in a neighborhood of \(\theta^*(c)\).

For each fixed \(c\in\mathcal C_{\mathrm{active}}\), define
$
a(c)\equiv \nabla_\theta h(\theta^*(c))
=
2\,G(\theta^*(c))^{\prime}\Sigma^{-1}g(\theta^*(c)),
$
and the population Hessian of the Lagrangian,
$
A(c)\equiv \nabla^2_{\theta\theta}\mathcal L(\theta^*(c),\lambda^*(c);c)
=
H(\theta^*(c))+\lambda^*(c)\nabla^2_{\theta\theta}h(\theta^*(c)),
$ with
$H(\theta^*(c))\equiv \nabla^2_{\theta\theta}\phi(\theta^*(c))$.
Here \(a(c)\) is the gradient of the binding restriction and therefore the normal vector to the constraint surface \(\{\theta:h(\theta)=c\}\) at \(\theta^*(c)\). As a normal vector, \(a(c)\) identifies the direction in parameter space along which movement is locally forbidden when the constraint binds: moving in the direction of \(a(c)\) or \(-a(c)\) changes the value of \(h(\theta)\) to first order and hence relaxes or tightens the restriction. In contrast, perturbations orthogonal to \(a(c)\) lie in the tangent space of the constraint surface and preserve feasibility to first order. 

The matrix \(A(c)\) is the curvature matrix of the population Lagrangian with respect to \(\theta\), evaluated at the KKT solution. Since \(\theta^*(c)\) is assumed to be a local minimizer of the constrained loss, standard second-order sufficient conditions imply that \(A(c)\) is positive definite, ensuring local uniqueness and identification of the solution. Under these conditions, \(A(c)^{-1}\) exists and is positive definite, and therefore \(a(c)^{\prime}A(c)^{-1}a(c)>0\) for any nonzero \(a(c)\).

\begin{theorem}[Asymptotic normality under a binding constraint with optimized tolerance]
\label{thm:asymp_theta_active}
Under Assumptions \ref{assump:slater}, \ref{assump:smooth_cvx}, \ref{assump:regular},  and \ref{assump:S_CLT}, for each \(c\in\mathcal C_{\mathrm{active}}\),
\begin{equation}\label{eq:joint_CLT}
\sqrt{n}
\begin{pmatrix}
\hat{\theta}(c)-\theta^*(c)\\
\hat{\lambda}(c)-\lambda^*(c)
\end{pmatrix}
\xrightarrow{d}
\mathcal{N}\!\left(
0,\begin{pmatrix}
V_1(c) & V_2(c)\\
V_2(c)^{\prime} & V_3(c)
\end{pmatrix}
\right), \quad \text{where}
\end{equation} \vspace{-10mm}
\begin{align*}
V_1(c)
&\equiv \mathrm{Var}\!\big(\sqrt{n}(\hat\theta(c)-\theta^*(c))\big)
      = B_\theta(c)\,\Sigma_{\psi}(c)\,B_\theta(c)^{\prime}
      = M(c)\,\Sigma_{\psi}(c)\,M(c)^{\prime},\\
V_2(c)
&\equiv \mathrm{Cov}\!\big(\sqrt{n}(\hat\theta(c)-\theta^*(c)),\,\sqrt{n}(\hat\lambda(c)-\lambda^*(c))\big)
      = B_\theta(c)\,\Sigma_{\psi}(c)\,B_\lambda(c)^{\prime} \\
&= -\,M(c)\,\Sigma_{\psi}(c)\,A(c)^{-1}a(c)\,(a(c)^{\prime}A(c)^{-1}a(c))^{-1},\\
V_3(c)
&\equiv \mathrm{Var}\!\big(\sqrt{n}(\hat\lambda(c)-\lambda^*(c))\big)
      = B_\lambda(c)\,\Sigma_{\psi}(c)\,B_\lambda(c)^{\prime} \\
&= (a(c)^{\prime}A(c)^{-1}a(c))^{-2}\,a(c)^{\prime}A(c)^{-1}\Sigma_{\psi}(c) A(c)^{-1}a(c),
\end{align*}
with
$
B_\theta(c)\equiv -M(c),
\;
M(c)\equiv A(c)^{-1}-A(c)^{-1}a(c)(a(c)^{\prime}A(c)^{-1}a(c))^{-1}a(c)^{\prime}A(c)^{-1},
$
and
$
B_\lambda(c)\equiv -\,(a(c)^{\prime}A(c)^{-1}a(c))^{-1}a(c)^{\prime}A(c)^{-1}.
$
\end{theorem}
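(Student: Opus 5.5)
We argue by a standard Z-estimator linearization of the sample KKT system around the population solution. Fix \(c\in\mathcal C_{\mathrm{active}}\). By Theorem~\ref{thm:sample_profile}, with probability approaching one the sample system \(\Phi_n(\theta,\lambda;c)=0\) has a unique local solution \((\hat\theta(c),\hat\lambda(c))\) with \(h(\hat\theta(c))=c\) and \(\hat\lambda(c)>0\). The first step is consistency, \((\hat\theta(c),\hat\lambda(c))\xrightarrow{p}(\theta^*(c),\lambda^*(c))\). For this we use the uniform convergence of \(\Phi_n\) to \(\Phi\) in Assumption~\ref{assump:S_CLT}, the nonsingularity of \(J_\Phi\), and local uniqueness of the population zero from Lemma~\ref{lem:population_profile}. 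The argument is the usual one: if \(\|\Phi\|\) is bounded away from zero outside any ball around the zero, then sample zeros must eventually lie inside that ball. Strict convexity (Assumption~\ref{assump:smooth_cvx}) together with Slater's condition (Assumption~\ref{assump:slater}) ensures that this local solution is the constrained minimizer rather than a spurious stationary point.

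Next we take a mean-value expansion of each row of \(\Phi_n(\hat\theta,\hat\lambda;c)=0\) around \((\theta^*,\lambda^*)\):
\[
0=\Phi_n(\theta^*,\lambda^*;c)+\bar J_n\begin{pmatrix}\hat\theta-\theta^*\\ \hat\lambda-\lambda^*\end{pmatrix}.
\]
Here \(\bar J_n\) is the sample KKT Jacobian evaluated at intermediate points, row by row. By the population KKT identity \(\nabla\phi(\theta^*)+\lambda^*\nabla h(\theta^*)=0\) and the binding condition \(h(\theta^*)=c\),
\[
\Phi_n(\theta^*,\lambda^*;c)=\begin{pmatrix}\psi_n(\theta^*)-\nabla\phi(\theta^*)\\ 0\end{pmatrix}.
\]
We read the centering of \(\psi_n\) in Assumption~\ref{assump:S_CLT} as \(\sqrt n(\psi_n(\theta^*)-\nabla\phi(\theta^*))\xrightarrow{d}N(0,\Sigma_\psi(c))\). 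The uniform law of large numbers for the Hessian, continuity of \(\nabla^2 h\), and consistency give \(\bar J_n\xrightarrow{p}J_\Phi(c)\), which is invertible. Hence
\[
\sqrt n\begin{pmatrix}\hat\theta-\theta^*\\ \hat\lambda-\lambda^*\end{pmatrix}=-J_\Phi(c)^{-1}\begin{pmatrix}\sqrt n\,\psi_n\\0\end{pmatrix}+o_p(1).
\]

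It remains to compute the first block column of \(J_\Phi^{-1}\). We use the bordered-matrix inverse with Schur complement \(s=-a'A^{-1}a\), which requires \(A(c)\) to be invertible; positive definiteness follows from the second-order sufficient conditions discussed before the theorem. The first block column is
\[
\begin{pmatrix}M(c)\\ (a'A^{-1}a)^{-1}a'A^{-1}\end{pmatrix},
\]
with \(M\) as defined in the statement. This gives \(B_\theta=-M\) and \(B_\lambda=-(a'A^{-1}a)^{-1}a'A^{-1}\). The joint limit is then \((B_\theta',B_\lambda')'\,N(0,\Sigma_\psi)\) by Slutsky's theorem and the continuous mapping theorem, and the formulas for \(V_1\), \(V_2\), \(V_3\) follow by direct multiplication. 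One step in that multiplication is \(M\Sigma_\psi B_\lambda'\), whose sign is fixed by \(B_\theta=-M\); note that \(Ma=0\).

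The main obstacle is the consistency-plus-expansion step. It requires that the intermediate points in \(\bar J_n\) stay inside the neighborhood where the uniform laws of large numbers hold, and that the unique local root from Theorem~\ref{thm:sample_profile} is the global constrained minimizer \(\hat\theta(c)\). We would handle this first by a standard argmin consistency argument for the constrained M-estimator, using uniform convergence of \(\phi_n\), compactness of \(\Theta(c)\), and strict convexity. Once consistency is in hand, the root and the minimizer coincide with probability approaching one.
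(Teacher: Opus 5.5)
Your route is the paper's: linearize the sample KKT system at $(\theta^*(c),\lambda^*(c))$, use population stationarity to turn $\nabla_\theta\mathcal L_n(\theta^*,\lambda^*;c)$ into the centered score, solve the bordered system, and apply the CLT with Slutsky. The paper's ``redefining \ldots\ as the centered score'' is exactly your reading of Assumption~\ref{assump:S_CLT}. The paper solves the system by elimination (solve for $\Delta_\theta$, premultiply by $a'$, back-substitute) rather than through the Schur-complement inverse, but that is the same algebra.

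Your mean-value expansion, with the sample Jacobian evaluated at intermediate points, is slightly cleaner because it needs only consistency. The paper's Step~2 instead expands with the population Hessian and simply asserts $o_p(n^{-1/2})$ remainders, which implicitly presupposes $\sqrt n$-consistency. Your consistency step does borrow Assumptions~\ref{assump:profile_regularity}--\ref{assump:sample_profile} (through Lemma~\ref{lem:population_profile} and Theorem~\ref{thm:sample_profile}), compactness, and a global ULLN. None of these are in the theorem's hypothesis list, but the paper's proof is no more self-contained on this point.

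The one thing to fix is the sign of $V_2$, which your remark about $Ma=0$ does not settle. With $B_\theta=-M$ and, by symmetry of $A(c)$, $B_\lambda'=-A^{-1}a\,(a'A^{-1}a)^{-1}$, direct multiplication gives
\[
B_\theta\Sigma_\psi B_\lambda'=+\,M\Sigma_\psi A^{-1}a\,(a'A^{-1}a)^{-1}.
\]
So the last expression for $V_2$ in the statement has the wrong sign, and Step~5 of the paper's proof repeats it without computation. The two expressions the statement equates are negatives of each other.

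The identity $Ma=0$ does not reconcile them. It kills $M\Sigma_\psi A^{-1}a$ only when $\Sigma_\psi(c)\propto A(c)$, which is not assumed and generally fails when the constraint binds: for ridge least squares, for example, $A=H+2\lambda^* I$ while $\Sigma_\psi$ picks up a misspecification term.

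A two-dimensional check makes this concrete. Take $A=I$ and $a=e_1$, and write $\tilde\psi_n=\psi_n(\theta^*)-\nabla_\theta\phi(\theta^*)$. Your representation gives $\sqrt n\,\Delta_\lambda\approx-\sqrt n\,\tilde\psi_{n,1}$ and $\sqrt n\,\Delta_{\theta,2}\approx-\sqrt n\,\tilde\psi_{n,2}$. The covariance is therefore $(0,\Sigma_{\psi,21})'$, whereas the printed formula gives $(0,-\Sigma_{\psi,21})'$.

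You should state explicitly that the correct covariance block is $M\Sigma_\psi A^{-1}a\,(a'A^{-1}a)^{-1}=B_\theta\Sigma_\psi B_\lambda'$, rather than claiming all three displayed formulas follow as written. $V_1$ and $V_3$ are quadratic in the $B$'s and come out as stated.
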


\noindent \textbf{Proof.} See Appendix \ref{subsec:derivation_asymp}.

\begin{theorem}[Asymptotic normality under a non-binding constraint with optimized tolerance]
\label{thm:asymp_theta_inactive}
Suppose Assumptions \ref{assump:slater}, \ref{assump:smooth_cvx}, \ref{assump:regular}, and \ref{assump:S_CLT} hold. Let
$
\hat{\theta} := \hat{\theta}(\hat c),
\;
\hat{\lambda} := \hat{\lambda}(\hat c),
$
where $\hat c$ is the data-driven tolerance parameter and $c^{\dagger}$ denotes its probability limit, the pseudo-true optimal tolerance. Let $(\theta^*(c),\lambda^*(c))$ denote the population KKT solution for a given $c$.

If the population constraint is strictly slack at $c^{\dagger}$: $
h(\theta^*(c^{\dagger})) < c^{\dagger}
\;\text{and}\;
\lambda^*(c^{\dagger}) = 0,
$
then the constraint is asymptotically non-binding at the optimal tolerance. In this case,
$
\sqrt{n}\big(\hat{\theta}-\theta^*(c^{\dagger})\big)
\xrightarrow{d}
N\!\Bigl(0,\ H(\theta^*(c^{\dagger}))^{-1}\Sigma_{\psi}H(\theta^*(c^{\dagger}))^{-1}\Bigr),
$
where $H(\theta)=\nabla_{\theta\theta}^2\phi(\theta)$ and $\Sigma_{\psi}=\mathrm{Var}(\psi(Z_i;\theta^*(c^{\dagger})))$. Moreover,
$
\hat{\lambda}\xrightarrow{p}0.
$
\end{theorem}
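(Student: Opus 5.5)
The plan is to show that, with probability approaching one, the constraint $h(\theta)\le \hat c$ is slack at the sample solution. Then $\hat\theta(\hat c)$ coincides with the unconstrained local M-estimator $\tilde\theta:=\arg\min_{\theta\in\mathcal N}\phi_n(\theta)$ on a neighborhood of $\theta^*(c^\dagger)$, and $\hat\lambda(\hat c)=0$. Once this is in place, the stated limit is the standard M-estimation sandwich result.

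\textbf{Step 1: the population solution is an interior minimizer.} Since $\lambda^*(c^\dagger)=0$, the population KKT condition reduces to $\nabla_\theta\phi(\theta^*(c^\dagger))=0$. By Assumption~\ref{assump:smooth_cvx}, $\phi$ is strictly convex near $\theta^*(c^\dagger)$. Hence $\theta^*(c^\dagger)$ is the unique local unconstrained minimizer, and $H:=H(\theta^*(c^\dagger))$ is positive definite. I will take nonsingularity of $H$ as implied by, or as part of, the strict convexity assumption.

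\textbf{Step 2: the unconstrained estimator.} Consistency $\tilde\theta\xrightarrow{p}\theta^*(c^\dagger)$ follows from the uniform LLN in Assumption~\ref{assump:S_CLT} together with the identification in Step 1, by the standard argmin argument on a compact neighborhood. A mean-value expansion of $\psi_n(\tilde\theta)=0$ gives
\[
\sqrt n(\tilde\theta-\theta^*(c^\dagger))=-\bar H_n^{-1}\sqrt n\,\psi_n(\theta^*(c^\dagger)).
\]
The Hessian at the intermediate point satisfies $\bar H_n\xrightarrow{p}H$ by the uniform Hessian LLN. The CLT in Assumption~\ref{assump:S_CLT}, applied with $\lambda^*=0$ so that the score is centered, then gives $N(0,H^{-1}\Sigma_\psi H^{-1})$ by Slutsky's lemma.

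\textbf{Step 3: the key step, showing the sample constraint is slack.} Let $\delta:=c^\dagger-h(\theta^*(c^\dagger))>0$. By continuity of $h$ (Assumption~\ref{assump:regular}) and Step 2, $h(\tilde\theta)\xrightarrow{p}h(\theta^*(c^\dagger))$. Since $\hat c\xrightarrow{p}c^\dagger$, we have $P\big(h(\tilde\theta)<\hat c-\delta/2\big)\to1$. On this event $\tilde\theta$ is feasible for the problem at $\hat c$. It is also an interior local minimizer, so it satisfies the KKT system with multiplier zero. Under convexity of $\phi_n$ near $\theta^*$ and the constraint qualification (Assumption~\ref{assump:slater}), the KKT point is the local constrained solution, so $\hat\theta(\hat c)=\tilde\theta$ and $\hat\lambda(\hat c)=0$.

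\textbf{Anticipated difficulty in Step 3.} The main obstacle is that $\hat c$ is random and data-dependent. One cannot simply invoke a fixed-$c$ result; the argument must be uniform in $c$. The device above avoids this: $\tilde\theta$ does not depend on $c$, and its feasibility and interiority hold simultaneously for all $c$ in a neighborhood $[c^\dagger-\delta/2,\,c^\dagger+\delta/2]$. Therefore the identity $\hat\theta(c)=\tilde\theta$ holds uniformly over that neighborhood with probability approaching one. This event contains $\{|\hat c-c^\dagger|<\delta/2\}$, whose probability tends to one.

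A secondary point is local uniqueness of the minimizer, which is needed to identify $\hat\theta(\hat c)$ with $\tilde\theta$. For this I would use positive definiteness of $\nabla^2\phi_n$ on the neighborhood with probability approaching one, which follows from the uniform Hessian LLN.

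\textbf{Conclusion.} Since $\hat\theta-\tilde\theta=0$ and $\hat\lambda=0$ with probability approaching one, both differences are $o_p(n^{-1/2})$. Combining Steps 2 and 3 gives the stated asymptotic normality for $\hat\theta$ and $\hat\lambda\xrightarrow{p}0$ (indeed $\hat\lambda=0$ with probability approaching one).
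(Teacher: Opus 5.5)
Your proposal is correct and follows essentially the same route as the paper: show that the unconstrained estimator $\tilde\theta$ is feasible at $\hat c$ with probability approaching one, so that $\hat\theta=\tilde\theta$ and $\hat\lambda=0$ with probability approaching one (complementary slackness), and then apply the standard Taylor-expansion and CLT sandwich argument at $\theta^*(c^{\dagger})$. Your additions only tighten steps the paper states briefly: localizing $\tilde\theta$ to a neighborhood, using the $\delta/2$ margin so the identity holds uniformly over $c$ near $c^{\dagger}$, and getting local uniqueness from the Hessian ULLN.
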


\noindent \textbf{Proof.} See Appendix \ref{subsec:pf_asymp_inactive}.

The matrix \(M(c)\) can be interpreted as a projected inverse Hessian. It modifies the unconstrained influence matrix \(A(c)^{-1}\) by removing variation in the direction that would locally violate the binding constraint. This direction is identified by the constraint gradient \(a(c)=\nabla_\theta h(\theta^*(c))\), which is normal to the constraint surface. Variation along \(a(c)\) would move the estimator off the feasible set and is therefore ruled out by the binding restriction. The projection implemented by \(M(c)\) eliminates this forbidden component, ensuring that the asymptotic variation of \(\hat{\theta}(c)\) lies entirely in directions that preserve feasibility to first order.

\subsection{Asymptotic distribution of ISPs}

For each \(c\in\mathcal C_{\mathrm{active}}\), define the population ISP:
$ISP_j^*(c)
:=
2\lambda^*(c)\,\mathrm{sign}\!\big(g_j(\theta^*(c))\big)\,
\big[\Sigma^{-1}g(\theta^*(c))\big]_j,
$
and the sample ISP estimator
$
\widehat{ISP}_j(c)
:=
2\hat\lambda(c)\,\mathrm{sign}\!\big(g_j(\hat\theta(c))\big)\,
\big[\Sigma^{-1}g(\hat\theta(c))\big]_j.
$
Stack
$
\widehat{ISP}(c)\equiv(\widehat{ISP}_1(c),\dots,\widehat{ISP}_q(c))^{\prime},
\;
ISP^*(c)\equiv(ISP_1^*(c),\dots,ISP_q^*(c))^{\prime}.
$

\begin{assumption}[Regularity for the sign map]
\label{assump:sign_map}
For the set of restriction indices \(\mathcal J \subset \{1,\dots,q\}\) used to construct and analyze the ISP, and for each fixed \(c\in\mathcal C_{\mathrm{active}}\),
\begin{equation*}
\min_{j \in \mathcal J} |g_j(\theta^*(c))| \ge \kappa(c), \quad \text{for some constant \(\kappa(c)>0\) independent of \(n\).}
\end{equation*}
\end{assumption}

Assumption~\ref{assump:sign_map} guarantees that the sign of each relevant restriction is locally stable in a neighborhood of the population target \(\theta^*(c)\). Hence, with probability approaching one, the sign operator is locally constant and does not affect first-order delta-method expansions of the ISP. If \(g_j(\theta^*(c))=0\) for some \(j\), the sign map is non-differentiable at \(\theta^*(c)\), and stochastic sign switching may occur in \(n^{-1/2}\) neighborhoods, leading to nonregular asymptotics that require separate analysis. Note that Assumption~\ref{assump:sign_map} is imposed at $\theta^*(c)$, rather than at $\theta^0$. Since ISP is evaluated at a tolerance level $c$ that optimizes the bias--variance tradeoff and is therefore not chosen to recover $\theta^0$, a restriction need not hold exactly at the pseudo-true constrained target $\theta^*(c)$, even if it holds exactly at the true data-generating parameter $\theta^0$.

\begin{theorem}[Joint asymptotic distribution of the ISPs]
\label{thm:joint_ISP}
Under Assumptions \ref{assump:slater}, \ref{assump:smooth_cvx}, \ref{assump:regular},  \ref{assump:S_CLT}, and \ref{assump:sign_map}, for each fixed \(c\in\mathcal C_{\mathrm{active}}\),
\begin{equation}\label{eq:ISP_joint_CLT}
\sqrt{n}\big(\widehat{ISP}(c)-ISP^*(c)\big)
\xrightarrow{d}
\mathcal{N}\!\big(0,\Sigma_{ISP}(c)\big),
\end{equation}
where
$
\Sigma_{ISP}(c)
=
J_\theta(c) V_1(c) J_\theta(c)^{\prime}
+
J_\theta(c) V_2(c) J_\lambda(c)^{\prime}
+
J_\lambda(c) V_2(c)^{\prime} J_\theta(c)^{\prime}
+
J_\lambda(c) V_3(c) J_\lambda(c)^{\prime},
$
with
$
J_\theta(c)
\equiv
\left.\nabla_\theta ISP(\theta,\lambda;c)\right|_{(\theta^*(c),\lambda^*(c))}
=
2\lambda^*(c)\,S(c)\,\Sigma^{-1}G(\theta^*(c))
\in\mathbb{R}^{q\times p},\;\\
J_\lambda(c)
\equiv
\left.\nabla_\lambda ISP(\theta,\lambda;c)\right|_{(\theta^*(c),\lambda^*(c))}
=
2\,S(c)\,\Sigma^{-1}g(\theta^*(c))
\in\mathbb{R}^{q\times 1},
$
and\\
$
S(c)\equiv \mathrm{diag}\!\big(\mathrm{sign}(g_1(\theta^*(c))),\dots,\mathrm{sign}(g_q(\theta^*(c)))\big).
$
\end{theorem}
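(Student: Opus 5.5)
The plan is a delta-method argument built on the joint CLT of Theorem~\ref{thm:asymp_theta_active}. Fix $c\in\mathcal C_{\mathrm{active}}$ and write the ISP vector as a map of $(\theta,\lambda)$:
\[
ISP(\theta,\lambda;c)=2\lambda\,S(\theta)\,\Sigma^{-1}g(\theta),
\qquad
S(\theta)=\mathrm{diag}\bigl(\mathrm{sign}(g_1(\theta)),\dots,\mathrm{sign}(g_q(\theta))\bigr).
\]
By definition, $\widehat{ISP}(c)=ISP(\hat\theta(c),\hat\lambda(c);c)$ and $ISP^*(c)=ISP(\theta^*(c),\lambda^*(c);c)$. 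The only non-smooth ingredient is the sign matrix $S(\theta)$. So the first step is to replace $S(\hat\theta(c))$ by the fixed matrix $S(c)$ with probability approaching one. After that, what remains is a smooth function of $(\hat\theta(c),\hat\lambda(c))$, to which the delta method applies.

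\textbf{Step 1 (freezing the sign map).} Theorem~\ref{thm:asymp_theta_active} gives $\hat\theta(c)-\theta^*(c)=O_p(n^{-1/2})$, so $\hat\theta(c)\xrightarrow{p}\theta^*(c)$. Assumption~\ref{assump:regular} makes $g$ continuous near $\theta^*(c)$, hence $g_j(\hat\theta(c))\xrightarrow{p}g_j(\theta^*(c))$ for each $j$. Assumption~\ref{assump:sign_map} gives $|g_j(\theta^*(c))|\ge\kappa(c)>0$ for $j\in\mathcal J$. Therefore $P\bigl(\mathrm{sign}(g_j(\hat\theta(c)))=\mathrm{sign}(g_j(\theta^*(c)))\ \forall j\in\mathcal J\bigr)\to1$. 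On this event,
\[
\widehat{ISP}(c)=\widetilde{ISP}(\hat\theta(c),\hat\lambda(c)),
\qquad
\widetilde{ISP}(\theta,\lambda):=2\lambda\,S(c)\,\Sigma^{-1}g(\theta).
\]
Since $\sqrt n$ times a quantity that is $o_p(1)$-nonzero only on an event of vanishing probability is $o_p(1)$, we may work with $\widetilde{ISP}$. I read the theorem as concerning the coordinates in $\mathcal J$, or as implicitly assuming $\mathcal J=\{1,\dots,q\}$. For indices with $g_j(\theta^*(c))=0$ the argument breaks down, as the remark after Assumption~\ref{assump:sign_map} notes, so I would state this restriction explicitly.

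\textbf{Step 2 (differentiability and Jacobians).} The map $\widetilde{ISP}$ is continuously differentiable near $(\theta^*(c),\lambda^*(c))$ because $g$ is $C^1$ with Jacobian $G$ by Assumption~\ref{assump:regular}. Differentiating gives
\[
\nabla_\theta\widetilde{ISP}=2\lambda S(c)\Sigma^{-1}G(\theta),
\qquad
\nabla_\lambda\widetilde{ISP}=2S(c)\Sigma^{-1}g(\theta).
\]
Evaluated at the population point, these are exactly $J_\theta(c)$ and $J_\lambda(c)$. A mean-value (first-order Taylor) expansion then gives
\[
\sqrt n\bigl(\widehat{ISP}(c)-ISP^*(c)\bigr)
=\bigl[J_\theta(c)\ \ J_\lambda(c)\bigr]\,\sqrt n\begin{pmatrix}\hat\theta(c)-\theta^*(c)\\ \hat\lambda(c)-\lambda^*(c)\end{pmatrix}+o_p(1).
\]
The remainder is $o_p(1)$ because the Jacobians are continuous and the estimator is consistent at rate $\sqrt n$.

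\textbf{Step 3 (conclusion).} Apply the joint CLT \eqref{eq:joint_CLT} together with the continuous mapping theorem and Slutsky's lemma. The limit is normal with covariance
\[
[J_\theta\ J_\lambda]\begin{pmatrix}V_1&V_2\\V_2'&V_3\end{pmatrix}[J_\theta\ J_\lambda]',
\]
and expanding this product yields the four-term expression for $\Sigma_{ISP}(c)$.

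The main obstacle is Step 1. There I must make sure the sign stabilization holds jointly over all coordinates, which is immediate since there are finitely many, and that the event where the signs differ contributes only $o_p(1)$ after scaling by $\sqrt n$. Everything else is a routine delta-method computation.
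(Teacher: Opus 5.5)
Your proposal is correct and follows essentially the same route as the paper's proof. Both arguments freeze the sign matrix at $S(c)$ with probability approaching one via Assumption~\ref{assump:sign_map}, then apply the delta method to the smooth map $2\lambda S(c)\Sigma^{-1}g(\theta)$ using the joint CLT of Theorem~\ref{thm:asymp_theta_active}, and finally expand $J(c)V(c)J(c)^{\prime}$ into the four blocks. Your caveat that the argument needs $\mathcal J=\{1,\dots,q\}$ (or must be restricted to the coordinates in $\mathcal J$) is a point the paper's proof passes over silently.
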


\noindent \textbf{Proof.} See Appendix \ref{subsec:pf_ISP-joint}.

\begin{remark}[Binding-regime interpretation and implications for ranking]
The joint asymptotic distribution of the ISP vector reveals two distinct channels through which sampling uncertainty enters the ranking system under an active constraint. First, the presence of the common factor \(\lambda^*(c)>0\) in \(J_\theta(c)\) implies that variation in \(\hat\theta(c)\) affects all ISPs through the credibility-weighted Jacobian \(\Sigma^{-1}G(\theta^*(c))\) on a common global scale. Second, \(J_\lambda(c)\) propagates uncertainty in the global shadow price \(\hat\lambda(c)\) uniformly across restrictions in proportion to \(\Sigma^{-1}g(\theta^*(c))\). Because both channels operate within a single shadow-price system, relative differences across ISPs arise from restriction-specific features rather than heterogeneous scaling.
\end{remark}

\subsection{ISP ranking and cutoff}

\begin{lemma}[Uniform convergence of ISP]
\label{lem:ISP_uniform}
Suppose the number of restrictions \(q\) is fixed and Theorem~\ref{thm:joint_ISP} holds for a given \(c\in\mathcal C_{\mathrm{active}}\). Then
\begin{equation}
\max_{1 \le j \le q}
\left|
\widehat{ISP}_j(c)-ISP_j^*(c)
\right|
\;\xrightarrow{p}\; 0.
\end{equation}
\end{lemma}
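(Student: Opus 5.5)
The plan is to derive the uniform statement directly from the joint CLT in Theorem~\ref{thm:joint_ISP}, using that $q$ is fixed. The argument has three steps.

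\textbf{Step 1 (tightness).} Theorem~\ref{thm:joint_ISP} gives
\[
\sqrt{n}\big(\widehat{ISP}(c)-ISP^*(c)\big)\xrightarrow{d}\mathcal N\big(0,\Sigma_{ISP}(c)\big).
\]
By Prohorov's theorem, a weakly convergent sequence is uniformly tight. Hence $\sqrt{n}\big(\widehat{ISP}(c)-ISP^*(c)\big)=O_p(1)$ as a random vector in $\mathbb R^q$.

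\textbf{Step 2 (rate).} Dividing by $\sqrt n$ gives $\widehat{ISP}(c)-ISP^*(c)=O_p(n^{-1/2})=o_p(1)$ in $\mathbb R^q$. Concretely, fix $\varepsilon,\eta>0$ and choose $M$ with $\Pr(\|\sqrt n(\widehat{ISP}(c)-ISP^*(c))\|>M)<\eta$ for all large $n$. Once $M/\sqrt n<\varepsilon$, the event $\|\widehat{ISP}(c)-ISP^*(c)\|>\varepsilon$ has probability below $\eta$.

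\textbf{Step 3 (the max).} On $\mathbb R^q$ with $q$ fixed, $\max_{1\le j\le q}|x_j|=\|x\|_\infty\le\|x\|_2$. Therefore
\[
\max_j\big|\widehat{ISP}_j(c)-ISP^*_j(c)\big|\le\big\|\widehat{ISP}(c)-ISP^*(c)\big\|_2\xrightarrow{p}0.
\]
An alternative route avoids the norm bound. Each coordinate converges in probability, by applying the continuous mapping theorem to the $j$th projection. A union bound then gives
\[
\Pr\Big(\max_j|\cdot|>\varepsilon\Big)\le\sum_{j=1}^q\Pr(|\cdot_j|>\varepsilon)\to0,
\]
since the sum has finitely many terms.

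No step here is a genuine obstacle. The only point needing care is that finiteness of $q$ is what makes the union bound, or equivalently the equivalence of norms, harmless; with $q$ growing in $n$ one would need maximal inequalities instead. The substantive work is already contained in Theorem~\ref{thm:joint_ISP}, in particular the sign-stability Assumption~\ref{assump:sign_map}. This lemma is a direct corollary, giving uniform consistency at rate $n^{-1/2}$.
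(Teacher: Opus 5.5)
Your proposal is correct and is essentially the paper's proof: the paper takes coordinatewise consistency from Theorem~\ref{thm:joint_ISP} and concludes with a union bound over the finitely many $j$, which is exactly your alternative route. Your tightness argument (Steps 1--2) spells out the passage from the $\sqrt n$-CLT to $o_p(1)$ that the paper leaves implicit, and it gives the slightly stronger rate $O_p(n^{-1/2})$.
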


\noindent\textbf{Proof.} See Appendix~\ref{subsec:pf_ISP-uniform}.

To formalize ranking in terms of magnitudes, let \(r^*(c)\) denote the permutation that sorts population ISP magnitudes in weakly decreasing order:
\begin{equation*}
|ISP_{r^*(c;1)}^*(c)|
\ge
|ISP_{r^*(c;2)}^*(c)|
\ge
\cdots
\ge
|ISP_{r^*(c;q)}^*(c)|.
\end{equation*}

\begin{assumption}[Separation of population ISP magnitudes]
\label{assump:ISP_separation}
For a given \(c\in\mathcal C_{\mathrm{active}}\), there exists \(\Delta(c)>0\) such that for all \(k\in\{1,\dots,q-1\}\),
\begin{equation}
|ISP_{r^*(c;k)}^*(c)|
-
|ISP_{r^*(c;k+1)}^*(c)|
\ge
\Delta(c).
\end{equation}
\end{assumption}

\begin{theorem}[Uniform validity of ISP ranking]
\label{thm:ISP_ranking_consistency}
Under Assumptions \ref{assump:slater},
\ref{assump:smooth_cvx},
\ref{assump:regular},
\ref{assump:S_CLT},
\ref{assump:sign_map},
and \ref{assump:ISP_separation},
let \(\hat r(c)\) denote the permutation that sorts the estimated ISP magnitudes in weakly decreasing order:
$
|\widehat{ISP}_{\hat r(c;1)}(c)|
\ge
|\widehat{ISP}_{\hat r(c;2)}(c)|
\ge
\cdots
\ge
|\widehat{ISP}_{\hat r(c;q)}(c)|.
$
Then, for each fixed \(c\in\mathcal C_{\mathrm{active}}\),
$
\Pr\bigl(\hat r(c)=r^*(c)\bigr)\to 1.
$
\end{theorem}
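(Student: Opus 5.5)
We will derive ranking consistency from the uniform convergence in Lemma~\ref{lem:ISP_uniform} together with the separation margin in Assumption~\ref{assump:ISP_separation}.

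Fix \(c\in\mathcal C_{\mathrm{active}}\) and write \(\delta_n := \max_{1\le j\le q}|\widehat{ISP}_j(c)-ISP_j^*(c)|\). Theorem~\ref{thm:joint_ISP} holds under the stated assumptions, so Lemma~\ref{lem:ISP_uniform} applies and gives \(\delta_n\xrightarrow{p}0\). By the reverse triangle inequality, the same bound carries over to magnitudes:
\[
\max_j\bigl||\widehat{ISP}_j(c)|-|ISP_j^*(c)|\bigr|\le\delta_n .
\]
Define the event \(E_n:=\{\delta_n<\Delta(c)/2\}\). Since \(q\) is fixed and \(\Delta(c)>0\) does not depend on \(n\), we have \(\Pr(E_n)\to1\). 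It therefore suffices to show that \(E_n\subseteq\{\hat r(c)=r^*(c)\}\).

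The core step is a deterministic comparison on \(E_n\). Assumption~\ref{assump:ISP_separation} gives consecutive gaps of at least \(\Delta(c)\). Summing these gaps shows that for any indices \(k<l\),
\[
|ISP^*_{r^*(c;k)}(c)|-|ISP^*_{r^*(c;l)}(c)|\ge (l-k)\Delta(c)\ge\Delta(c).
\]
So every pair of population magnitudes, not only adjacent ones, is separated by at least \(\Delta(c)\). On \(E_n\), for such a pair,
\[
|\widehat{ISP}_{r^*(c;k)}(c)|-|\widehat{ISP}_{r^*(c;l)}(c)|\ge\Delta(c)-2\delta_n>0 .
\]
All estimated magnitudes are therefore strictly ordered exactly as \(r^*(c)\) prescribes. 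Strictness matters here: it means there are no ties, so the weakly-decreasing sorting permutation \(\hat r(c)\) is uniquely determined and must equal \(r^*(c)\).

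A small side point is that the separation assumption makes all population magnitudes distinct, so \(r^*(c)\) is itself uniquely defined and the event \(\{\hat r(c)=r^*(c)\}\) is well posed. No real obstacle is expected. The only care needed is to note that \(\Delta(c)\) and \(q\) are fixed, so a single threshold \(\Delta(c)/2\) works uniformly across all \(\binom{q}{2}\) pairs; this is exactly what Lemma~\ref{lem:ISP_uniform} supplies. Combining the inclusion \(E_n\subseteq\{\hat r(c)=r^*(c)\}\) with \(\Pr(E_n)\to1\) gives \(\Pr(\hat r(c)=r^*(c))\to1\).
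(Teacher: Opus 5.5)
Your proof is correct and follows essentially the same route as the paper. Both arguments invoke Lemma~\ref{lem:ISP_uniform}, pass to magnitudes via the reverse triangle inequality, and show that on a high-probability event the margin $\Delta(c)$ forces the estimated magnitudes into the order $r^*(c)$. The differences are cosmetic: the paper uses the threshold $\Delta(c)/4$ and compares only adjacent pairs, which suffices by transitivity, while you use $\Delta(c)/2$, compare all pairs, and note explicitly that strict separation rules out ties so that $\hat r(c)$ is uniquely defined.
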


\noindent\textbf{Proof.} See Appendix~\ref{subsec:pf_isp-rank}.

\paragraph{Plateau cutoff \(\hat m(c)\).}
Building on the ordered ISP magnitudes, we define a point estimator of the boundary between noise and signal restrictions. Let
$
|\widehat{ISP}_{(1)}(c)| \le \cdots \le |\widehat{ISP}_{(q)}(c)|
$
denote the ordered absolute ISPs, where the ordering is now from smallest to largest so that the lower tail corresponds to candidate noise restrictions. For each candidate cutoff \(m\in\{m_0,\dots,q-1\}\), define
$
\mathcal G_1(m)=\{1,\dots,m\},
\;
\mathcal G_2(m)=\{m+1,\dots,q\},
$
where \(\mathcal G_1(m)\) is interpreted as a candidate plateau and \(\mathcal G_2(m)\) as the remaining signal region. We consider the contrast
\[
\Delta_m(c)
=
\frac{1}{m}\sum_{\ell=1}^m |\widehat{ISP}_{(\ell)}(c)|
-
\frac{1}{q-m}\sum_{\ell=m+1}^q |\widehat{ISP}_{(\ell)}(c)|.
\]
To assess whether a structural break occurs at \(m\), we test
\[
H_0:\ \Delta_m(c)=0,
\]
using the Wald-type statistic
$
B_m(c)
=
n\,\Delta_m(c)^2 \big/ \widehat{\mathrm{Var}}(\Delta_m(c)),
$
where \(\widehat{\mathrm{Var}}(\Delta_m(c))\) is obtained from \(\widehat{\Sigma}_{ISP}(c)\) via a linear contrast.

To ensure that the lower block behaves like a plateau, we also impose a within-block homogeneity screen. For each candidate \(m\), we test
\[
H_0:\ 
|\widehat{ISP}_{(1)}(c)|
=
\cdots
=
|\widehat{ISP}_{(m)}(c)|,
\]
using a joint Wald test based on pairwise contrasts within \(\mathcal G_1(m)\). Let \(\mathcal M_{\mathrm{adm}}\) denote the set of candidate cutoffs that pass this homogeneity screen. We then define the point estimator
$
\hat m(c)
=
\arg\max_{m\in\mathcal M_{\mathrm{adm}}}
B_m(c),
$
with the maximization carried out over all feasible candidates if no \(m\) passes the screen. The estimated plateau is given by \(\{1,\dots,\hat m(c)\}\), and restrictions beyond \(\hat m(c)\) are classified as signal.

\paragraph{Population counterpart.}
Let
$
|ISP_{(1)}^*(c)| \le \cdots \le |ISP_{(q)}^*(c)|
$
denote the ordered population ISP magnitudes, and define
$
\Delta_m^*(c)
=
\frac{1}{m}\sum_{\ell=1}^m |ISP_{(\ell)}^*(c)|
-
\frac{1}{q-m}\sum_{\ell=m+1}^q |ISP_{(\ell)}^*(c)|.
$

\begin{assumption}[Population structural break]
\label{ass:plateau_sep}
For a fixed \(c\in\mathcal C_{\mathrm{active}}\), there exists a unique index \(m^*(c)\in\{m_0,\dots,q-1\}\) and a constant \(\delta(c)>0\) such that
\[
\Delta_m^*(c)\le \Delta_{m^*(c)}^*(c)-\delta(c)
\qquad
\text{for all } m\neq m^*(c).
\]
\end{assumption}

Assumption~\ref{ass:plateau_sep} states that the ordered population ISP sequence admits a well-defined structural break separating a lower-magnitude noise region from a higher-magnitude signal region, in the sense that the population break criterion has a unique maximizer.

\begin{proposition}[Consistency of plateau recovery]
\label{prop:plateau_consistency}
Fix \(c\in\mathcal C_{\mathrm{active}}\). Suppose Theorem~\ref{thm:joint_ISP} holds, \(q\) is fixed, and Assumption~\ref{ass:plateau_sep} is satisfied. Let \(\hat m(c)\) denote the estimated cutoff obtained from the plateau rule above. Then
$
\hat m(c)\xrightarrow{p} m^*(c).
$
\end{proposition}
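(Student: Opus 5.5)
The plan is to treat $\hat m(c)$ as the maximizer of a random criterion over the finite set $\{m_0,\dots,q-1\}$. The first step is to show that this criterion converges uniformly in probability to a deterministic criterion that is uniquely maximized at $m^*(c)$. Because the index set is finite and $q$ is fixed, uniform convergence reduces to convergence for each $m$. The standard argmax argument for finite sets then gives $\Pr(\hat m(c)=m^*(c))\to 1$.

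\textbf{Step 1 (order statistics).} By Lemma~\ref{lem:ISP_uniform}, $\max_j|\widehat{ISP}_j(c)-ISP^*_j(c)|\xrightarrow{p}0$. The map $x\mapsto(|x|_{(1)},\dots,|x|_{(q)})$ is $1$-Lipschitz in the sup norm, since sorting is $1$-Lipschitz and so is $|\cdot|$. Hence $\max_\ell\bigl||\widehat{ISP}_{(\ell)}(c)|-|ISP^*_{(\ell)}(c)|\bigr|\xrightarrow{p}0$, and no separation assumption is needed here. This is why I would use sorted values rather than Theorem~\ref{thm:ISP_ranking_consistency}. The contrast $\Delta_m(c)$ is a fixed linear function of the sorted vector, so $\max_m|\Delta_m(c)-\Delta_m^*(c)|\xrightarrow{p}0$.

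\textbf{Step 2 (variance).} The estimator $\widehat{\mathrm{Var}}(\Delta_m(c))=w_m'\widehat\Sigma_{ISP}(c)w_m$ uses contrast weights $w_m$ mapped back to the original indices through the estimated ordering. I would assume $\widehat\Sigma_{ISP}(c)\xrightarrow{p}\Sigma_{ISP}(c)$, which follows by continuity of $V_1,V_2,V_3,J_\theta,J_\lambda$ in $(\theta,\lambda)$ together with consistency of the plug-in components. Under this, $\widehat{\mathrm{Var}}(\Delta_m(c))\xrightarrow{p}v_m^*(c)>0$. Where ties occur in the ordering the weights $w_m$ are ambiguous; I would avoid this by invoking the separation of magnitudes in Assumption~\ref{assump:ISP_separation}, or alternatively use the fact that $\Delta_m$ is symmetric within tied groups. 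It follows that $B_m(c)/n\xrightarrow{p}\Delta_m^*(c)^2/v_m^*(c)$ uniformly in $m$.

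\textbf{Step 3 (identification, the main obstacle).} The estimator maximizes the normalized Wald criterion $B_m/n$, whereas Assumption~\ref{ass:plateau_sep} is phrased in terms of $\Delta_m^*$ itself. Note that $\Delta_m^*\le 0$ by construction, since the lower block is subtracted from the upper block. I would therefore read Assumption~\ref{ass:plateau_sep} as a well-separated unique maximum of the limit criterion $Q^*(m):=\Delta_m^*(c)^2/v_m^*(c)$ at $m^*(c)$, with a gap $\delta'(c)>0$. I would state this reading explicitly, or add it as a mild strengthening when the variances differ across $m$. Granting it, the event
\[
\max_m\big|B_m(c)/n-Q^*(m)\big|<\delta'(c)/2
\]
has probability tending to one, and on that event $\arg\max_m B_m(c)$ equals $m^*(c)$ on any candidate set that contains $m^*(c)$.

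\textbf{Step 4 (homogeneity screen).} I must show that $m^*(c)\in\mathcal M_{\mathrm{adm}}$ with probability approaching one, or else that the fallback to all candidates applies. For candidates $m$ whose lower block is heterogeneous in population, the pairwise Wald statistic grows at rate $n$, so these $m$ are screened out with probability approaching one; this only shrinks the competitor set and is harmless. For $m^*$ with an exactly homogeneous population plateau, a fixed-level test retains $m^*$ only with probability $1-\alpha$. I would therefore let the critical value diverge slowly, for instance at $o(n)$ such as $\log n$, so that $m^*$ passes with probability approaching one while heterogeneous blocks are still rejected. If no $m$ passes, the rule maximizes over all candidates and Step 3 applies directly. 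Either way $\hat m(c)=m^*(c)$ with probability approaching one, which gives $\hat m(c)\xrightarrow{p}m^*(c)$.
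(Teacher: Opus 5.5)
Your proof follows the same route as the paper's: the sorted ISP magnitudes converge, so every contrast and break statistic converges over the finite candidate set, and a well-separated unique maximizer then gives consistency by the finite-set argmax argument. Your version is actually tighter than the paper's on four points.
\begin{itemize}
\item The $1$-Lipschitz property of sorting justifies convergence of the order statistics, which the paper only asserts.
\item Normalizing by $n$ is essential. The paper's claim $\max_m|B_m(c)-B_m^*(c)|\xrightarrow{p}0$ fails whenever $\Delta_m^*(c)\neq 0$, because then $B_m(c)$ diverges at rate $n$.
\item You rightly note that Assumption~\ref{ass:plateau_sep} concerns $\Delta_m^*(c)\le 0$. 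Its maximizer is therefore the cutoff where the two block means are \emph{closest}, so the assumption must be restated for the limit of $B_m(c)/n$. The paper makes this switch silently.
\item The paper ignores the homogeneity screen altogether. Your remark that a fixed-level screen keeps $m^*(c)$ only with probability bounded away from one shows the rule needs something like a diverging critical value.
\end{itemize}

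The gap is in Step 4 read together with Step 2: your ``either way'' is not exhaustive. The screen argument needs one of two things with probability approaching one. Either $m^*(c)\in\mathcal M_{\mathrm{adm}}$, which requires the population block $\{1,\dots,m^*(c)\}$ to be exactly homogeneous. Or $\mathcal M_{\mathrm{adm}}=\emptyset$, so the fallback applies.

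Suppose the population plateau is not exactly homogeneous. Then $m^*(c)$ is screened out, but any candidate whose population lower block is homogeneous still passes. Examples are $m=1$ when $m_0=1$, where the screen is vacuous, or a tied sub-block. The fallback is then not triggered, and $\hat m(c)\neq m^*(c)$.

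This is exactly what Assumption~\ref{assump:ISP_separation} produces. You invoke it in Step 2 to pin down the weights $w_m$, but it is not a hypothesis of the proposition. Strict gaps between all consecutive magnitudes make every block of size at least two heterogeneous. The homogeneous-plateau case of Step 4 then cannot occur, and with $m_0=1$ you get $\hat m(c)=1$.

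The repair has three parts.
\begin{itemize}
\item Drop Assumption~\ref{assump:ISP_separation}. Assume instead an exactly homogeneous plateau with a strict gap at the boundary $m^*(c)$. Then the estimated lower block at $m^*(c)$ equals the population plateau with probability approaching one.
\item Handle ties directly. Within-block ties are harmless because $w_m$ is constant on each block. But for every competitor $m<m^*(c)$ the boundary cuts through the tied group. So $w_m$, and hence $\widehat{\mathrm{Var}}(\Delta_m(c))$, depends on the asymptotically random ordering of tied indices and need not converge. Since $\Delta_m(c)$ does converge by your Step 1, it suffices to impose your gap $\delta'(c)$ for every tie-breaking assignment (a finite set). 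Then bound $B_m(c)/n$ from above using the smallest of the corresponding limiting variances.
\item State positivity of these variances, and of $v_m^*(c)$, as an explicit condition. It does not follow from $\widehat\Sigma_{ISP}(c)\xrightarrow{p}\Sigma_{ISP}(c)$, because $\Sigma_{ISP}(c)$ can be singular.
\end{itemize}
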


\noindent\textbf{Proof.} See Appendix \ref{subsec:pf_prop-plateau-consistency}.

\begin{remark}
The ISP ranking theory remains an inner-level result: it applies conditionally on a fixed tolerance value \(c\) in the active set. In the optimized procedure, it therefore applies in particular at the pseudo-true optimal tolerance \(c^\dagger\) and, under consistency of \(\hat c\), asymptotically at the selected tolerance \(\hat c\), provided \(c^\dagger\in\mathcal C_{\mathrm{active}}\). The same applies to the point estimator \(\hat m(c)\), whose consistency is established under the structural-break condition in Assumption~\ref{ass:plateau_sep}.
\end{remark}

\subsection{Analytical Stein-type risk optimization over \(c\)}

We now turn to the outer problem and endogenize the tolerance level by selecting \(c\) through an analytical Stein-type risk criterion. Let \(\theta^{0}\) denote the unconstrained target parameter defined by
$
\nabla_\theta \phi^0(\theta^{0})=0.
$
For each \(c\in\mathcal C_{\mathrm{active}}\), define the population risk
$
R(c):=\mathbb E\big[\|\hat\theta(c)-\theta^{0}\|^2\big].
$
Using the decomposition
\[
\hat\theta(c)-\theta^{0}
=
\big(\hat\theta(c)-\theta^*(c)\big)
+
\big(\theta^*(c)-\theta^{0}\big),
\]
the leading terms of the risk are a variance component and a regularization-bias component. 

\begin{proposition}[Local expansion of the pseudo-true parameter]
\label{prop:theta_star_expansion}
Suppose Assumptions~\ref{assump:slater} , \ref{assump:smooth_cvx}, and \ref{assump:regular} hold, and let \(\theta^{0}\) denote the unconstrained benchmark satisfying
$
\nabla_\theta \phi(\theta^{0})=0.
$
For each \(c\in\mathcal C_{\mathrm{active}}\), let \((\theta^*(c),\lambda^*(c))\) denote the population KKT solution. Then, in a neighborhood of \(\theta^{0}\),
\begin{equation}
\theta^*(c)-\theta^{0}
=
-\lambda^*(c)H_0^{-1}a_0
+
r(c),
\label{eq:theta_star_expansion}
\end{equation}
where
$
H_0=\nabla_{\theta\theta}^2\phi(\theta^{0}),
\;
a_0=\nabla_\theta h(\theta^{0}),
$
and the remainder satisfies
$
\|r(c)\|=o\!\big(\lambda^*(c)\big).
$
\end{proposition}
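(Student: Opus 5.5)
The plan is to start from the population KKT stationarity condition in Assumption~\ref{assump:S_CLT},
\[
\nabla_\theta \phi(\theta^*(c))+\lambda^*(c)\nabla_\theta h(\theta^*(c))=0,
\]
and Taylor-expand it around the unconstrained benchmark $\theta^0$, using $\nabla_\theta\phi(\theta^0)=0$. Assumption~\ref{assump:smooth_cvx} makes $\phi$ twice continuously differentiable near $\theta^0$. The mean-value form gives
\[
\nabla_\theta\phi(\theta^*(c))=\bar H(c)\,(\theta^*(c)-\theta^0),
\qquad
\bar H(c)=\int_0^1\nabla^2_{\theta\theta}\phi\bigl(\theta^0+t(\theta^*(c)-\theta^0)\bigr)\,dt .
\]
Likewise, Assumption~\ref{assump:regular} makes $h=g'\Sigma^{-1}g$ continuously differentiable. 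Hence
\[
\nabla_\theta h(\theta^*(c))=a_0+\bigl(\nabla_\theta h(\theta^*(c))-a_0\bigr),
\]
where the bracket tends to zero as $\theta^*(c)\to\theta^0$. Substituting yields the exact identity
\[
\theta^*(c)-\theta^0=-\lambda^*(c)\,\bar H(c)^{-1}\nabla_\theta h(\theta^*(c)).
\]
Here $\bar H(c)$ is invertible because strict convexity (and positive definiteness of $H_0$) is preserved by continuity near $\theta^0$.

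The remainder is then explicit:
\[
r(c)=-\lambda^*(c)\Bigl[\bar H(c)^{-1}\nabla_\theta h(\theta^*(c))-H_0^{-1}a_0\Bigr].
\]
It suffices to show that the bracket is $o(1)$ as $\theta^*(c)\to\theta^0$. This follows from continuity of $\nabla^2\phi$ (so $\bar H(c)\to H_0$, and therefore $\bar H(c)^{-1}\to H_0^{-1}$) and continuity of $\nabla_\theta h$ (so $\nabla_\theta h(\theta^*(c))\to a_0$). Together these give $\|r(c)\|=o(\lambda^*(c))$.

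The main obstacle is the meaning of ``in a neighborhood of $\theta^0$''. We need $\theta^*(c)\to\theta^0$ along the relevant limit, for instance as $\lambda^*(c)\downarrow 0$ or as $c$ approaches the value $h(\theta^0)$ at which the constraint ceases to bind. I would establish this first, as a preliminary step, from the identity above. Since $\bar H$ is uniformly positive definite on a compact neighborhood and $\nabla h$ is bounded there, a continuity/fixed-point argument gives $\|\theta^*(c)-\theta^0\|\le C\lambda^*(c)$ whenever $\theta^*(c)$ lies in that neighborhood. Combining this with uniqueness of the strictly convex constrained minimizer (Assumptions~\ref{assump:slater} and \ref{assump:smooth_cvx}) rules out escape from the neighborhood, and yields $\theta^*(c)-\theta^0=O(\lambda^*(c))$. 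If twice differentiability of $h$ is available, as in Assumption~\ref{assump:profile_regularity}, this can be sharpened to $\|r(c)\|=O(\lambda^*(c)^2)$. That sharper bound is stronger than the stated $o(\lambda^*(c))$.
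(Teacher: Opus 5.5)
Your argument is correct and is essentially the paper's: both expand the population stationarity condition $\nabla_\theta\phi(\theta^*(c))+\lambda^*(c)\nabla_\theta h(\theta^*(c))=0$ around $\theta^{0}$, use $\nabla_\theta\phi(\theta^{0})=0$ and continuity of $\nabla_\theta h$, and solve for $\theta^*(c)-\theta^{0}$. Your integral mean-value form turns this into an exact identity, which makes explicit the bound $\|\theta^*(c)-\theta^{0}\|=O(\lambda^*(c))$ that the paper uses silently when it absorbs $o(\|\theta^*(c)-\theta^{0}\|)$ into $o(\lambda^*(c))$.

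Two side claims should be dropped or qualified. First, the paper takes ``$\theta^*(c)$ in a (shrinking) neighborhood of $\theta^{0}$'' as part of the hypothesis, so your no-escape step is unnecessary. As sketched, it would also need a global identification condition that the stated assumptions do not supply: Assumption~\ref{assump:smooth_cvx} gives convexity only locally, and $\{h\le c\}$ can be nonconvex for nonlinear $g$. Second, the $O(\lambda^*(c)^2)$ sharpening also requires a Lipschitz Hessian of $\phi$. Smoothness of $h$ controls $\nabla_\theta h(\theta^*(c))-a_0$ but not $\bar H(c)-H_0$; for example, $\phi(\theta)=\theta^2/2+|\theta|^{5/2}$ with $h(\theta)=(\theta-1)^2$ gives $|r(c)|\asymp\lambda^*(c)^{3/2}$.
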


\noindent \textbf{Proof.} See Appendix \ref{subsec:pf_prop-theta-star-expansion}.

Proposition \ref{prop:theta_star_expansion} implies that the leading population risk approximation is
\begin{equation}
R_{\mathrm{lin}}(c)
=
\lambda^*(c)^2 a_0^{\prime}H_0^{-1\prime}H_0^{-1}a_0
+
\frac{1}{n}\operatorname{tr}\big(V_1(c)\big).
\label{eq:R_lin_pop}
\end{equation}
Motivated by \eqref{eq:R_lin_pop}, define the feasible analytical proxy
\begin{equation}
\widehat R_{\mathrm{lin}}(c)
=
\widehat B_{\mathrm{lin}}(c)+\widehat W(c),
\label{eq:R_lin_hat}
\end{equation}
where
$
\widehat W(c)
=
\frac{1}{n}\operatorname{tr}\big(\widehat V_1(c)\big),
$
and
$
\widehat B_{\mathrm{lin}}(c)
=
\hat{\lambda}(c)^2\,
\hat a^{\prime}\hat H^{-1\prime}\hat H^{-1}\hat a.
$
Here \(\tilde\theta\) is the unconstrained target estimator
$
\tilde\theta
=
\arg\min_{\theta}\phi_n(\theta),
$
and
$
\hat a=\nabla_\theta h(\tilde\theta),
\;
\hat H=\nabla_{\theta\theta}^2\phi_n(\tilde\theta).
$ 

The tolerance level is selected as
$
\hat c \in \arg\min_{c \in \mathcal C_{\mathrm{active}}} \widehat R_{\mathrm{lin}}(c).
$
Let the pseudo-true optimal tolerance be
$
c^{\dagger}
=
\arg\min_{c \in \mathcal C_{\mathrm{active}}} R_{\mathrm{lin}}(c).
\label{eq:c_dagger_def}
$

\begin{assumption}[Identification and uniform consistency of the risk criterion]
\label{assump:c_selection}
The set \(\mathcal C_{\mathrm{active}}\) is compact, the population criterion \(R_{\mathrm{lin}}(c)\) is continuous on \(\mathcal C_{\mathrm{active}}\), and it has a unique minimizer \(c^{\dagger}\). Moreover,
$
\sup_{c\in\mathcal C_{\mathrm{active}}}
\big|
\widehat R_{\mathrm{lin}}(c)-R_{\mathrm{lin}}(c)
\big|
\;\xrightarrow{p}\;0.
$
\end{assumption}

\begin{theorem}[Consistency of the optimized tolerance]
\label{thm:c_hat_consistency}
Under Assumptions \ref{assump:slater}, \ref{assump:smooth_cvx}, \ref{assump:regular},  \ref{assump:S_CLT}, and \ref{assump:c_selection},
$
\hat c \xrightarrow{p} c^{\dagger}.
$
\end{theorem}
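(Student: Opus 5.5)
This is the standard argmin consistency argument for extremum estimators, in the form of Newey and McFadden's Theorem 2.1, applied to the outer criterion. Assumption~\ref{assump:c_selection} supplies the three ingredients directly: compactness of $\mathcal C_{\mathrm{active}}$, continuity of $R_{\mathrm{lin}}$ with a unique minimizer $c^{\dagger}$, and uniform convergence $\sup_c|\widehat R_{\mathrm{lin}}(c)-R_{\mathrm{lin}}(c)|\xrightarrow{p}0$. Assumptions \ref{assump:slater}--\ref{assump:S_CLT} play only a supporting role: they guarantee that $\hat\theta(c)$, $\hat\lambda(c)$, $\widehat V_1(c)$ and the plug-ins $\hat a$, $\hat H$ are well defined, so that $\widehat R_{\mathrm{lin}}$ is a genuine random function on $\mathcal C_{\mathrm{active}}$. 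If measurability or existence of $\hat c$ is in doubt, I will take $\hat c$ to be any near-minimizer, i.e.\ $\widehat R_{\mathrm{lin}}(\hat c)\le \inf_c \widehat R_{\mathrm{lin}}(c)+o_p(1)$; the argument below is unchanged.

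\emph{Step 1 (identifiable uniqueness).} Fix $\varepsilon>0$ and define $\eta(\varepsilon):=\inf\{R_{\mathrm{lin}}(c)-R_{\mathrm{lin}}(c^{\dagger}): c\in\mathcal C_{\mathrm{active}},\ |c-c^{\dagger}|\ge\varepsilon\}$. The set $\{c\in\mathcal C_{\mathrm{active}}: |c-c^{\dagger}|\ge\varepsilon\}$ is compact, and $R_{\mathrm{lin}}$ is continuous on it, so the infimum is attained. Because the minimizer is unique, the attained value exceeds $R_{\mathrm{lin}}(c^{\dagger})$, hence $\eta(\varepsilon)>0$. If the set is empty, set $\eta(\varepsilon)=\infty$ and the conclusion is trivial.

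\emph{Step 2 (chaining inequalities).} On the event $\{|\hat c-c^{\dagger}|\ge\varepsilon\}$,
\[
\eta(\varepsilon)\le R_{\mathrm{lin}}(\hat c)-R_{\mathrm{lin}}(c^{\dagger})
\le \bigl[R_{\mathrm{lin}}(\hat c)-\widehat R_{\mathrm{lin}}(\hat c)\bigr]+\bigl[\widehat R_{\mathrm{lin}}(\hat c)-\widehat R_{\mathrm{lin}}(c^{\dagger})\bigr]+\bigl[\widehat R_{\mathrm{lin}}(c^{\dagger})-R_{\mathrm{lin}}(c^{\dagger})\bigr].
\]
The middle bracket is $\le o_p(1)$ by the (near-)minimizing property of $\hat c$. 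Each outer bracket is bounded by $\sup_c|\widehat R_{\mathrm{lin}}-R_{\mathrm{lin}}|=o_p(1)$. The right-hand side is therefore $o_p(1)$, so
\[
\Pr\bigl(|\hat c-c^{\dagger}|\ge\varepsilon\bigr)\le\Pr\bigl(2\sup_c|\widehat R_{\mathrm{lin}}-R_{\mathrm{lin}}|+o_p(1)\ge\eta(\varepsilon)\bigr)\to0.
\]
Since $\varepsilon$ was arbitrary, $\hat c\xrightarrow{p}c^{\dagger}$.

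The only delicate point is that the minimization defining $\hat c$ is over $\mathcal C_{\mathrm{active}}$, which is defined through the population multiplier $\lambda^*(c)$ and is not observed. I will read the statement as the paper does: the argmin is taken over $\mathcal C_{\mathrm{active}}$ as given, which is compact and fixed by Assumption~\ref{assump:c_selection}. Theorem~\ref{thm:sample_profile} then ensures that $\hat\lambda(c)>0$ and that $\widehat R_{\mathrm{lin}}$ is well defined near each active $c$ with probability approaching one. If one instead wanted to derive the uniform convergence rather than assume it, I would combine pointwise consistency of $\hat\lambda(c)$ and $\widehat V_1(c)$ (Theorem~\ref{thm:asymp_theta_active}) with stochastic equicontinuity, obtained from the smooth profiling in Corollary~\ref{cor:profiling_valid}, and a compactness/finite-cover argument. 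That step, not the argmin argument itself, is the main obstacle; here it is granted by assumption.
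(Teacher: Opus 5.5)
Your proof is correct and follows the same route as the paper. The paper's proof cites the standard argmin theorem using compactness of $\mathcal C_{\mathrm{active}}$, continuity and unique minimization of $R_{\mathrm{lin}}$, and the uniform convergence granted in Assumption~\ref{assump:c_selection}; your Steps 1--2 (identifiable uniqueness, then the three-term chaining bound) simply write that theorem out, and, as you note, the remaining assumptions play no substantive role in either argument.
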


\noindent \textbf{Proof.} See Appendix \ref{subsec:pf_thm-c-hat-consistency}.

\begin{corollary}[Consistency of the optimized estimator]
\label{cor:theta_hat_opt_consistency}
Under the assumptions of Theorem~\ref{thm:c_hat_consistency}, if the maps
$
c\mapsto \theta^*(c),
\;
c\mapsto \lambda^*(c)
$
are continuous at \(c^{\dagger}\), then
\[
\hat{\theta}=\hat{\theta}(\hat c)\xrightarrow{p}\theta^*(c^{\dagger}),
\qquad
\hat{\lambda}=\hat{\lambda}(\hat c)\xrightarrow{p}\lambda^*(c^{\dagger}).
\]
\end{corollary}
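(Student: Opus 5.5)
The argument splits the error into two parts:
\[
\hat\theta(\hat c)-\theta^*(c^{\dagger})
=
\big(\hat\theta(\hat c)-\theta^*(\hat c)\big)
+
\big(\theta^*(\hat c)-\theta^*(c^{\dagger})\big),
\]
with the analogous split for \(\hat\lambda(\hat c)\). The second term is handled first. By Theorem~\ref{thm:c_hat_consistency}, \(\hat c\xrightarrow{p}c^{\dagger}\). Since \(c\mapsto\theta^*(c)\) and \(c\mapsto\lambda^*(c)\) are continuous at \(c^{\dagger}\), the continuous mapping theorem gives \(\theta^*(\hat c)\xrightarrow{p}\theta^*(c^{\dagger})\) and \(\lambda^*(\hat c)\xrightarrow{p}\lambda^*(c^{\dagger})\).

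The first term is the real issue, because Theorem~\ref{thm:asymp_theta_active} only gives pointwise consistency \(\hat\theta(c)\xrightarrow{p}\theta^*(c)\) for fixed \(c\). Evaluating at the random \(\hat c\) needs uniform convergence over a neighborhood of \(c^{\dagger}\), that is,
\[
\sup_{c\in\mathcal N(c^{\dagger})}\big\|(\hat\theta(c),\hat\lambda(c))-(\theta^*(c),\lambda^*(c))\big\|\xrightarrow{p}0 .
\]
Since \(\Pr(\hat c\in\mathcal N(c^{\dagger}))\to1\), this is enough. I would get it from the profiling apparatus of Section~\ref{subsec:pf_profile_c}, taking \(c^{\dagger}\in\mathcal C_{\mathrm{active}}\) with Assumptions~\ref{assump:profile_regularity} and \ref{assump:sample_profile} in force at \(c^{\dagger}\). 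Lemma~\ref{lem:population_profile} and Theorem~\ref{thm:sample_profile} say that, with probability approaching one, both the population and sample solutions are the unique zeros of \(\Phi\) and \(\Phi_n\) on a common neighborhood. Fix a compact neighborhood \(\bar{\mathcal N}\) of \(c^{\dagger}\) inside \(\mathcal N(c^{\dagger})\). A mean-value expansion of the population map around the population solution gives
\[
\Phi(\hat\theta(c),\hat\lambda(c);c)
=\Phi(\hat\theta(c),\hat\lambda(c);c)-\Phi_n(\hat\theta(c),\hat\lambda(c);c)
=\bar J_\Phi(c)\begin{pmatrix}\hat\theta(c)-\theta^*(c)\\ \hat\lambda(c)-\lambda^*(c)\end{pmatrix}.
\]
The left-hand side is bounded by \(\sup|\Phi_n-\Phi|\), which tends to zero in probability uniformly by Assumption~\ref{assump:sample_profile}(3). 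By continuity of \(J_\Phi\) and nonsingularity at \(c^{\dagger}\), the inverse \(\bar J_\Phi(c)^{-1}\) is bounded uniformly near \(c^{\dagger}\), after shrinking the neighborhood if needed. This yields the uniform bound.

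The main obstacle is that this mean-value argument presupposes that \((\hat\theta(c),\hat\lambda(c))\) already lies in the neighborhood where the Jacobian bound holds. I would break this circularity with a standard uniform implicit-function argument. For each \(c\in\bar{\mathcal N}\), apply a contraction-map or Newton–Kantorovich argument to \(\Phi_n(\cdot;c)\) on a small ball around \((\theta^*(c),\lambda^*(c))\). Uniform nonsingularity of \(J_{\Phi_n}\) (Assumption~\ref{assump:sample_profile}(2)) and uniform smallness of \(\Phi_n-\Phi\) at the population solution then produce a root inside a ball of radius proportional to \(\sup_c\|\Phi_n-\Phi\|\). The local uniqueness in Theorem~\ref{thm:sample_profile} identifies this root with \(\hat\theta(c),\hat\lambda(c)\).

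If one prefers to avoid the active-region machinery, the alternative is an argmin-consistency argument for \(\hat\theta(c)\) that is uniform in \(c\). This would use uniform convergence of \(\phi_n\) from Assumption~\ref{assump:S_CLT}, strict convexity from Assumption~\ref{assump:smooth_cvx}, and continuity of the constraint correspondence \(c\mapsto\Theta(c)\), which follows from Slater's condition (Assumption~\ref{assump:slater}). The multiplier would then be recovered from the first-order condition, \(\hat\lambda=-a'\nabla\phi_n/\|a\|^2\), which is continuous since \(a\ne0\) by Assumption~\ref{assump:regular}.

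Combining the two terms then gives
\[
\hat\theta=\hat\theta(\hat c)\xrightarrow{p}\theta^*(c^{\dagger}),
\qquad
\hat\lambda=\hat\lambda(\hat c)\xrightarrow{p}\lambda^*(c^{\dagger}).
\]
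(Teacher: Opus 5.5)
Your proposal is correct and follows the paper's route. The paper uses the same split into $\hat\theta(\hat c)-\theta^*(\hat c)$ and $\theta^*(\hat c)-\theta^*(c^{\dagger})$ (likewise for $\hat\lambda$), continuous mapping for the second term, and, for the first, uniform convergence of $(\hat\theta(c),\hat\lambda(c))$ to $(\theta^*(c),\lambda^*(c))$ over a neighborhood of $c^{\dagger}$ together with $\Pr(\hat c\in\mathcal N(c^{\dagger}))\to 1$. The paper simply attributes that uniform convergence to Theorem~\ref{thm:sample_profile}, whereas you sketch how to derive it from Assumptions~\ref{assump:profile_regularity} and \ref{assump:sample_profile} and say explicitly that these assumptions, and $c^{\dagger}\in\mathcal C_{\mathrm{active}}$, are used beyond the hypotheses of Theorem~\ref{thm:c_hat_consistency}.
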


\noindent \textbf{Proof.} See Appendix \ref{subsec:pf_cor-theta-hat-opt-consistency}.

\begin{remark}[Interpretation of the optimized tolerance]
The optimized tolerance scheme balances two forces. Smaller values of \(c\) impose the restrictions more tightly and may reduce variance, but they also increase regularization bias when the restrictions are misspecified. Larger values of \(c\) relax the restriction set, thereby reducing regularization bias, but may increase variance. The criterion \(\widehat R_{\mathrm{lin}}(c)\) formalizes this tradeoff and selects the tolerance level that is optimal, within the admissible active set, according to the analytical Stein-type approximation.
\end{remark}

\subsection{Debiasing after optimized tolerance selection}
\label{subsec:debias_after_c_opt}
The optimization over \(c\) via the Stein-type risk criterion selects a tolerance level that optimally balances regularization bias and variance, thereby determining the degree of bias that is acceptable in minimizing overall estimation risk. Consequently, the estimator \(\hat\theta(\hat c)\) is tuned to achieve an optimal bias--variance tradeoff.

Nevertheless, this optimization operates at a global level and does not, in general, eliminate the bias in the centering of the estimator. While the optimized tolerance parameter controls the bias--variance tradeoff, it does not remove the regularization bias induced by the constraint. A subsequent debiasing step is therefore required to eliminate the remaining bias and restore correct centering for inference. In particular, even at the optimal tolerance \(c^\dagger\), we typically have \(\theta^*(c^\dagger)\neq \theta^{0}\) whenever the quadratic restriction remains locally binding.

\paragraph{Local characterization of the regularization bias.} Proposition~\ref{prop:theta_star_expansion} shows that the leading regularization bias is of order \(O(\lambda^*(c))\), with direction determined by the curvature of the objective and the gradient of the constraint. Thus, the Lagrange multiplier governs both the strength and direction of the distortion induced by the constraint. 
Since
$
\theta^*(c)-\theta^{0} = O\big(\lambda^*(c)\big),
$
the leading bias is proportional to \(\lambda^*(c)\), which captures the local tightness of the constraint. This characterization operates at the level of the parameter \(\theta\), providing a tractable approximation to the residual bias after optimization over \(c\), and thereby enables a simple debiasing correction that recenters the estimator toward \(\theta^{0}\) without affecting its first-order variance.

Motivated by \eqref{eq:theta_star_expansion}, define the population bias proxy
$
b^*(c):=\lambda^*(c)H_0^{-1}a_0.
$
Its feasible analogue is
$
\hat b(c):=\hat\lambda(c)\,\hat H^{-1}\hat a,
\;
\hat a=\nabla_\theta h(\tilde\theta),
\;
\hat H=\nabla_{\theta\theta}^2\phi_n(\tilde\theta),
$
where \(\tilde\theta\) is the unconstrained estimator. The debiased optimized estimator is then
\begin{equation}
\hat\theta_{\mathrm{db}}
:=
\hat\theta(\hat c)+\hat b(\hat c).
\label{eq:theta_db_def}
\end{equation}
The sign convention in \eqref{eq:theta_db_def} follows from
$
\theta^*(c)-\theta^{0}\approx -\,b^*(c),
$
so that adding \(\hat b(\hat c)\) removes the leading \(O(\lambda)\) regularization bias.

\begin{assumption}[Bias approximation and smoothness of the debiasing map]
\label{assump:debias}\hspace{5mm}
    1. The linearized bias approximation holds uniformly on \(\mathcal C_{\mathrm{active}}\):
    
    $
    \sup_{c\in\mathcal C_{\mathrm{adm}}}
    \left\|
    \theta^*(c)-\theta^{0}+\lambda^*(c)H_0^{-1}a_0
    \right\|
    =
    o(n^{-1/2});
    $
    
\noindent 2. the maps
    $
    c\mapsto \theta^*(c),\;
    c\mapsto \lambda^*(c),\;
    c\mapsto A(c),\;
    c\mapsto a(c)
    $
    are continuously differentiable in a neighborhood of \(c^\dagger\);

\noindent 3. \(\hat H\xrightarrow{p} H_0\), \(\hat a\xrightarrow{p} a_0\), and
    $
    \sup_{c\in\mathcal C_{\mathrm{active}}}
    \|\hat b(c)-b^*(c)\|
    =
    o_p(1).
    $

\end{assumption}

\begin{theorem}[Asymptotic expansion of the debiased optimized estimator]
\label{thm:theta_db_asymp}
Under Assumptions \ref{assump:slater},
\ref{assump:smooth_cvx},
\ref{assump:regular},
\ref{assump:S_CLT},
\ref{assump:c_selection},
and \ref{assump:debias},
\begin{equation}
\sqrt n\bigl(\hat\theta_{\mathrm{db}}-\theta^{0}\bigr)
=
\sqrt n\bigl(\hat\theta(\hat c)-\theta^*(c^\dagger)\bigr)
+
\sqrt n\bigl(\hat b(\hat c)-b^*(c^\dagger)\bigr)
+
o_p(1),
\label{eq:theta_db_expansion}
\end{equation}
where
$
b^*(c):=\lambda^*(c)H_0^{-1}a_0,
\;
\hat b(c):=\hat\lambda(c)\hat H^{-1}\hat a.
$
If, in addition, \(\hat c-c^\dagger=o_p(n^{-1/2})\), \begin{equation}
\sqrt n\bigl(\hat\theta_{\mathrm{db}}-\theta^{0}\bigr)
\xrightarrow{d}
N\!\bigl(0,V_{\mathrm{db}}(c^\dagger)\bigr),
\label{eq:theta_db_clt}
\end{equation}
where \(V_{\mathrm{db}}(c^\dagger)\) is the asymptotic variance induced by the joint
first-order influence of
$
\hat\theta(c^\dagger),\;
\hat\lambda(c^\dagger),\;
\hat H,\;
\hat a.
$
\end{theorem}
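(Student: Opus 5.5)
The plan is to start from the definition $\hat\theta_{\mathrm{db}}=\hat\theta(\hat c)+\hat b(\hat c)$ and add and subtract the population pair $\theta^*(c^\dagger)$ and $b^*(c^\dagger)$:
\[
\hat\theta_{\mathrm{db}}-\theta^{0}
=
\bigl(\hat\theta(\hat c)-\theta^*(c^\dagger)\bigr)
+
\bigl(\hat b(\hat c)-b^*(c^\dagger)\bigr)
+
\bigl(\theta^*(c^\dagger)-\theta^{0}+\lambda^*(c^\dagger)H_0^{-1}a_0\bigr).
\]
The last bracket is exactly the linearization remainder of Proposition~\ref{prop:theta_star_expansion} evaluated at $c^\dagger$. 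By Theorem~\ref{thm:c_hat_consistency}, $c^\dagger$ lies in the active set. Part 1 of Assumption~\ref{assump:debias} bounds this remainder uniformly by $o(n^{-1/2})$, so after multiplying by $\sqrt n$ it contributes $o(1)$. This already gives \eqref{eq:theta_db_expansion}, and it is the purely algebraic half of the theorem.

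For the CLT, I will linearize both stochastic terms around $c^\dagger$.

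For the first term, write $\hat\theta(\hat c)-\theta^*(c^\dagger)=[\hat\theta(\hat c)-\hat\theta(c^\dagger)]+[\hat\theta(c^\dagger)-\theta^*(c^\dagger)]$.

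\begin{itemize}
\item By Theorem~\ref{thm:sample_profile}, with probability approaching one the sample profile $c\mapsto\hat\theta(c)$ is $C^1$ on $\mathcal N(c^\dagger)$. Its derivative is given by the implicit function formula $-J_{\Phi_n}^{-1}\partial_c\Phi_n$. By part 3 of Assumption~\ref{assump:sample_profile}, together with the nonsingularity conditions, this derivative converges uniformly to the population derivative, which is bounded.
\item A mean value argument then gives $\hat\theta(\hat c)-\hat\theta(c^\dagger)=O_p(|\hat c-c^\dagger|)=o_p(n^{-1/2})$, using the extra rate hypothesis.
\item The same argument applies to $\hat\lambda$.
\end{itemize}

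For the second term, decompose
\[
\hat b(\hat c)-b^*(c^\dagger)=[\hat\lambda(\hat c)-\hat\lambda(c^\dagger)]\hat H^{-1}\hat a+(\hat\lambda(c^\dagger)-\lambda^*(c^\dagger))\hat H^{-1}\hat a+\lambda^*(c^\dagger)(\hat H^{-1}\hat a-H_0^{-1}a_0).
\]
The first piece is $o_p(n^{-1/2})$ by the argument above. The second piece is handled by Slutsky, using Theorem~\ref{thm:asymp_theta_active} and part 3 of Assumption~\ref{assump:debias}.

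The third piece needs a $\sqrt n$ expansion of the unconstrained estimator $\tilde\theta$. I will take the standard M-estimation linearization $\sqrt n(\tilde\theta-\theta^0)=-H_0^{-1}\sqrt n\,\psi_n(\theta^0)+o_p(1)$, which follows from Assumptions~\ref{assump:smooth_cvx} and \ref{assump:S_CLT} applied at the unconstrained target. The delta method then gives
\[
\sqrt n(\hat H^{-1}\hat a-H_0^{-1}a_0)=D\,\sqrt n(\tilde\theta-\theta^0)+H_0^{-1}\sqrt n(\hat H-H(\tilde\theta))\text{-type terms}.
\]
Here $D$ is the derivative of $\theta\mapsto H(\theta)^{-1}\nabla h(\theta)$. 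The sampling fluctuation of the Hessian is itself an average of i.i.d.\ terms at $\tilde\theta$, so it admits an influence-function representation. I will need a mild third-derivative or Lipschitz condition, and I will state it as implicit in ``first-order influence'' of $\hat H$.

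Combining the pieces, every term is a sample average of influence functions:
\begin{itemize}
\item $-M(c^\dagger)\psi_i(\theta^*(c^\dagger))$ for $\hat\theta$;
\item $B_\lambda(c^\dagger)\psi_i(\theta^*(c^\dagger))$ for $\hat\lambda$;
\item the score at $\theta^0$, mapped through $D$, for $\tilde\theta$;
\item the centered Hessian summand for $\hat H$.
\end{itemize}
A multivariate CLT for the stacked i.i.d.\ vector then yields \eqref{eq:theta_db_clt}. Here $V_{\mathrm{db}}(c^\dagger)$ is the variance of the summed influence function, which is how the statement defines it.

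The main obstacle is this joint CLT step. Theorem~\ref{thm:asymp_theta_active} delivers marginal normality of $(\hat\theta(c),\hat\lambda(c))$ only at $\theta^*(c)$, whereas $\hat H$ and $\hat a$ are centered at $\theta^0$. I therefore plan to rederive the constrained expansion in influence-function form (linear in $\psi_n(\theta^*(c^\dagger))$) from the KKT linearization in the proof of that theorem, and then stack it with the unconstrained expansion before invoking the Cramér--Wold device. I would try this first.
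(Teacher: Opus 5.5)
Your proposal is correct. The first half, the three-term decomposition and the disposal of the bias-approximation remainder via Assumption~\ref{assump:debias}(1), is exactly the paper's Steps 1--2. The CLT half takes a different route.

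The paper inserts the \emph{population} profile at the selected tolerance, writing $\hat\theta(\hat c)-\theta^*(c^\dagger)=(\hat\theta(\hat c)-\theta^*(\hat c))+(\theta^*(\hat c)-\theta^*(c^\dagger))$, and likewise with $b^*(\hat c)$. It kills the drift term using the smoothness in Assumption~\ref{assump:debias}(2) and the rate on $\hat c$. It then simply asserts that $\sqrt n(\hat\theta(\hat c)-\theta^*(\hat c))$ is first-order equivalent to $\sqrt n(\hat\theta(c^\dagger)-\theta^*(c^\dagger))$ because the inner estimators are ``asymptotically regular'' near $c^\dagger$. That is a stochastic-equicontinuity claim it never derives.

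You instead insert the \emph{sample} profile at $c^\dagger$. You control $\hat\theta(\hat c)-\hat\theta(c^\dagger)$ and $\hat\lambda(\hat c)-\hat\lambda(c^\dagger)$ by a mean-value bound on the $C^1$ sample map of Theorem~\ref{thm:sample_profile}. This supplies exactly the justification the paper's Step 4 omits, and the population drift terms never appear. The cost is importing Assumptions~\ref{assump:profile_regularity}--\ref{assump:sample_profile} at $c^\dagger$, which are not in the theorem's hypothesis list. The paper relies on them implicitly as well, e.g.\ in Corollary~\ref{cor:theta_hat_opt_consistency}.

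The same contrast holds for joint normality. The paper's Step 5 gets joint normality of $\hat b(c^\dagger)$ and $\hat\theta(c^\dagger)$ from a bare appeal to the delta method, although Assumption~\ref{assump:debias}(3) only gives consistency of $\hat H$ and $\hat a$. Your stacking of the i.i.d.\ influence terms is what that appeal actually requires: the constrained KKT representation at $\theta^*(c^\dagger)$, the score at $\theta^{0}$ through $\tilde\theta$, and the centered Hessian summand. You also correctly flag that this needs a CLT/ULLN at $\theta^{0}$, since Assumption~\ref{assump:S_CLT} is stated only at $\theta^*(c)$. It further needs a Lipschitz-type condition on the Hessian process. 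Both are implicit in the statement's definition of $V_{\mathrm{db}}(c^\dagger)$.

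Two small corrections. First, $c^\dagger\in\mathcal C_{\mathrm{active}}$ holds by its definition as the minimizer over $\mathcal C_{\mathrm{active}}$, not by Theorem~\ref{thm:c_hat_consistency}. Second, the $O_p(1)$ bound on the sample-profile derivative $J_{\Phi_n}^{-1}(0,\dots,0,1)^{\prime}$ comes from the uniform nonsingularity in Assumption~\ref{assump:sample_profile}(2) together with the Hessian ULLN of Assumption~\ref{assump:S_CLT}. It does not come from part~3 of Assumption~\ref{assump:sample_profile}, which only controls $\Phi_n$ itself, i.e.\ gradients.
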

\noindent \textbf{Proof.} See Appendix \ref{subsec:pf_theta-db-asymp}.

\begin{remark}[Interpretation]
Debiasing is needed only in the active regime. If the global shadow price is zero and
the aggregate quadratic restriction is slack, then \(\hat\theta(\hat c)\) is asymptotically
equivalent to the unconstrained estimator, and the leading regularization bias
vanishes. In that case,
$
\hat b(\hat c)=o_p(n^{-1/2}).
$
\end{remark}

\section{Extension: bootstrap-based inference}
\label{sec:wild_bootstrap}

As an extension, we develop a bootstrap-based implementation of the inference procedure. The bootstrap is formally justified for the smooth objects in the system, namely \((\hat\theta(c),\hat\lambda(c))\) and the ISP vector. It is then used to propagate sampling variability through the plateau-selection rule and to construct an empirical distribution of the cutoff. This is especially important because the separation condition in Assumption~\ref{assump:ISP_separation} is not directly verifiable in practice and may be weak when several restrictions have similar empirical effects.

\subsection{Wild bootstrap inference for the optimized estimator and ISP}

Let \(\hat c\) denote the optimized tolerance selected by the outer optimization step, and let \((\hat\theta(\hat c),\hat\lambda(\hat c))\) be the corresponding constrained estimator and shadow price from the inner problem. Define the residual
$
\hat u_i(\hat c)=y_i-x_i^{\prime}\hat\theta(\hat c),
$
evaluated at the constrained estimator associated with the selected tolerance. For each bootstrap replication \(b=1,\dots,B\), draw i.i.d.\ multipliers \(\{w_i^{(b)}\}_{i=1}^n\) such that
$
\mathbb E[w_i]=0,\;
\mathbb E[w_i^2]=1,\;
\mathbb E\!\left[|w_i|^{2+\nu}\right]<\infty
\ \text{for some }\nu>0,
$
independently of the data. Construct pseudo-outcomes by perturbing the residual component while holding regressors fixed:
\[
y_i^{(b)}=x_i^{\prime}\hat\theta(\hat c)+\hat u_i(\hat c)\,w_i^{(b)}.
\]

For each bootstrap sample, we then re-implement the full estimation procedure. First, using \(\{(y_i^{(b)},x_i)\}_{i=1}^n\), we solve the outer optimization problem to obtain a bootstrap-selected tolerance \(\hat c^{(b)}\). Second, conditional on \(\hat c^{(b)}\), we solve the corresponding inner constrained problem to obtain \((\hat\theta^{(b)}(\hat c^{(b)}),\hat\lambda^{(b)}(\hat c^{(b)}))\):
\[
(\hat\theta^{(b)}(\hat c^{(b)}),\hat\lambda^{(b)}(\hat c^{(b)}))
\in
\arg\min_{\theta\in\mathbb R^p}\ \phi_n^{(b)}(\theta)
\quad
\text{s.t.}
\quad
h(\theta)\le \hat c^{(b)},
\]
where \(\phi_n^{(b)}(\theta)\) denotes the bootstrap loss and \(h(\theta)=g(\theta)^{\prime}\Sigma^{-1}g(\theta)\). Thus, each bootstrap replication incorporates both inner-loop estimation uncertainty and outer-loop selection uncertainty.

The ISP in replication \(b\) is evaluated at the bootstrap constrained estimator, namely
\[
\widehat{\mathrm{ISP}}^{(b)}
=
2\hat{\lambda}^{(b)}(\hat c^{(b)})\,
\mathrm{diag}\!\Big(\mathrm{sign}\big(g(\hat{\theta}^{(b)}(\hat c^{(b)}))\big)\Big)\,
\Sigma^{-1}g(\hat{\theta}^{(b)}(\hat c^{(b)})).
\]
We emphasize that ISP is evaluated at the constrained estimator within the inner problem, rather than at the debiased estimator, since otherwise \(g(\theta)\) may be driven artificially close to zero and remove the variation needed for relevance assessment.

\begin{assumption}[Bootstrap score validity]
\label{ass:wildB}
Let the bootstrap score be
\[
\psi_n^{\mathcal B}(\theta;c)=n^{-1}\sum_{i=1}^n \psi_i^{\mathcal B}(\theta;c).
\]
For each fixed \(c\in\mathcal C_{\mathrm{active}}\), conditionally in probability,
\begin{equation*}
\sqrt n\Big(\psi_n^{\mathcal B}(\hat\theta(c);c)-\psi_n(\hat\theta(c))\Big)
=
\frac{1}{\sqrt n}\sum_{i=1}^n \psi_i(\hat\theta(c))\,w_i
+
o_p^{\mathcal B}(1).
\end{equation*}
\end{assumption}

Assumption~\ref{ass:wildB} is standard for bootstrap validity of smooth \(M\)-estimators. Given the KKT-based asymptotic linear representation, bootstrap validity reduces to verifying that the bootstrap score reproduces the same first-order linear structure, which we impose as a high-level condition. In the optimized procedure, bootstrap samples are centered at the estimator evaluated at the selected tolerance \(\hat c\), and the full estimation routine is repeated in each replication.

\begin{theorem}[Wild bootstrap consistency for \((\hat\theta(c),\hat\lambda(c))\) at fixed \(c\)]
\label{thm:wild}
Under Assumptions~\ref{assump:slater}, \ref{assump:smooth_cvx}, \ref{assump:regular}, \ref{assump:S_CLT}, and \ref{ass:wildB}, for each fixed \(c\in\mathcal C_{\mathrm{active}}\), conditional on the data,
\begin{equation*}
\sqrt n
\binom{\hat\theta^{\mathcal B}(c)-\hat\theta(c)}{\hat\lambda^{\mathcal B}(c)-\hat\lambda(c)}
\ \xrightarrow{d^\mathcal B}\
\sqrt n
\begin{pmatrix}
\hat\theta(c)-\theta^*(c)\\
\hat\lambda(c)-\lambda^*(c)
\end{pmatrix}.
\end{equation*}
\end{theorem}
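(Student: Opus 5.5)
The approach is to mirror the linearization behind Theorem~\ref{thm:asymp_theta_active} in the bootstrap world. The bootstrap pair $(\hat\theta^{\mathcal B}(c),\hat\lambda^{\mathcal B}(c))$ solves the bootstrap KKT system $\Phi_n^{\mathcal B}(\theta,\lambda;c)=0$, with $h(\theta)=c$, and we centre it at the sample solution $(\hat\theta(c),\hat\lambda(c))$ rather than at the population one. Fix $c\in\mathcal C_{\mathrm{active}}$. Since $\hat\lambda(c)\xrightarrow{p}\lambda^*(c)>0$ by Theorem~\ref{thm:asymp_theta_active}, the constraint binds with probability approaching one in both the sample and bootstrap problems. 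We therefore first show that $\hat\lambda^{\mathcal B}(c)>0$ holds with bootstrap probability approaching one, in probability, so that only equality-form KKT conditions need to be handled. To obtain this, together with bootstrap consistency $(\hat\theta^{\mathcal B}(c),\hat\lambda^{\mathcal B}(c))-(\hat\theta(c),\hat\lambda(c))=o_p^{\mathcal B}(1)$, we use convexity (Assumption~\ref{assump:smooth_cvx}) and Slater's condition (Assumption~\ref{assump:slater}). Since the wild bootstrap perturbs only residuals, the bootstrap Hessian coincides with $\nabla^2_{\theta\theta}\phi_n$ (exactly in the linear-regression implementation, and up to $o_p^{\mathcal B}(1)$ in general), so Assumption~\ref{assump:S_CLT} transfers directly.

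The second step is a mean-value expansion of the bootstrap KKT system around $(\hat\theta(c),\hat\lambda(c))$, subtracting the sample KKT system, which equals zero there. The $\lambda$-equation gives $h(\hat\theta^{\mathcal B})-h(\hat\theta)=0$, hence $a(c)'(\hat\theta^{\mathcal B}-\hat\theta)=o_p^{\mathcal B}(n^{-1/2})$. The $\theta$-equation gives
\begin{equation*}
\hat A\,\sqrt n(\hat\theta^{\mathcal B}-\hat\theta)+\hat a\,\sqrt n(\hat\lambda^{\mathcal B}-\hat\lambda)=-\sqrt n\big(\psi_n^{\mathcal B}(\hat\theta(c);c)-\psi_n(\hat\theta(c))\big)+o_p^{\mathcal B}(1),
\end{equation*}
where $\hat A\to A(c)$ and $\hat a\to a(c)$ in probability, by the uniform LLN in Assumption~\ref{assump:S_CLT} and continuity of $\nabla h$ and $\nabla^2 h$. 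Inverting the bordered matrix $J_\Phi(c)$, which is nonsingular as established in the text, yields
\begin{equation*}
\sqrt n\binom{\hat\theta^{\mathcal B}-\hat\theta}{\hat\lambda^{\mathcal B}-\hat\lambda}=\binom{B_\theta(c)}{B_\lambda(c)}\frac{1}{\sqrt n}\sum_{i=1}^n\psi_i(\hat\theta(c))w_i+o_p^{\mathcal B}(1).
\end{equation*}
This uses Assumption~\ref{ass:wildB}, and it has exactly the same linear operators $B_\theta(c)$ and $B_\lambda(c)$ as in the proof of Theorem~\ref{thm:asymp_theta_active}.

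The third step is a conditional CLT. Conditional on the data, the summands $\psi_i(\hat\theta(c))w_i/\sqrt n$ are independent with mean zero, and their conditional covariance is $n^{-1}\sum_i\psi_i(\hat\theta(c))\psi_i(\hat\theta(c))'$. This converges in probability to $\Sigma_\psi(c)$ by a ULLN for the outer product of scores near $\theta^*(c)$ together with consistency of $\hat\theta(c)$, plus the observation that the sample mean score is $O_p(n^{-1/2})$. The Lindeberg condition follows from $\mathbb E|w_i|^{2+\nu}<\infty$ combined with a bounded $(2+\nu)$-moment of the score. The Lindeberg--Feller theorem conditional on the data then gives the limit $N(0,\Sigma_\psi(c))$ in probability. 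Applying the continuous mapping theorem through $(B_\theta,B_\lambda)$ reproduces the covariance blocks $V_1,V_2,V_3$ of \eqref{eq:joint_CLT}, which proves the stated distributional equivalence.

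The main obstacle is the first step: the uniform control needed to guarantee that $\hat\lambda^{\mathcal B}(c)$ stays positive and that the remainders in the bootstrap expansion are $o_p^{\mathcal B}(1)$ uniformly. For this I would invoke the local inverse-function argument of Theorem~\ref{thm:sample_profile}, applied to $\Phi_n^{\mathcal B}$ in the neighbourhood from Assumption~\ref{assump_sample_profile_placeholder_removed}.

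Correction to the last sentence: the neighbourhood is the one given by Assumption~\ref{assump:sample_profile}. That argument yields a unique local root near $(\hat\theta(c),\hat\lambda(c))$ with positive multiplier. The one ingredient that is not automatic is a bootstrap version of the uniform convergence in Assumption~\ref{assump:sample_profile}(3). I would take it to be implied by Assumption~\ref{ass:wildB} together with the unchanged Hessian, noting explicitly that this is where the high-level condition does its work.
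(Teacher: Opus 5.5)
Your fixed-$c$ linearization, with the bordered system around $(\hat\theta(c),\hat\lambda(c))$ and the operators $B_\theta(c),B_\lambda(c)$, is sound. However, the conditional CLT step has an error that breaks the $\lambda$-component. In the active regime the score is not mean-zero at the target. The sample KKT equation you use in your second step gives $\psi_n(\hat\theta(c))=-\hat\lambda(c)\nabla_\theta h(\hat\theta(c))\xrightarrow{p}-\lambda^*(c)a(c)\neq 0$; equivalently $\mathbb E[\psi_i(\theta^*(c))]=\nabla_\theta\phi(\theta^*(c))=-\lambda^*(c)a(c)$. In the regression case this says $n^{-1}\sum_i x_i\hat u_i(c)\neq 0$ for constrained residuals. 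So the sample mean score is not $O_p(n^{-1/2})$. The uncentered conditional covariance $n^{-1}\sum_i\psi_i(\hat\theta(c))\psi_i(\hat\theta(c))'$ therefore converges to $\Sigma_\psi(c)+\lambda^*(c)^2a(c)a(c)'$, not to $\Sigma_\psi(c)$. Because $M(c)a(c)=0$, the extra term is annihilated in $V_1$ and $V_2$. But $B_\lambda(c)a(c)=-1$, so it survives in the multiplier block. Writing $\psi_i(\hat\theta(c))=\tilde\psi_i-\hat\lambda(c)\nabla_\theta h(\hat\theta(c))$ with $\tilde\psi_i$ the centered score gives
\[
\sqrt n\bigl(\hat\lambda^{\mathcal B}(c)-\hat\lambda(c)\bigr)=B_\lambda(c)\frac{1}{\sqrt n}\sum_{i=1}^n\tilde\psi_i w_i+\hat\lambda(c)\frac{1}{\sqrt n}\sum_{i=1}^n w_i+o_p^{\mathcal B}(1).
\]
The two terms are conditionally uncorrelated, so the bootstrap variance is $V_3(c)+\lambda^*(c)^2$ rather than $V_3(c)$. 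As written, your argument proves the $\theta$-marginal but not the joint statement. The literal form of Assumption~\ref{assump:S_CLT}, $\sqrt n\,\psi_n(\theta^*(c))\xrightarrow{d}N(0,\Sigma_\psi(c))$, would deliver your $O_p(n^{-1/2})$ claim. It is, however, incompatible with the KKT equation imposed in the same assumption, so you cannot rely on it.

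The missing ingredient is recentring. The multipliers must act on $\psi_i(\hat\theta(c))-\psi_n(\hat\theta(c))$, so Assumption~\ref{ass:wildB} has to be imposed with the centered score. This is the convention the paper adopts in the proof of Theorem~\ref{thm:asymp_theta_active}, where $\psi_n(\theta^*(c))$ is redefined as the centered score. With that change your Lindeberg--Feller step gives exactly $\Sigma_\psi(c)$, and the rest of your argument goes through.

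The paper's proof does not resolve this point either; it simply asserts that Assumption~\ref{ass:wildB} yields a bootstrap limit $N(0,\Sigma_\psi)$. Its route also differs from yours. It stacks the outer first-order condition $\nabla_c\mathcal R_n(c)$ into an augmented $(\theta,\lambda,c)$ system and linearizes around $(\hat\theta(\hat c),\hat\lambda(\hat c),\hat c)$. That is really an argument for the optimized tolerance, and it needs a nonsingular augmented Jacobian and a CLT for the risk gradient, neither of which is among the hypotheses. Your two-block fixed-$c$ argument matches the statement more closely. Your treatment of bootstrap consistency and of $\hat\lambda^{\mathcal B}(c)>0$ via the inverse-function argument is also legitimate. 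When the bootstrap Hessian equals $\nabla^2_{\theta\theta}\phi_n$, the bootstrap and sample gradients differ by a $\theta$-free term that Assumption~\ref{ass:wildB} makes $O_p^{\mathcal B}(n^{-1/2})$. That is exactly the bootstrap analogue of Assumption~\ref{assump:sample_profile}(3) you need.
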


\noindent \textbf{Proof.} See Supplement~\ref{sec:pf_wild-bootstrap}.

\begin{theorem}[Wild bootstrap consistency for ISP at fixed \(c\)]
\label{thm:wildISP}
Under the same conditions, for each fixed \(c\in\mathcal C_{\mathrm{active}}\), conditional on the data,
\begin{equation*}
\sqrt n\big(\widehat{ISP}^{\mathcal B}(c)-\widehat{ISP}(c)\big)
\xrightarrow{d^{\mathcal B}}
N(0,\Sigma_{ISP}(c)).
\end{equation*}
\end{theorem}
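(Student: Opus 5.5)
The plan is a bootstrap delta method applied to the smooth ISP map, built on Theorem~\ref{thm:wild}. Write $\mathrm{ISP}(\theta,\lambda) := 2\lambda\,\mathrm{diag}(\mathrm{sign}(g(\theta)))\,\Sigma^{-1}g(\theta)$. Then $\widehat{ISP}(c)=\mathrm{ISP}(\hat\theta(c),\hat\lambda(c))$ and $\widehat{ISP}^{\mathcal B}(c)=\mathrm{ISP}(\hat\theta^{\mathcal B}(c),\hat\lambda^{\mathcal B}(c))$.

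\textbf{Step 1: the sign matrix is locally constant.} By Theorem~\ref{thm:asymp_theta_active}, $\hat\theta(c)\xrightarrow{p}\theta^*(c)$. By Theorem~\ref{thm:wild}, $\hat\theta^{\mathcal B}(c)-\hat\theta(c)=O_p^{\mathcal B}(n^{-1/2})$ conditionally in probability. Assumption~\ref{assump:sign_map} gives $|g_j(\theta^*(c))|\ge\kappa(c)>0$, and $g$ is continuous by Assumption~\ref{assump:regular}. Hence, with (bootstrap) probability approaching one, both $\hat\theta(c)$ and $\hat\theta^{\mathcal B}(c)$ lie in a neighborhood $\mathcal U$ of $\theta^*(c)$ on which $\mathrm{sign}(g_j(\theta))=\mathrm{sign}(g_j(\theta^*(c)))$. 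On that event, $\mathrm{ISP}(\theta,\lambda)=2\lambda S(c)\Sigma^{-1}g(\theta)$, which is continuously differentiable in $(\theta,\lambda)$. The event's complement contributes $o_p^{\mathcal B}(1)$ to the scaled difference and can be ignored.

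\textbf{Step 2: mean-value expansion between the bootstrap and sample estimators.}
\[
\sqrt n\big(\widehat{ISP}^{\mathcal B}(c)-\widehat{ISP}(c)\big)=\bar J_\theta\,\sqrt n(\hat\theta^{\mathcal B}(c)-\hat\theta(c))+\bar J_\lambda\,\sqrt n(\hat\lambda^{\mathcal B}(c)-\hat\lambda(c)),
\]
with Jacobians evaluated row-wise at intermediate points. Those points converge to $(\theta^*(c),\lambda^*(c))$ in bootstrap probability. By continuity of $G$, $\bar J_\theta\to J_\theta(c)$ and $\bar J_\lambda\to J_\lambda(c)$, the matrices of Theorem~\ref{thm:joint_ISP}. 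Since the scaled bootstrap deviations are $O_p^{\mathcal B}(1)$, replacing the random Jacobians by their limits costs $o_p^{\mathcal B}(1)$.

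\textbf{Step 3: pass to the Gaussian limit.} By Theorem~\ref{thm:wild}, combined with the Gaussian limit of Theorem~\ref{thm:asymp_theta_active}, the scaled bootstrap deviation of $(\hat\theta,\hat\lambda)$ converges conditionally to $N(0,V(c))$, where $V(c)$ is the block matrix with entries $V_1,V_2,V_3$. The continuous mapping theorem and a conditional Slutsky argument then give the limit $N(0,[J_\theta\ J_\lambda]V(c)[J_\theta\ J_\lambda]')$. Expanding the quadratic form yields exactly $\Sigma_{ISP}(c)$.

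\textbf{Main obstacle.} The delicate point is making Step 1 rigorous conditionally on the data. Sign-switching of $g_j$ near zero would destroy differentiability, and the indicator of leaving $\mathcal U$ must be shown to vanish in bootstrap probability, in probability, rather than only unconditionally. I would handle this with Markov's inequality applied to $\Pr^{\mathcal B}(\hat\theta^{\mathcal B}(c)\notin\mathcal U)$, using conditional tightness of $\sqrt n(\hat\theta^{\mathcal B}(c)-\hat\theta(c))$. A secondary subtlety is that $\hat\lambda^{\mathcal B}(c)>0$ must hold with high probability, so that the bootstrap KKT system remains in the active regime. This follows from $\lambda^*(c)>0$ and consistency, via Theorem~\ref{thm:sample_profile} applied to the bootstrap loss.
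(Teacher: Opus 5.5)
Your proposal is correct and follows essentially the same route as the paper: local constancy of the sign matrix under Assumption~\ref{assump:sign_map}, then a delta-method expansion of the smooth map $(\theta,\lambda)\mapsto 2\lambda S(c)\Sigma^{-1}g(\theta)$ driven by the bootstrap consistency in Theorem~\ref{thm:wild}, with the quadratic form $[J_\theta\ J_\lambda]V(c)[J_\theta\ J_\lambda]'$ reproducing $\Sigma_{ISP}(c)$. The one difference is that the paper's written proof also carries a $J_c\sqrt n(\hat c^{\mathcal B}-\hat c)$ term and extra covariance blocks from its optimized-tolerance augmentation; these do not belong in the fixed-$c$ statement, and $J_c=0$ anyway for a map with no direct dependence on $c$, so your fixed-$c$ argument (with its more careful conditional treatment of sign stability for $\hat\theta^{\mathcal B}(c)$) lands on the stated limit more cleanly.
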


\noindent \textbf{Proof.} See Supplement~\ref{sec:pf_wildISP}.

Theorems~\ref{thm:wild}--\ref{thm:wildISP} justify the bootstrap for the smooth inner-level objects at a fixed tolerance \(c\). In practice, however, we implement the bootstrap after the optimal tolerance \(\hat c\) has been selected and then re-run the full outer-inner estimation procedure within each bootstrap replication. This yields bootstrap draws \((\hat c^{(b)},\hat\theta^{(b)}(\hat c^{(b)}),\hat\lambda^{(b)}(\hat c^{(b)}),\widehat{ISP}^{(b)})\), which incorporate both selection uncertainty and estimation uncertainty. Accordingly, the fixed-\(c\) bootstrap theory serves as the inner-level justification, while the practical algorithm propagates this uncertainty through the optimized choice of \(\hat c\).
\subsection{Bootstrap-based plateau rule}

As discussed above, bootstrap implementation of the plateau rule is particularly important because the separation condition in Assumption~\ref{assump:ISP_separation} is not directly verifiable in practice and may be weak when several restrictions have similar empirical effects. In such cases, small perturbations in the data can alter the ordering of ISP magnitudes and shift the location of the cutoff. The bootstrap addresses this issue by repeatedly perturbing the sample and recomputing the entire plateau-selection procedure, thereby revealing the sensitivity of the cutoff to sampling variability. To this end, we apply the plateau rule within each bootstrap replication and use the resulting distribution to characterize cutoff uncertainty.

For each bootstrap replication \(b=1,\dots,B\), let
$
|\widehat{ISP}^{(b)}_{(1)}(c)| \le \cdots \le |\widehat{ISP}^{(b)}_{(q)}(c)|
$
denote the ordered bootstrap ISP magnitudes. Repeating the same structural break and homogeneity-screening procedure yields a bootstrap cutoff
$
\hat m^{(b)}(c).
$
The collection
$
\bigl\{\hat m^{(b)}(c)\bigr\}_{b=1}^B
$
thus defines an empirical distribution of plateau cutoffs.

This bootstrap distribution serves as the primary object of inference for plateau determination. It provides an operational measure of empirical separability in the ISP sequence: when ISP magnitudes are well separated, the bootstrap cutoffs concentrate around a stable value; when separation is weak, the distribution of \(\hat m^{(b)}(c)\) becomes diffuse, indicating ambiguity in the plateau boundary. In this way, the bootstrap translates an otherwise untestable separation condition into an observable pattern of variability.

In practice, we summarize the bootstrap distribution by its mean and standard deviation,
\[
\bar m_B(c)=\frac{1}{B}\sum_{b=1}^B \hat m^{(b)}(c),
\qquad
s_B(c)=\left(\frac{1}{B-1}\sum_{b=1}^B\big(\hat m^{(b)}(c)-\bar m_B(c)\big)^2\right)^{1/2},
\]
and use these summaries to construct an uncertainty-adjusted cutoff. In particular, we define the uncertainty-adjusted plateau boundary as \(\bar m_B(c) + s_B(c)\), which expands the plateau when cutoff selection is unstable across bootstrap replications. This provides a conservative and data-driven rule for separating noise from signal.

Accordingly, our practical inference is based on the distribution of \(\hat m^{(b)}(c)\) rather than solely on the point estimate \(\hat m(c)\). While the consistency result for \(\hat m(c)\) remains a useful theoretical benchmark under Assumption~\ref{ass:plateau_sep}, the bootstrap cutoff distribution offers a more relevant and robust tool for applied implementation.

\section{Monte Carlo simulation}
In this section, we study the finite-sample performance of our proposed method through Monte Carlo simulations. We set the number of iterations to 1{,}000 and the sample size to $n = 1{,}000$. The parameter vector has dimension $p+1=11$, consisting of an intercept and $p=10$ slope coefficients.\footnote{The intercept is included to capture the baseline level of the outcome variable but is excluded from the shrinkage restriction, as penalizing it lacks substantive interpretation and would shift the model’s location rather than regulate structural relationships among slope parameters.} 

\paragraph{DGP.} Let $X = [\mathbf{1}, X_{\text{raw}}] \in \mathbb{R}^{n \times (p+1)}$ denote the regressor matrix, where $\mathbf{1}$ is an $n$-vector of ones and $X_{\text{raw}} = Z \Sigma_X^{1/2} \in \mathbb{R}^{n \times p}$ contains the slope covariates. Here, $Z \in \mathbb{R}^{n \times p}$ has i.i.d.\ standard normal entries, $Z_{ij} \overset{\text{i.i.d.}}{\sim} \mathcal{N}(0,1)$, and $\Sigma_X \in \mathbb{R}^{p \times p}$ follows an AR(1) structure, $(\Sigma_X)_{jl} = \rho^{|j-l|}$ for $\rho \in (0,1)$, inducing cross-covariate dependence. Throughout, we set $\rho = 0.8$ to generate strong correlation among the $p=10$ slope variables.

The outcome is generated as $y = X \theta^{0} + \varepsilon$, where $\theta^{0} \in \mathbb{R}^{p+1}$ is the true parameter vector and $\varepsilon \sim \mathcal{N}(0, \sigma_\varepsilon^{2} I_n)$. The disturbance variance is calibrated to achieve a target signal-to-noise ratio of approximately one, $\mathrm{Var}(X\theta^{0}) / \mathrm{Var}(\varepsilon) \approx 1$, corresponding to $R^{2} \approx 0.5$, so that the signal is dominant but the noise remains non-negligible.

\paragraph{Scenarios and parameters.}
We consider the least squares loss $\phi_n(\theta) = \frac{1}{2n}\,\lVert y - X\theta \rVert^2$ and study three cases. In Case 1, only structural restrictions are imposed. In Cases 2 and 3, we additionally impose the restriction $\theta_1 + \theta_2 + \theta_3 + \theta_4 = 0$, which is correctly specified in Case 2 and misspecified in Case 3.

The parameter vector $\theta \in \mathbb{R}^{11}$ consists of an intercept and ten slope coefficients. In Case 1,
$
\theta^{0}_{\text{Case 1}} = (0.1, 0.3, 0, -0.5, 0, 0.3, 0, 0.4, 0, -0.2, 0)^{\prime}.
$
In Case 2,
$
\theta^{0}_{\text{Case 2}} = (0.1, 0.3, 0.2, -0.5, 0, 0.3, 0, 0.4, 0, -0.2, 0)^{\prime},
$
so that the imposed restriction holds at the population level, as
$
\theta_1^{0} + \theta_2^{0} + \theta_3^{0} + \theta_4^{0}
= 0.3 + 0.2 - 0.5 + 0.0 = 0.
$
In Case 3,
$
\theta^{0}_{\text{Case 3}} = (0.1, 0.3, 0, -0.5, 0, 0.3, 0, 0.4, 0, -0.2, 0)^{\prime},
$
for which the restriction is violated, since
$\theta_1^{0} + \theta_2^{0} + \theta_3^{0} + \theta_4^{0}
= 0.3 + 0.0 - 0.5 + 0.0 = -0.2 \neq 0.
$

We adopt the credibility matrix $\Sigma = I$, assigning equal weight to all restrictions and reflecting the absence of prior preference or differential confidence, thereby capturing model uncertainty. This specification, combined with restrictions imposed only on the structural parameters, coincides with the standard ridge formulation. The tolerance parameter is constrained to $c \in (0, c_0]$ with $c_0 = 1$, and is selected by minimizing the Stein-type risk criterion.

\paragraph{Results.}
The results show several desirable properties of our proposed framework. First, the estimator consistently recovers the true data-generating parameters under model uncertainty, regardless of whether the imposed restrictions hold, as shown in Table~\ref{tab:estimates}. This reflects that the regularization bias induced by potentially misspecified restrictions is effectively controlled through the optimized tolerance and subsequent debiasing step. Second, inference achieves the intended nominal coverage, with empirical coverage rates close to the 95\% level and comparable to those obtained from bootstrap-based procedures (Table~\ref{tab:nominal-coverage-stat}). Third, the plateau is clearly identified by the shaded region in Figure~\ref{fig:plateau}, which corresponds to the uncertainty-adjusted cutoff. The restrictions classified within this plateau are marked with * in Table~\ref{tab:estimates}. Notably, these restrictions are associated with true parameter values equal to zero and exhibit little empirical relevance for model fit. When an additional restriction holds at the population level (Case 2), it is correctly included in the plateau. In contrast, when the restriction is violated at the population level (Case 3), it falls outside the plateau, indicating that relaxing the restriction improves model fit. Thus, the plateau rule effectively distinguishes between empirically irrelevant and relevant restrictions.

\begin{table}[!htbp] \centering
\caption{Estimation results} \label{tab:estimates}
\resizebox{0.93\textwidth}{!}{
\begin{tabular}{rcccccc} \hline
\multicolumn{1}{l}{}             & \multicolumn{2}{c}{Case 1}         & \multicolumn{2}{c}{Case 2}          & \multicolumn{2}{c}{Case 3}           \\ \hline
Parameters $(j=1,\dots,10)$      & \multicolumn{6}{c}{$\theta_{1}=0,\theta_{2}=0,\dots,\theta_{10}=0$}                                             \\
Additional $(j=11)$              & \multicolumn{6}{c}{$\theta_{1}+\theta_{2}+\theta_{3}+\theta_{4}=0$}                                             \\ \cline{2-7}
\multicolumn{1}{l}{}             & \multicolumn{2}{c}{$\textit{N/A}$} & \multicolumn{2}{c}{$\textit{True}$} & \multicolumn{2}{c}{$\textit{False}$} \\ \hline
$\hat{\lambda}$                  & \multicolumn{2}{c}{0.0034}   & \multicolumn{2}{c}{0.0037}   & \multicolumn{2}{c}{0.0033}   \\
\multicolumn{1}{l}{}             & \multicolumn{2}{c}{(0.0009)} & \multicolumn{2}{c}{(0.0010)} & \multicolumn{2}{c}{(0.0009)} \\ \hline
True parameters                  & Estimates      & $ISP_{j}$   & Estimates      & $ISP_{j}$   & Estimates      & $ISP_{j}$   \\ \hline
$\theta^{0}_{0}$                 & 0.1002         &             & 0.1002          &            & 0.1002          &            \\
                                 & (0.0163)       &             & (0.0185)        &            & (0.0163)        &            \\
                                 & {[}0.0002{]}   &             & {[}0.0002{]}    &            & {[}0.0002{]}    &            \\
$\theta^{0}_{1}$                 & 0.3018         & 0.0021      & 0.3020          & 0.0022     & 0.3018          & 0.0020     \\
                                 & (0.0283)       & (0.0006)    & (0.0321)        & (0.0007)   & (0.0283)        & (0.0006)   \\
                                 & {[}0.0018{]}   &             & {[}0.0020{]}     &            & {[}0.0018{]}    &            \\
$\theta^{0}_{2}$                 & -0.0030        & 0.0002      & 0.1967          & 0.0014     & -0.0030         & 0.0002     \\
                                 & (0.0356)       & (0.0002)    & (0.0404)        & (0.0004)   & (0.0356)        & (0.0002)   \\
                                 & {[}0.0030{]}    & *           & {[}0.0033{]}    &            & {[}0.0030{]}     & *          \\
$\theta^{0}_{3}$                 & -0.4987        & 0.0034      & -0.4985         & 0.0036     & -0.4987         & 0.0032     \\
                                 & (0.0346)       & (0.0008)    & (0.0393)        & (0.0008)   & (0.0347)        & (0.0008)   \\
                                 & {[}0.0013{]}   &             & {[}0.0015{]}    &            & {[}0.0013{]}    &            \\
$\theta^{0}_{4}$                 & 0.0005         & 0.0002      & 0.0005          & 0.0002     & 0.0004          & 0.0002     \\
                                 & (0.0338)       & (0.0001)    & (0.0384)        & (0.0002)   & (0.0338)        & (0.0001)   \\
                                 & {[}0.0005{]}   & *           & {[}0.0005{]}    & *          & {[}0.0004{]}    & *          \\
$\theta^{0}_{5}$                 & 0.2992         & 0.0020      & 0.2991          & 0.0021     & 0.2992          & 0.0019     \\
                                 & (0.0350)       & (0.0005)    & (0.0397)        & (0.0006)   & (0.0350)        & (0.0005)   \\
                                 & {[}0.0008{]}   &             & {[}0.0009{]}    &            & {[}0.0008{]}    &            \\
$\theta^{0}_{6}$                 & 0.0005         & 0.0002      & 0.0005          & 0.0002     & 0.0005          & 0.0002     \\
                                 & (0.0354)       & (0.0002)    & (0.0402)        & (0.0002)   & (0.0354)        & (0.0002)   \\
                                 & {[}0.0005{]}   & *           & {[}0.0005{]}    & *          & {[}0.0005{]}    & *          \\
$\theta^{0}_{7}$                 & 0.4008         & 0.0027      & 0.4009          & 0.0029     & 0.4008          & 0.0026     \\
                                 & (0.0365)       & (0.0006)    & (0.0414)        & (0.0007)   & (0.0365)        & (0.0006)   \\
                                 & {[}0.0008{]}   &             & {[}0.0009{]}    &            & {[}0.0008{]}    &            \\
$\theta^{0}_{8}$                 & -0.0009        & 0.0002      & -0.0010         & 0.0002     & -0.0009         & 0.0002     \\
                                 & (0.0353)       & (0.0002)    & (0.0401)        & (0.0002)   & (0.0353)        & (0.0002)   \\
                                 & {[}0.0009{]}   & *           & {[}0.0010{]}     & *          & {[}0.0009{]}    & *          \\
$\theta^{0}_{9}$                 & -0.2010        & 0.0014      & -0.2011         & 0.0014     & -0.2010         & 0.0013     \\
                                 & (0.0348)       & (0.0004)    & (0.0395)        & (0.0004)   & (0.0348)        & (0.0004)   \\
                                 & {[}0.0010{]}    &             & {[}0.0011{]}    &            & {[}0.0010{]}     &            \\
$\theta^{0}_{10}$                & 0.0006         & 0.0002      & 0.0006          & 0.0002     & 0.0006          & 0.0001     \\
                                 & (0.0275)       & (0.0001)    & (0.0312)        & (0.0001)   & (0.0275)        & (0.0001)   \\
                                 & {[}0.0006{]}   & *           & {[}0.0006{]}    & *          & {[}0.0006{]}    & *          \\
$\theta^{0}_{1}+\theta^{0}_{2}$  & \multicolumn{2}{c}{}         &                 & 0.0002     &                 & 0.0013     \\
$+\theta^{0}_{3}+\theta^{0}_{4}$ &                & \textit{}   &                 & (0.0002)   &                 & (0.0003)   \\
                                 &                & \textit{}   &                 & *          &                 &            \\ \hline\hline
$R^{2}$                          & \multicolumn{2}{c}{0.5057}   & \multicolumn{2}{c}{0.5055}   & \multicolumn{2}{c}{0.5057}   \\ \hline
$\hat{c}$                        & \multicolumn{2}{c}{0.6230}   & \multicolumn{2}{c}{0.6621}   & \multicolumn{2}{c}{0.6620}   \\
                                 & \multicolumn{2}{c}{(0.0545)} & \multicolumn{2}{c}{(0.0655)} & \multicolumn{2}{c}{(0.0579)} \\
Risk estimate  & \multicolumn{2}{c}{0.0093}   & \multicolumn{2}{c}{0.0118}   & \multicolumn{2}{c}{0.0093}   \\
                 & \multicolumn{2}{c}{(0.0006)} & \multicolumn{2}{c}{(0.0008)} & \multicolumn{2}{c}{(0.0006)} \\
Bias proxy       & \multicolumn{2}{c}{0.0002}   & \multicolumn{2}{c}{0.0003}   & \multicolumn{2}{c}{0.0002}   \\
                 & \multicolumn{2}{c}{(0.0001)} & \multicolumn{2}{c}{(0.0001)} & \multicolumn{2}{c}{(0.0001)} \\
Variance proxy & \multicolumn{2}{c}{0.0091}   & \multicolumn{2}{c}{0.0115}   & \multicolumn{2}{c}{0.0091}   \\
                 & \multicolumn{2}{c}{(0.0006)} & \multicolumn{2}{c}{(0.0008)} & \multicolumn{2}{c}{(0.0006)} \\ \hline
\multicolumn{7}{l}{%
\begin{minipage}{17cm}\vspace{1mm}
    \footnotesize Note: Results are based on 1,000 Monte Carlo iterations. Standard deviations are reported in parentheses, and average absolute biases are shown in brackets. An asterisk (*) indicates restrictions classified within the plateau—i.e., those falling below the uncertainty-adjusted cutoff (mean + sd)—thereby accounting for sampling variability in the estimated cutoff.
\end{minipage}}\\
\end{tabular}}
\end{table} 

\begin{table}[!htbp]
\centering
\caption{Summary statistics for nominal coverage across cases}
\label{tab:nominal-coverage-stat}
\begin{tabular}{rcccccc} \hline
\multicolumn{1}{l}{} & \multicolumn{2}{c}{Case 1}         & \multicolumn{2}{c}{Case 2}          & \multicolumn{2}{c}{Case 3}           \\ \hline
Parameters $(j=1,\dots,10)$                          & \multicolumn{6}{c}{$\theta_{1}=0,\theta_{2}=0,\dots,\theta_{10}=0$}                                             \\
Additional $(j=11)$                                 & \multicolumn{6}{c}{$\theta_{1}+\theta_{2}+\theta_{3}+\theta_{4}=0$}                                             \\ \cline{2-7}
                                                    & \multicolumn{2}{c}{$\textit{N/A}$} & \multicolumn{2}{c}{$\textit{True}$} & \multicolumn{2}{c}{$\textit{False}$} \\ \cline{2-7}
\multicolumn{1}{c}{} & Analyt. & Boot.   & Analyt. & Boot.   & Analyt. & Boot.   \\ \hline
Minimum & 92.50\%  & 93.30\% & 92.50\% & 93.30\% & 92.50\% & 93.30\% \\
Q1      & 93.20\%  & 93.60\% & 93.15\% & 93.60\% & 93.40\% & 93.60\% \\
Median  & 93.60\% & 94.00\% & 93.60\% & 94.00\% & 93.50\% & 94.00\% \\
Mean    & 93.74\%  & 94.05\% & 93.77\% & 94.06\% & 93.78\% & 94.05\% \\
Q3      & 94.50\%  & 94.55\% & 94.50\% & 94.55\% & 94.45\% & 94.55\% \\
Maximum & 94.90\%  & 94.80\% & 94.90\% & 94.80\% & 94.90\% & 94.80\% \\ \hline
\multicolumn{7}{l}{%
\begin{minipage}{15cm}\vspace{1mm}
    \scriptsize Note: The results are based on 1,000 iterations, with the nominal confidence level set at 95\%. \textit{Analyt.} refers to nominal coverage using the analytic variance formulas derived in Theorem~\ref{thm:asymp_theta_active}, and \textit{Boot.} denotes nominal coverage based on the wild bootstrap inference established in Theorem~\ref{thm:wild}. Case 1 corresponds to ridge regression as a special case of our framework, in which restrictions are only imposed on the parameters. Case 2 represents our method with an additional restriction that is correctly specified and holds in the population. Case 3 represents our method with the same additional restriction imposed, but the restriction is misspecified and does not hold in the population. Q1 and Q3 denote the first and third quantiles, respectively.
\end{minipage}}\\
\end{tabular}
\end{table}

\begin{figure}
    \caption{Sorted ISP values (ascending) with plateau region up to the cutoff}
    \label{fig:plateau}
    \centering
    \includegraphics[width=\linewidth]{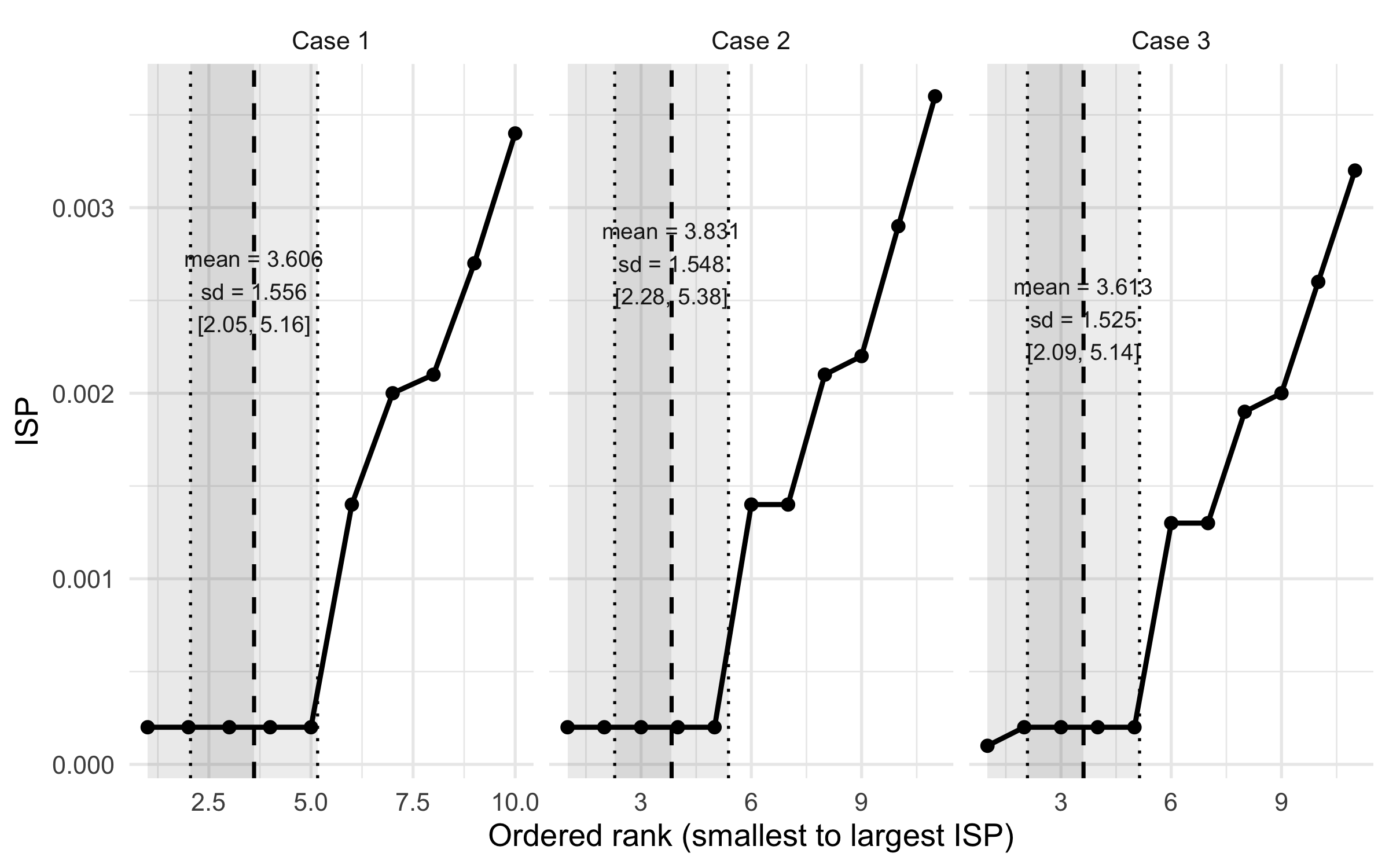}
    \begin{minipage}{15cm}\vspace{1mm}
    \scriptsize Note: The results are based on 1,000 iterations. ISPs are sorted in ascending order. Bold dashed: mean cutoff; dotted: mean cutoff ± standard deviation; shaded: uncertainty-adjusted plateau.
\end{minipage}\\
\end{figure}

\section{Empirical Application}

In this section, we apply our proposed estimation method to the textbook Solow growth model \citep{Solow1956, Swan1956}. The canonical empirical specification includes the logarithm of the saving rate and the logarithm of the effective population growth rate, capturing the accumulation and dilution channels of capital in steady state.

The Solow model implies a cross-parameter restriction that the coefficients on saving and effective population growth are equal in magnitude but opposite in sign. Specifically,
\begin{equation}
\ln y_i
=
c
+ \frac{\alpha}{1-\alpha}\ln s_i
- \frac{\alpha}{1-\alpha}\ln(n_i+g+\delta)
+ \varepsilon_i,
\end{equation}
which yields the restriction
$
\theta_s + \theta_n = 0.
$ While this restriction captures a first-order implication of the theory, it relies on strong assumptions such as homogeneity and common adjustment dynamics. In practice, growth processes may exhibit nonlinearities and heterogeneous dynamics \citep{DurlaufandJohnson1995, Temple1999, Quah1996, DurlaufandQuah1999}. To account for such model uncertainty, we consider structured nonlinear relaxations of the Solow relation:
\begin{equation}
g(\theta) \equiv \theta_s - (-\theta_n)^{\tau} = 0, \quad \tau=2,3,
\end{equation}
which allow for curvature while preserving qualitative predictions. The inclusion of polynomial terms—particularly up to the cubic case—is intended for illustrative purposes, providing a simple and tractable way to capture potential nonlinearities rather than a comprehensive specification of the underlying growth process.

In specifying the set of candidate restrictions, we focus on economically meaningful hypotheses. In particular, we do not impose restrictions such as $\theta_s=0$ or $\theta_n=0$, which are inconsistent with the qualitative implications of the Solow model. Instead, we evaluate the empirical relevance of the linear Solow restriction and its nonlinear relaxations. To reflect equal prior credibility across these restrictions, we set $\Sigma = I_3$.

The estimation results are reported in Table \ref{tab:empirics_results}. Our approach imposes \emph{soft} rather than exact restrictions, allowing the data to determine the extent to which each restriction is enforced. The selected tolerance parameter is $\hat{c} = 379.7468$, with a Stein-type risk estimate of $0.0711$, decomposed into a negligible bias proxy ($0.0003$) and a dominant variance component ($0.0708$). Consistent with this, the estimated coefficients coincide with the unrestricted benchmark, and the $R^2$ remains unchanged at $0.6272$, indicating no gain from relaxing the restriction.

At the same time, the Wald test rejects the null hypothesis $\theta_s + \theta_n = 0$ (p-value $=0.0380$), indicating that the restriction is statistically invalid. However, the ISP-based plateau rule classifies it as empirically irrelevant. This illustrates a key distinction: statistical validity concerns whether a restriction holds exactly, whereas empirical relevance concerns whether its violation matters for estimation. In this application, deviations from the Solow restriction occur in a direction that is effectively flat for the objective function, so relaxing it does not improve model fit.

In contrast, the cubic restriction $\theta_s - (-\theta_n)^3 = 0$ lies outside the plateau, indicating that it is relatively more informative than the other restrictions. This suggests that nonlinear features capture aspects of the data that are not reflected in the linear Solow restriction. 

Notably, these findings highlight that empirical relevance is restriction-specific. A theoretically motivated restriction may be statistically invalid yet practically inconsequential, while alternative structured relaxations can provide relatively more informative features. Our method adapts accordingly, imposing restrictions only when they improve the bias--variance tradeoff and otherwise defaulting to the unrestricted benchmark without introducing unnecessary distortion.

\begin{table}
\centering
\caption{Results for the textbook and cubic specifications of the Solow model} \label{tab:empirics_results}
\begin{tabular}{rccc} \hline
\multicolumn{1}{l}{}             & \multicolumn{2}{c}{Textbook Solow}                      & \multirow{1}{*}{Our estimation} \\ \cline{2-3}
                                 & Unrestricted               & Restricted                 & (Debiased)                                \\ \hline
constant                         & 4.651          & 8.3748        & 4.651                \\
                                 & (0.0010)       & (0.0000)      & (1.4462)              \\
$\ln s$                          & 1.2756         & 1.3791        & 1.2756                \\
                                 & (0.0000)       & (0.0000)      & (0.1064)              \\
$\ln(n+0.05)$                    & -2.7087        & -1.3791       & -2.7088               \\
                                 & (0.0000)       & (0.0000)      & (0.5726)              \\ \hline \hline
Restrictions                     & \multicolumn{2}{c}{Wald}       & Plateau           \\ \hline
$\theta_{s}+\theta_{n}=0$        & \multicolumn{2}{c}{4.4273}      &  *                     \\
                                 & \multicolumn{2}{c}{(0.0380)}   &                       \\
$\theta_{s}-(-\theta_{n})^{2}=0$ &                &               & *                      \\
$\theta_{s}-(-\theta_{n})^{3}=0$ &                &               &                      \\ \hline \hline
$R^2$                            & 0.6272         & 0.6084        & 0.6272                \\ \hline
$\hat{c}$ ($c_{0}=500$)          & \multicolumn{3}{c}{379.7468}                                \\
Risk estimate                    & \multicolumn{3}{c}{0.0711}                             \\
Bias proxy                       & \multicolumn{3}{c}{0.0003}                             \\
Variance proxy                   & \multicolumn{3}{c}{0.0708}   \\ \hline       
\multicolumn{4}{l}{%
\begin{minipage}{12cm}\vspace{1mm}
    \scriptsize Note: Standard errors are reported in parentheses. $\ln s$ denotes the logarithm of the savings rate, and $\ln(n+g+\delta)$ denotes the logarithm of population growth augmented by exogenous technological progress and depreciation, which is set to 0.05. In the \textit{Textbook Solow} column, “Unrestricted” refers to estimation without imposing the theoretical Solow restriction, while “Restricted” imposes the steady-state condition that the coefficients on $\ln s$ and $\ln(n+g+\delta)$ sum to zero. The reported $R^{2}$ measures the proportion of variation in the dependent variable explained by the model. The risk estimate refers to the analytical Stein-type risk proxy used to select the tolerance parameter, which decomposes into a bias proxy capturing regularization-induced distortion and a variance proxy reflecting estimation uncertainty.
\end{minipage}}\\
\end{tabular}
\end{table}

\section{Conclusion}
This paper develops a framework for estimation and empirical relevance inference under model uncertainty with multiple restrictions. By combining Lagrangian-constrained optimization with a data-driven tolerance parameter, the estimator adapts to the bias--variance tradeoff, while a debiasing step restores valid inference. The proposed individual shadow prices (ISP) and plateau rule provide a practical way to assess which restrictions matter for improving model fit.

The framework can be naturally interpreted within a semiparametric perspective. The object of interest is a finite-dimensional target component, while potentially complex nuisance components are treated as given. This structure is compatible with modern machine learning methods, which can be used to flexibly estimate nuisance components without affecting inference on the target parameter.

Our theoretical results establish consistency and asymptotic normality, and the empirical application illustrates how the method distinguishes between statistical validity and empirical relevance. In particular, restrictions that are rejected by standard tests need not be useful for improving estimation, while alternative structured relaxations can capture relatively more informative features.

An important direction for future research is to extend this framework to high-dimensional settings by integrating it with double/debiased machine learning (DML) \citep{Chernozhukovetal2018}. In such settings, nuisance components can be flexibly estimated using machine learning methods, while orthogonalization and cross-fitting ensure that inference on the finite-dimensional target remains valid. Embedding our approach within a DML framework would allow theory-driven restrictions to be imposed on the target component while accommodating high-dimensional controls, thereby preserving valid inference under model uncertainty. More broadly, this integration would enable scalable evaluation of a large set of candidate restrictions in complex, high-dimensional environments.

\subsection*{Acknowledgement}
We thank Zheng Fang, Florian Gunsilius (Emory University), Nikolay Gospodinov (Federal Reserve Bank of Atlanta), David Drukker (Clemson University), and the audience at the Georgia Econometrics Workshop for helpful comments. All remaining errors are our own.

\clearpage 
\bibliographystyle{apalike} 
\bibliography{main} 

\newpage 
\appendix 
\setcounter{equation}{0} \renewcommand{\theequation}{A.\arabic{equation}} \renewcommand{\thesection}{A} \renewcommand{\thesubsection}{A.\arabic{subsection}}

\section*{Appendix}

\subsection{Proof for Lemma \ref{lem:population_profile}}
\label{subsec:pf_lem-population-profile}

Consider the map
$
(\theta,\lambda,c)\mapsto \Phi(\theta,\lambda;c)
$
from \(\mathbb R^p\times\mathbb R\times\mathbb R\) into \(\mathbb R^{p+1}\).
By Assumption~\ref{assump:profile_regularity}(1), \(\Phi\) is continuously differentiable
in a neighborhood of
$
(\theta^*(c^{\dagger}),\lambda^*(c^{\dagger}),c^{\dagger}).
$
By Assumption~\ref{assump:profile_regularity}(2),
$
\Phi(\theta^*(c^{\dagger}),\lambda^*(c^{\dagger});c^{\dagger})=0.
$
By Assumption~\ref{assump:profile_regularity}(3), the Jacobian of \(\Phi\) with respect
to \((\theta,\lambda)\) at this point is nonsingular. Therefore, by the implicit
function theorem, there exists an open neighborhood \(\mathcal N(c^{\dagger})\) of
\(c^{\dagger}\) and unique continuously differentiable functions
$
c\mapsto (\theta^*(c),\lambda^*(c))
$
such that
$
\Phi(\theta^*(c),\lambda^*(c);c)=0
$
for all \(c\in\mathcal N(c^{\dagger})\). The second block of \(\Phi(\theta^*(c),\lambda^*(c);c)=0\) gives
$
h(\theta^*(c))=c.
$
Since \(\lambda^*(c^{\dagger})>0\) and \(\lambda^*(c)\) is continuous in \(c\), possibly
after shrinking \(\mathcal N(c^{\dagger})\), we obtain
$
\lambda^*(c)>0
\;\text{for all } c\in\mathcal N(c^{\dagger}).
$
\qed

\subsection{Proof for Theorem \ref{thm:sample_profile}}
\label{subsec:pf_thm-sample-profile}

By Lemma~\ref{lem:population_profile}, there exists a neighborhood
\(\mathcal N(c^{\dagger})\) on which the population profiling map
$
c\mapsto (\theta^*(c),\lambda^*(c))
$
is uniquely defined and continuously differentiable, with \(\lambda^*(c)>0\). By Assumption~\ref{assump:sample_profile}(3), the sample KKT map \(\Phi_n\) is
uniformly close to the population KKT map \(\Phi\) on a neighborhood of the graph of
this solution map. By Assumption~\ref{assump:sample_profile}(2), the Jacobian of
\(\Phi_n\) with respect to \((\theta,\lambda)\) is nonsingular in that neighborhood
with probability approaching one. Hence, applying the implicit function theorem to
\(\Phi_n\) pointwise in \(c\), we obtain that, with probability approaching one, for
each \(c\in\mathcal N(c^{\dagger})\), there exists a unique local solution
$
(\hat\theta(c),\hat\lambda(c))
$
to the equation
$
\Phi_n(\theta,\lambda;c)=0.
$
Because \(\Phi_n\) is continuously differentiable, this solution is continuously
differentiable in \(c\). The second block of \(\Phi_n(\hat\theta(c),\hat\lambda(c);c)=0\) implies
$
h(\hat\theta(c))=c.
$
Since \((\hat\theta(c),\hat\lambda(c))\) converges locally to
\((\theta^*(c),\lambda^*(c))\) and \(\lambda^*(c)>0\) on \(\mathcal N(c^{\dagger})\),
possibly after shrinking the neighborhood, we obtain
$
\hat\lambda(c)>0
$
with probability approaching one, uniformly for \(c\in\mathcal N(c^{\dagger})\).
\qed

\subsection{Proof for Corollary \ref{cor:profiling_valid}}
\label{subsec:pf_cor-profiling-valid}

This follows immediately from Theorem~\ref{thm:sample_profile}. The existence and
uniqueness of \((\hat\theta(c),\hat\lambda(c))\) for each \(c\) guarantee that profiling
is well defined on the active region, while continuous differentiability in \(c\) ensures that the profiled criterion
$
c\mapsto \widehat R_{\mathrm{lin}}(c)
$
is itself locally well behaved.
\qed

\subsection{Proof of Theorem \ref{thm:asymp_theta_active}}
\label{subsec:derivation_asymp}

Fix any \(c\in\mathcal C_{\mathrm{active}}\) and let
$\hat{\theta}(c)\in\arg\min_{\theta}\ \phi_n(\theta)
\quad\text{s.t.}\quad
h(\theta)\equiv g(\theta)^{\prime}\Sigma^{-1}g(\theta)\le c,
$
with sample Lagrangian
$
\mathcal{L}_n(\theta,\lambda;c)
\equiv \phi_n(\theta)+\lambda\bigl[h(\theta)-c\bigr],
\; \lambda\ge 0.
$
Let \((\theta^*(c),\lambda^*(c))\) denote the population KKT solution, and because \(c\in\mathcal C_{\mathrm{active}}\),
$
h(\theta^*(c))=c,\; \lambda^*(c)>0.
$

Define the score and Hessian objects
$
\psi_n(\theta)\equiv \nabla_\theta \phi_n(\theta)
=\frac{1}{n}\sum_{i=1}^n \psi(Z_i;\theta),
\;
H(\theta)\equiv \nabla^2_{\theta\theta}\phi(\theta),$
and set
$
a(c) \equiv \nabla_\theta h(\theta^*(c)),
\;
A(c) \equiv \nabla^2_{\theta\theta}\mathcal{L}(\theta^*(c),\lambda^*(c);c)
=
H(\theta^*(c))+\lambda^*(c) \nabla^2_{\theta\theta}h(\theta^*(c)).
$
Assume \(A(c)\) is nonsingular and \(a(c)^{\prime}A(c)^{-1}a(c)>0\).

\paragraph{Step 1: Sample and population KKT systems.}
In the binding regime, the sample KKT conditions for \((\hat{\theta}(c),\hat{\lambda}(c))\) are
\begin{align}
\nabla_\theta \mathcal{L}_n(\hat{\theta}(c),\hat{\lambda}(c);c) &= 0, \label{eq:kkt1_sample_c}\\
h(\hat{\theta}(c)) &= c, \label{eq:kkt2_sample_c}\\
\hat{\lambda}(c) &> 0. \label{eq:kkt3_sample_c}
\end{align}
The population KKT conditions for \((\theta^*(c),\lambda^*(c))\) are
\begin{align}
\nabla_\theta \mathcal{L}(\theta^*(c),\lambda^*(c);c) &= 0, \label{eq:kkt1_pop_c}\\
h(\theta^*(c)) &= c, \label{eq:kkt2_pop_c}\\
\lambda^*(c) &> 0. \label{eq:kkt3_pop_c}
\end{align}

\paragraph{Step 2: Linearization of stationarity.}
Apply a first-order Taylor expansion of \(\nabla_\theta \mathcal{L}_n(\hat{\theta}(c),\hat{\lambda}(c);c)\)
around \((\theta^*(c),\lambda^*(c))\):
\begin{equation*}\hspace{-10mm}
0
=
\nabla_\theta \mathcal{L}_n(\theta^*(c),\lambda^*(c);c)
+
\nabla^2_{\theta\theta}\mathcal{L}(\theta^*(c),\lambda^*(c);c)\big(\hat{\theta}(c)-\theta^*(c)\big)
+
\nabla^2_{\theta\lambda}\mathcal{L}(\theta^*(c),\lambda^*(c);c)\big(\hat{\lambda}(c)-\lambda^*(c)\big)
+
r_{n,1}(c),
\end{equation*}
where \(r_{n,1}(c)=o_p(n^{-1/2})\). Since
$
\nabla^2_{\theta\lambda}\mathcal{L}(\theta^*(c),\lambda^*(c);c)
=
\nabla_\theta h(\theta^*(c))=a(c),
$
and \(\nabla^2_{\theta\theta}\mathcal{L}(\theta^*(c),\lambda^*(c);c)=A(c)\), we obtain
\begin{equation}\label{eq:lin_stationarity_c}
A(c)\big(\hat{\theta}(c)-\theta^*(c)\big) + a(c)\big(\hat{\lambda}(c)-\lambda^*(c)\big)
=
-\nabla_\theta \mathcal{L}_n(\theta^*(c),\lambda^*(c);c) + o_p(n^{-1/2}).
\end{equation}

Moreover,
$\nabla_\theta \mathcal{L}_n(\theta^*(c),\lambda^*(c);c)
=
\nabla_\theta \phi_n(\theta^*(c)) + \lambda^*(c) a(c).
$
Using the population stationarity condition \eqref{eq:kkt1_pop_c},
$
\nabla_\theta \phi(\theta^*(c)) + \lambda^*(c) a(c) = 0,
$
we can write
\begin{equation*}
\nabla_\theta \mathcal{L}_n(\theta^*(c),\lambda^*(c);c)
=
\psi_n(\theta^*(c))-\nabla_\theta \phi(\theta^*(c)).
\end{equation*}
Redefining \(\Psi_n(\theta^*(c))\) as the centered score yields
\begin{equation}\label{eq:lin_stationarity_final_c}
A(c)\big(\hat{\theta}(c)-\theta^*(c)\big) + a(c)\big(\hat{\lambda}(c)-\lambda^*(c)\big)
=
-\,\psi_n(\theta^*(c)) + o_p(n^{-1/2}).
\end{equation}

\paragraph{Step 3: Linearization of the binding constraint.}
Since \(h(\hat{\theta}(c))=c\) and \(h(\theta^*(c))=c\), a Taylor expansion gives
\begin{equation*}
0
=
\nabla_\theta h(\theta^*(c))^{\prime}\big(\hat{\theta}(c)-\theta^*(c)\big)
+
r_{n,2}(c),
\end{equation*}
where \(r_{n,2}(c)=o_p(n^{-1/2})\), implying
\begin{equation}\label{eq:lin_constraint_c}
a(c)^{\prime}\big(\hat{\theta}(c)-\theta^*(c)\big)=o_p(n^{-1/2}).
\end{equation}

\paragraph{Step 4: Solve the linearized KKT system.}
Stacking \eqref{eq:lin_stationarity_final_c} and \eqref{eq:lin_constraint_c},
\begin{equation*}
\begin{pmatrix}
A(c) & a(c)\\
a(c)^{\prime}& 0
\end{pmatrix}
\begin{pmatrix}
\hat{\theta}(c)-\theta^*(c)\\
\hat{\lambda}(c)-\lambda^*(c)
\end{pmatrix}
=
\begin{pmatrix}
-\,\psi_n(\theta^*(c))\\
0
\end{pmatrix}
+
o_p(n^{-1/2}).
\end{equation*}

Let \(\Delta_\theta(c)=\hat{\theta}(c)-\theta^*(c)\) and
\(\Delta_\lambda(c)=\hat{\lambda}(c)-\lambda^*(c)\). Then
\begin{equation*}
\Delta_\theta(c)
=
-\,A(c)^{-1}\psi_n(\theta^*(c))
- A(c)^{-1}a(c)\,\Delta_\lambda(c)
+ o_p(n^{-1/2}).
\end{equation*}
Premultiplying by \(a(c)^{\prime}\) and using \eqref{eq:lin_constraint_c},
\begin{equation*}
0
=
-\,a(c)^{\prime}A(c)^{-1}\psi_n(\theta^*(c))
- a(c)^{\prime}A(c)^{-1}a(c)\,\Delta_\lambda(c)
+ o_p(n^{-1/2}),
\end{equation*}
so that
\begin{equation}\label{eq:Delta_lambda_c}
\Delta_\lambda(c)
=
-\,\bigl(a(c)^{\prime}A(c)^{-1}a(c)\bigr)^{-1}
a(c)^{\prime}A(c)^{-1}\psi_n(\theta^*(c))
+ o_p(n^{-1/2}).
\end{equation}
Substituting into \(\Delta_\theta(c)\),
\begin{align}
\Delta_\theta(c)
&=
-\,A(c)^{-1}\psi_n(\theta^*(c))
+
A(c)^{-1}a(c)\bigl(a(c)^{\prime}A(c)^{-1}a(c)\bigr)^{-1}
a(c)^{\prime}A(c)^{-1}\psi_n(\theta^*(c))
+ o_p(n^{-1/2})
\nonumber\\
&=
-\,M(c)\,\psi_n(\theta^*(c))+o_p(n^{-1/2}),
\label{eq:Delta_theta_c}
\end{align}
where
$M(c)
\equiv
A(c)^{-1}-A(c)^{-1}a(c)\bigl(a(c)^{\prime}A(c)^{-1}a(c)\bigr)^{-1}a(c)^{\prime}A(c)^{-1}.
$

\paragraph{Step 5: Joint asymptotic distribution.}
Combining \eqref{eq:Delta_theta_c} and \eqref{eq:Delta_lambda_c},
\begin{equation}\label{eq:joint_ALR_c}
\sqrt{n}
\begin{pmatrix}
\hat{\theta}(c)-\theta^*(c)\\
\hat{\lambda}(c)-\lambda^*(c)
\end{pmatrix}
=
\begin{pmatrix}
-\,M(c)\\
-\,(a(c)^{\prime}A(c)^{-1}a(c))^{-1}a(c)^{\prime}A(c)^{-1}
\end{pmatrix}
\sqrt{n}\,\psi_n(\theta^*(c))
+ o_p(1).
\end{equation}
By the multivariate central limit theorem,
\begin{equation}\label{eq:joint_CLT_c}
\sqrt{n}
\begin{pmatrix}
\hat{\theta}(c)-\theta^*(c)\\
\hat{\lambda}(c)-\lambda^*(c)
\end{pmatrix}
\xrightarrow{d}
\mathcal{N}\!\left(
0,\begin{pmatrix}
V_1(c) & V_2(c)\\
V_2(c)^{\prime} & V_3(c)
\end{pmatrix}
\right),
\end{equation}
with
$V_1(c) = M(c)\,\Sigma_{\psi}(c)\,M(c)^{\prime},\;
V_2(c) = -\,M(c)\,\Sigma_{\psi}(c)\,A(c)^{-1}a(c)\,(a(c)^{\prime}A(c)^{-1}a(c))^{-1},\;\\
V_3(c) = (a(c)^{\prime}A(c)^{-1}a(c))^{-2}
      a(c)^{\prime}A(c)^{-1}\Sigma_{\psi}(c)A(c)^{-1}a(c).$
\qed

\subsection{Proof of Theorem \ref{thm:asymp_theta_inactive}}
\label{subsec:pf_asymp_inactive}
Define the sample Lagrangian at tolerance $c$,
$
\mathcal{L}_n(\theta,\lambda;c)
=
\phi_n(\theta)+\lambda\bigl[h(\theta)-c\bigr],
\; \lambda\ge 0.
$
Let $(\theta^*(c),\lambda^*(c))$ denote a population KKT solution satisfying
$
h(\theta^*(c^{\dagger})) < c^{\dagger},
\;
\lambda^*(c^{\dagger}) = 0.
$
Assume strict slackness: there exists $\eta>0$ such that
\begin{equation}
h(\theta^*(c^{\dagger})) \le c^{\dagger} - \eta.
\label{eq:strict_slackness_opt}
\end{equation}

By Theorem~\ref{thm:c_hat_consistency}, we have
$
\hat c \xrightarrow{p} c^{\dagger}.
$ Let $\tilde{\theta}$ denote the unconstrained M-estimator,
$
\tilde{\theta}\in\arg\min_{\theta}\ \phi_n(\theta),
\;
\nabla_\theta \phi_n(\tilde{\theta})=0.
$
Standard M-estimation arguments imply
$
\tilde{\theta}\xrightarrow{p}\theta^*(c^{\dagger}).
$ By continuity of $h(\cdot)$ and \eqref{eq:strict_slackness_opt},
$
h(\tilde{\theta}) \xrightarrow{p} h(\theta^*(c^{\dagger})) < c^{\dagger}.
$
Since $\hat c \to c^{\dagger}$, it follows that
$
\Pr\{h(\tilde{\theta}) < \hat c\} \to 1.
$ Hence, with probability approaching one, the unconstrained estimator is feasible for the constrained problem evaluated at $\hat c$. Therefore,
$
\Pr\{\hat{\theta} = \tilde{\theta}\} \to 1,
\;
\Pr\{\hat{\lambda} = 0\} \to 1,
$
where the second statement follows from complementary slackness. Thus, asymptotically,
$
\nabla_\theta \phi_n(\hat{\theta})=0.
$ A first-order Taylor expansion around $\theta^*(c^{\dagger})$ gives
\[
0
=
\nabla_\theta \phi_n(\theta^*(c^{\dagger}))
+
\nabla_{\theta\theta}^2\phi(\theta^*(c^{\dagger}))(\hat{\theta}-\theta^*(c^{\dagger}))
+
o_p(n^{-1/2}),
\]
so that
$
H(\theta^*(c^{\dagger}))(\hat{\theta}-\theta^*(c^{\dagger}))
=
-\,\psi_n(\theta^*(c^{\dagger}))+o_p(n^{-1/2}).
$ Multiplying by $\sqrt{n}$ and applying the multivariate central limit theorem yields
\[
\sqrt{n}(\hat{\theta}-\theta^*(c^{\dagger}))
=
-\,H(\theta^*(c^{\dagger}))^{-1}\sqrt{n}\,\psi_n(\theta^*(c^{\dagger}))
+o_p(1)
\xrightarrow{d}
N\!\Bigl(0,\ H(\theta^*(c^{\dagger}))^{-1}\Sigma_{\psi}H(\theta^*(c^{\dagger}))^{-1}\Bigr).
\]

Finally, since $\Pr\{\hat{\lambda}=0\}\to 1$, we obtain
$
\hat{\lambda}\xrightarrow{p}0.
$
\qed

\subsection{Proof of Theorem \ref{thm:joint_ISP}}
\label{subsec:pf_ISP-joint}


\paragraph{Step 1: Write ISP as a smooth map of \((\theta,\lambda)\).}

Under Assumption~\ref{assump:sign_map}, we have
\begin{equation*}
\min_{j \in \mathcal J} |g_j(\theta^*(c))| \ge \kappa(c) > 0.
\end{equation*}
By continuity of \(g(\cdot)\), there exists a neighborhood \(\mathcal U(c)\) of \(\theta^*(c)\) such that
\begin{equation*}
\mathrm{sign}(g_j(\theta))
=
\mathrm{sign}(g_j(\theta^*(c)))
\quad
\text{for all } \theta \in \mathcal U(c) \text{ and } j \in \mathcal J.
\end{equation*}
Since \(\hat\theta(c) \xrightarrow{p} \theta^*(c)\), it follows that \(\Pr(\hat\theta(c) \in \mathcal U(c))\to 1\). Hence, with probability approaching one, the sign operator is locally constant and equal to its population counterpart. Define the diagonal sign matrix
$
S(c) \equiv \mathrm{diag}\!\big(\mathrm{sign}(g_1(\theta^*(c))),\dots,\mathrm{sign}(g_q(\theta^*(c)))\big),
$
and let \(b(\theta)\equiv \Sigma^{-1}g(\theta)\in\mathbb{R}^q\). Then, on the high-probability event \(\{\hat\theta(c) \in \mathcal U(c)\}\),
\begin{equation}\label{eq:ISP_map_c}
ISP(\theta,\lambda;c)
\equiv
2\lambda\,S(c)\,b(\theta)
=
2\lambda\,S(c)\,\Sigma^{-1}g(\theta).
\end{equation}

Because \(g(\theta)\) is continuously differentiable by Assumption~\ref{assump:regular} and \(\Sigma\) is fixed and positive definite, the map
$
(\theta,\lambda)
\mapsto
2\lambda\,S(c)\,\Sigma^{-1}g(\theta)
$
is continuously differentiable in a neighborhood of \((\theta^*(c),\lambda^*(c))\). Thus, in that neighborhood, the ISP is a smooth function of \((\theta,\lambda)\). Under the binding regime \(\lambda^*(c)>0\), the estimator therefore satisfies
$
\widehat{ISP}(c)
=
ISP(\hat\theta(c),\hat\lambda(c);c)
\;\text{w.p.a.1}.
$

\paragraph{Step 2: First-order expansion (delta method).}
Let \(G(\theta)\equiv \nabla_\theta g(\theta)\in\mathbb{R}^{q\times p}\). Since \(b(\theta)=\Sigma^{-1}g(\theta)\),
$
\nabla_\theta b(\theta)=\Sigma^{-1}G(\theta).
$
Differentiate \eqref{eq:ISP_map_c} at \((\theta^*(c),\lambda^*(c))\) to obtain the Jacobian blocks:
\begin{align}
J_\theta(c)
&\equiv
\left.\nabla_\theta ISP(\theta,\lambda;c)\right|_{(\theta^*(c),\lambda^*(c))}
=
2\lambda^*(c)\,S(c)\,\Sigma^{-1}G(\theta^*(c))
\in\mathbb{R}^{q\times p},
\label{eq:J_theta_c}\\
J_\lambda(c)
&\equiv
\left.\nabla_\lambda ISP(\theta,\lambda;c)\right|_{(\theta^*(c),\lambda^*(c))}
=
2\,S(c)\,\Sigma^{-1}g(\theta^*(c))
\in\mathbb{R}^{q\times 1}.
\label{eq:J_lambda_c}
\end{align}
Hence the first-order expansion is
\begin{equation}\label{eq:ISP_expansion_c}
\sqrt{n}\big(\widehat{ISP}(c)-ISP^*(c)\big)
=
J_\theta(c)\,\sqrt{n}(\hat\theta(c)-\theta^*(c))
+
J_\lambda(c)\,\sqrt{n}(\hat\lambda(c)-\lambda^*(c))
+o_p(1).
\end{equation}

\paragraph{Step 3: Joint asymptotic normality.}
Let
\begin{equation*}
\sqrt{n}
\begin{pmatrix}
\hat\theta(c)-\theta^*(c)\\
\hat\lambda(c)-\lambda^*(c)
\end{pmatrix}
\xrightarrow{d}
\mathcal{N}\!\left(0,\
V(c)\right),
\qquad
V(c)=
\begin{pmatrix}
V_1(c) & V_2(c)\\
V_2(c)^{\prime} & V_3(c)
\end{pmatrix},
\end{equation*}
where \(V_1(c),V_2(c),V_3(c)\) are the blocks previously derived for \((\hat\theta(c),\hat\lambda(c))\).
Define the full Jacobian
$
J(c) \equiv \begin{pmatrix} J_\theta(c) & J_\lambda(c) \end{pmatrix}\in\mathbb{R}^{q\times (p+1)}.
$
Then, by \eqref{eq:ISP_expansion_c} and Slutsky's theorem,
\begin{equation}\label{eq:ISP_joint_CLT_c}
\sqrt{n}\big(\widehat{ISP}(c)-ISP^*(c)\big)
\xrightarrow{d}
\mathcal{N}\!\big(0,\ \Sigma_{ISP}(c)\big),
\qquad
\Sigma_{ISP}(c)\equiv J(c)\,V(c)\,J(c)^{\prime}.
\end{equation}

\paragraph{Step 4: Explicit block form of \(\Sigma_{ISP}(c)\).}
Using \(V(c)=\begin{pmatrix}V_1(c) & V_2(c)\\ V_2(c)^{\prime} & V_3(c)\end{pmatrix}\), expand
\begin{equation}\label{eq:Sigma_ISP_blocks_c}
\Sigma_{ISP}(c)
=
J_\theta(c) V_1(c) J_\theta(c)^{\prime}
+
J_\theta(c) V_2(c) J_\lambda(c)^{\prime}
+
J_\lambda(c) V_2(c)^{\prime} J_\theta(c)^{\prime}
+
J_\lambda(c) V_3(c) J_\lambda(c)^{\prime}.
\end{equation}
Substituting \eqref{eq:J_theta_c}--\eqref{eq:J_lambda_c} yields an explicit expression entirely in terms of
\((\theta^*(c),\lambda^*(c))\), \(G(\theta^*(c))\), \(\Sigma\), and the joint variance blocks \((V_1(c),V_2(c),V_3(c))\).
\qed

\subsection{Proof of Lemma~\ref{lem:ISP_uniform}}
\label{subsec:pf_ISP-uniform}

Fix any \(c\in\mathcal C_{\mathrm{active}}\). By Theorem~\ref{thm:joint_ISP}, for each fixed \(j\),
$
\widehat{ISP}_j(c) - ISP_j^*(c) \xrightarrow{p} 0.
$
Fix \(\nu>0\). Since \(q\) is finite,
$
\left\{
\max_{1 \le j \le q}
\left|
\widehat{ISP}_j(c) - ISP_j^*(c)
\right| > \nu
\right\}
\subseteq
\bigcup_{j=1}^q
\left\{
\left|
\widehat{ISP}_j(c) - ISP_j^*(c)
\right| > \nu
\right\}.
$
Applying the union bound yields convergence of the maximum to 0 in probability.
\qed

\subsection{Proof of Theorem~\ref{thm:ISP_ranking_consistency}}
\label{subsec:pf_isp-rank}

Fix any \(c\in\mathcal C_{\mathrm{active}}\). Let \(\Delta(c)\) be as in Assumption~\ref{assump:ISP_separation} and define the event
\begin{equation*}
\mathcal{E}_n(c)
\equiv
\left\{
\,
\|\widehat{ISP}(c) - ISP^*(c)\|_\infty
<
\frac{\Delta(c)}{4}
\,
\right\}.
\end{equation*}

By Lemma~\ref{lem:ISP_uniform},
$
\|\widehat{ISP}(c) - ISP^*(c)\|_\infty \xrightarrow{p} 0,
$
and therefore \(\Pr(\mathcal{E}_n(c))\to 1\).

We show that on \(\mathcal{E}_n(c)\) the ordering of magnitudes is preserved.
For any index \(j\), the reverse triangle inequality implies
$
\big|
\,|\widehat{ISP}_j(c)| - |ISP_j^*(c)|
\,\big|
\le
|\widehat{ISP}_j(c) - ISP_j^*(c)|.
$
Hence on \(\mathcal{E}_n(c)\),
\begin{equation}
\label{eq:ISP_mag_bound_appendix_c}
\big|
\,|\widehat{ISP}_j(c)| - |ISP_j^*(c)|
\,\big|
<
\frac{\Delta(c)}{4}
\quad
\text{for all } j.
\end{equation}

Let \(k\in\{1,\dots,q-1\}\) and denote
\(j_k = r^*(c;k)\) and \(j_{k+1}=r^*(c;k+1)\).
Using \eqref{eq:ISP_mag_bound_appendix_c},
$
|\widehat{ISP}_{j_k}(c)|
\ge
|ISP_{j_k}^*(c)| - \frac{\Delta(c)}{4},
\;
|\widehat{ISP}_{j_{k+1}}(c)|
\le
|ISP_{j_{k+1}}^*(c)| + \frac{\Delta(c)}{4}.
$
Therefore,
\begin{align*}
|\widehat{ISP}_{j_k}(c)| - |\widehat{ISP}_{j_{k+1}}(c)|
&\ge
\left(|ISP_{j_k}^*(c)| - \frac{\Delta(c)}{4}\right)
-
\left(|ISP_{j_{k+1}}^*(c)| + \frac{\Delta(c)}{4}\right) \\
&=
\big(|ISP_{j_k}^*(c)| - |ISP_{j_{k+1}}^*(c)|\big)
-
\frac{\Delta(c)}{2} \\
&\ge
\Delta(c) - \frac{\Delta(c)}{2}
=
\frac{\Delta(c)}{2}
>
0,
\end{align*}
where the last inequality uses Assumption~\ref{assump:ISP_separation}. Thus, on \(\mathcal{E}_n(c)\),
$
|\widehat{ISP}_{r^*(c;k)}(c)|
>
|\widehat{ISP}_{r^*(c;k+1)}(c)|
\quad
\text{for all } k\in\{1,\dots,q-1\}.
$
Hence the estimated ordering \(\hat r(c)\) coincides with \(r^*(c)\) on \(\mathcal{E}_n(c)\), i.e.,
$
\mathcal{E}_n(c) \subseteq \{\hat r(c) = r^*(c)\}.
$
Taking probabilities yields
$
\Pr(\hat r(c) = r^*(c))
\;\ge\;
\Pr(\mathcal{E}_n(c))
\;\longrightarrow\; 1.
$
\qed

\subsection{Proof of Proposition \ref{prop:plateau_consistency}}
\label{subsec:pf_prop-plateau-consistency}

For each \(m\in\mathcal M\), let
$
\Delta_m(c)
=
\frac{1}{m}\sum_{\ell=1}^m |\widehat{ISP}_{(\ell)}(c)|
-
\frac{1}{q-m}\sum_{\ell=m+1}^q |\widehat{ISP}_{(\ell)}(c)|,
$
and let
$
\Delta_m^*(c)
=
\frac{1}{m}\sum_{\ell=1}^m |ISP_{(\ell)}^*(c)|
-
\frac{1}{q-m}\sum_{\ell=m+1}^q |ISP_{(\ell)}^*(c)|.
$ By Theorem~\ref{thm:joint_ISP} and finiteness of \(q\),
$
\max_{1\le j\le q}\left|\widehat{ISP}_{(j)}(c)-ISP_{(j)}^*(c)\right|\xrightarrow{p}0.
$
Therefore, for each fixed \(m\in\mathcal M\),
$
\Delta_m(c)\xrightarrow{p}\Delta_m^*(c).
$
Since \(\mathcal M\) is finite, this convergence is uniform over \(m\in\mathcal M\):
$
\max_{m\in\mathcal M}\left|\Delta_m(c)-\Delta_m^*(c)\right|\xrightarrow{p}0.
$

Because \(B_m(c)\) is a continuous function of \(\Delta_m(c)\) and \(\widehat{\mathrm{Var}}(\Delta_m(c))\), and because the corresponding variance estimator is consistent under the maintained assumptions, it follows that
$
\max_{m\in\mathcal M}\left|B_m(c)-B_m^*(c)\right|\xrightarrow{p}0,
$
where \(B_m^*(c)\) denotes the population analogue of the break statistic.

By Assumption~\ref{ass:plateau_sep}, \(B_m^*(c)\) has a unique maximizer \(m^*(c)\), separated from all other candidates by a strictly positive gap. The finite-set argmax theorem therefore implies
$
\hat m^*(c)\xrightarrow{p} m^*(c).
$
\qed

\subsection{Proof of Proposition \ref{prop:theta_star_expansion}}
\label{subsec:pf_prop-theta-star-expansion}

From the population KKT condition,
$
\nabla_\theta \phi(\theta^*(c))
+
\lambda^*(c)\nabla_\theta h(\theta^*(c))=0.
$
Apply a first-order Taylor expansion of \(\nabla_\theta \phi(\theta^*(c))\) around \(\theta^{0}\):
\[
\nabla_\theta \phi(\theta^*(c))
=
\nabla_\theta \phi(\theta^{0})
+
H_0(\theta^*(c)-\theta^{0})
+
o(\|\theta^*(c)-\theta^{0}\|).
\]
Since \(\nabla_\theta \phi(\theta^{0})=0\), this becomes
\[
H_0(\theta^*(c)-\theta^{0})
+
\lambda^*(c)\nabla_\theta h(\theta^*(c))
+
o(\|\theta^*(c)-\theta^{0}\|)=0.
\]
Using continuity of \(\nabla_\theta h(\theta)\),
$
\nabla_\theta h(\theta^*(c))=a_0+o(1).
$
Substituting yields
\[
H_0(\theta^*(c)-\theta^{0})
+
\lambda^*(c)a_0
+
o(\|\theta^*(c)-\theta^{0}\|)
+
o(\lambda^*(c))
=0.
\]
Solving for \(\theta^*(c)-\theta^{0}\) gives \eqref{eq:theta_star_expansion}.
\qed

\subsection{Proof for Theorem \ref{thm:c_hat_consistency}}
\label{subsec:pf_thm-c-hat-consistency}

By Assumption~\ref{assump:c_selection}, the feasible criterion \(\widehat R_{\mathrm{lin}}(c)\) converges uniformly in probability to the continuous population criterion \(R_{\mathrm{lin}}(c)\) on the compact set \(\mathcal C_{\mathrm{active}}\). Since \(R_{\mathrm{lin}}(c)\) has a unique minimizer \(c^{\dagger}\), the standard argmin theorem implies
$
\hat c \xrightarrow{p} c^{\dagger}.
$
\qed

\subsection{Proof for Corollary \ref{cor:theta_hat_opt_consistency}}
\label{subsec:pf_cor-theta-hat-opt-consistency}

By Theorem~\ref{thm:c_hat_consistency}, we have
$
\hat c \xrightarrow{p} c^{\dagger}.
$
From Theorem~\ref{thm:sample_profile}, the sample profiling maps
$
c \mapsto \hat{\theta}(c),
\;
c \mapsto \hat{\lambda}(c)
$
are well defined in a neighborhood of \(c^{\dagger}\) with probability approaching one, and satisfy
$
\sup_{c \in \mathcal N(c^{\dagger})}
\left\|
\hat{\theta}(c) - \theta^*(c)
\right\| \xrightarrow{p} 0,
\;
\sup_{c \in \mathcal N(c^{\dagger})}
\left|
\hat{\lambda}(c) - \lambda^*(c)
\right| \xrightarrow{p} 0.
$

We first establish consistency of \(\hat{\theta} = \hat{\theta}(\hat c)\). Write
\[
\hat{\theta}(\hat c) - \theta^*(c^{\dagger})
=
\underbrace{\big(\hat{\theta}(\hat c) - \theta^*(\hat c)\big)}_{(i)}
+
\underbrace{\big(\theta^*(\hat c) - \theta^*(c^{\dagger})\big)}_{(ii)}.
\]

For term (i), since \(\hat c \xrightarrow{p} c^{\dagger}\) and the convergence
\(\hat{\theta}(c) \to \theta^*(c)\) holds uniformly in a neighborhood of \(c^{\dagger}\),
we obtain
$
\hat{\theta}(\hat c) - \theta^*(\hat c) \xrightarrow{p} 0.
$ For term (ii), by the assumed continuity of \(c \mapsto \theta^*(c)\) at \(c^{\dagger}\) and the fact that \(\hat c \xrightarrow{p} c^{\dagger}\), the continuous mapping theorem implies
$
\theta^*(\hat c) \xrightarrow{p} \theta^*(c^{\dagger}),
$
so that
$
\theta^*(\hat c) - \theta^*(c^{\dagger}) \xrightarrow{p} 0.
$
Combining (i) and (ii) yields
$
\hat{\theta}(\hat c) \xrightarrow{p} \theta^*(c^{\dagger}).
$

The proof for \(\hat{\lambda} = \hat{\lambda}(\hat c)\) is analogous. Decompose
$
\hat{\lambda}(\hat c) - \lambda^*(c^{\dagger})
=
\big(\hat{\lambda}(\hat c) - \lambda^*(\hat c)\big)
+
\big(\lambda^*(\hat c) - \lambda^*(c^{\dagger})\big),
$
and apply the same arguments using uniform convergence of \(\hat{\lambda}(c)\) to \(\lambda^*(c)\) and continuity of \(c \mapsto \lambda^*(c)\) at \(c^{\dagger}\). This yields
$
\hat{\lambda}(\hat c) \xrightarrow{p} \lambda^*(c^{\dagger}).
$
\qed

\subsection{Proof of Theorem \ref{thm:theta_db_asymp}}
\label{subsec:pf_theta-db-asymp}

Recall that
$
\hat\theta_{\mathrm{db}}
=
\hat\theta(\hat c)+\hat b(\hat c),
\;
b^*(c)=\lambda^*(c)H_0^{-1}a_0,
\;
\hat b(c)=\hat\lambda(c)\hat H^{-1}\hat a,
$
and that \(c^\dagger\) denotes the pseudo-true optimizer of the linearized risk criterion.

\paragraph{Step 1: Basic decomposition.}
By definition,
\begin{align}
\hat\theta_{\mathrm{db}}-\theta^{0}
&=
\bigl(\hat\theta(\hat c)-\theta^*(c^\dagger)\bigr)
+
\bigl(\hat b(\hat c)-b^*(c^\dagger)\bigr)
+
\bigl(\theta^*(c^\dagger)-\theta^{0}+b^*(c^\dagger)\bigr).
\label{eq:db_basic_decomp}
\end{align}
Multiplying by \(\sqrt n\) gives
\begin{align}
\sqrt n\bigl(\hat\theta_{\mathrm{db}}-\theta^{0}\bigr)
&=
\sqrt n\bigl(\hat\theta(\hat c)-\theta^*(c^\dagger)\bigr)
+
\sqrt n\bigl(\hat b(\hat c)-b^*(c^\dagger)\bigr)
\nonumber\\
&\qquad
+
\sqrt n\bigl(\theta^*(c^\dagger)-\theta^{0}+b^*(c^\dagger)\bigr).
\label{eq:db_basic_decomp_scaled}
\end{align}

\paragraph{Step 2: Control of the bias-approximation remainder.}
By Assumption~\ref{assump:debias}(1),
\[
\sup_{c\in\mathcal C_{\mathrm{active}}}
\left\|
\theta^*(c)-\theta^{0}+\lambda^*(c)H_0^{-1}a_0
\right\|
=
o(n^{-1/2}).
\]
Evaluating at \(c=c^\dagger\) yields
$
\left\|
\theta^*(c^\dagger)-\theta^{0}+b^*(c^\dagger)
\right\|
=
o(n^{-1/2}),
$
and hence
$
\sqrt n\bigl(\theta^*(c^\dagger)-\theta^{0}+b^*(c^\dagger)\bigr)=o(1).
$
Substituting this into \eqref{eq:db_basic_decomp_scaled} establishes
\[
\sqrt n\bigl(\hat\theta_{\mathrm{db}}-\theta^{0}\bigr)
=
\sqrt n\bigl(\hat\theta(\hat c)-\theta^*(c^\dagger)\bigr)
+
\sqrt n\bigl(\hat b(\hat c)-b^*(c^\dagger)\bigr)
+
o_p(1),
\]
which proves \eqref{eq:theta_db_expansion}.

\paragraph{Step 3: Expansion around \(c^\dagger\).}
To derive the limiting distribution, decompose
\begin{align}
\hat\theta(\hat c)-\theta^*(c^\dagger)
&=
\bigl(\hat\theta(\hat c)-\theta^*(\hat c)\bigr)
+
\bigl(\theta^*(\hat c)-\theta^*(c^\dagger)\bigr),
\label{eq:theta_chat_decomp}
\\
\hat b(\hat c)-b^*(c^\dagger)
&=
\bigl(\hat b(\hat c)-b^*(\hat c)\bigr)
+
\bigl(b^*(\hat c)-b^*(c^\dagger)\bigr).
\label{eq:b_chat_decomp}
\end{align}
By Theorem~\ref{thm:c_hat_consistency}, \(\hat c\xrightarrow{p} c^\dagger\). By
Assumption~\ref{assump:debias}(2), the maps
\[
c\mapsto \theta^*(c),
\qquad
c\mapsto \lambda^*(c),
\qquad
c\mapsto A(c),
\qquad
c\mapsto a(c)
\]
are continuously differentiable in a neighborhood of \(c^\dagger\). Since
$
b^*(c)=\lambda^*(c)H_0^{-1}a_0,
$
the map \(c\mapsto b^*(c)\) is also continuously differentiable in a neighborhood of
\(c^\dagger\). Therefore,
\begin{align}
\theta^*(\hat c)-\theta^*(c^\dagger)
&=
\dot\theta^*(c^\dagger)(\hat c-c^\dagger)+o_p\bigl(|\hat c-c^\dagger|\bigr),
\label{eq:theta_star_expand_c}
\\
b^*(\hat c)-b^*(c^\dagger)
&=
\dot b^*(c^\dagger)(\hat c-c^\dagger)+o_p\bigl(|\hat c-c^\dagger|\bigr).
\label{eq:b_star_expand_c}
\end{align}
If \(\hat c-c^\dagger=o_p(n^{-1/2})\), then
\begin{align}
\sqrt n\bigl(\theta^*(\hat c)-\theta^*(c^\dagger)\bigr)&=o_p(1),
\label{eq:theta_star_chat_small}
\\
\sqrt n\bigl(b^*(\hat c)-b^*(c^\dagger)\bigr)&=o_p(1).
\label{eq:b_star_chat_small}
\end{align}

\paragraph{Step 4: Reduction to the fixed-\(c^\dagger\) problem.}
Combining \eqref{eq:theta_chat_decomp}--\eqref{eq:b_chat_decomp} with
\eqref{eq:theta_star_chat_small}--\eqref{eq:b_star_chat_small} gives
\begin{align}
\sqrt n\bigl(\hat\theta(\hat c)-\theta^*(c^\dagger)\bigr)
&=
\sqrt n\bigl(\hat\theta(\hat c)-\theta^*(\hat c)\bigr)+o_p(1),
\label{eq:theta_reduce_fixedc}
\\
\sqrt n\bigl(\hat b(\hat c)-b^*(c^\dagger)\bigr)
&=
\sqrt n\bigl(\hat b(\hat c)-b^*(\hat c)\bigr)+o_p(1).
\label{eq:b_reduce_fixedc}
\end{align}
Because \(\hat c\xrightarrow{p} c^\dagger\) and the inner estimators are asymptotically regular
in a neighborhood of \(c^\dagger\), the leading terms are first-order equivalent to the
same quantities evaluated at \(c^\dagger\). Hence,
\begin{align}
\sqrt n\bigl(\hat\theta(\hat c)-\theta^*(c^\dagger)\bigr)
&=
\sqrt n\bigl(\hat\theta(c^\dagger)-\theta^*(c^\dagger)\bigr)+o_p(1),
\label{eq:theta_fixedc_final}
\\
\sqrt n\bigl(\hat b(\hat c)-b^*(c^\dagger)\bigr)
&=
\sqrt n\bigl(\hat b(c^\dagger)-b^*(c^\dagger)\bigr)+o_p(1).
\label{eq:b_fixedc_final}
\end{align}

\paragraph{Step 5: Joint asymptotic normality of the leading terms.}
By Theorem~\ref{thm:asymp_theta_active}, at the fixed tolerance \(c^\dagger\),
\[
\sqrt n
\begin{pmatrix}
\hat\theta(c^\dagger)-\theta^*(c^\dagger)\\[2pt]
\hat\lambda(c^\dagger)-\lambda^*(c^\dagger)
\end{pmatrix}
\xrightarrow{d}
N\!\left(
0,
\begin{pmatrix}
V_1(c^\dagger) & V_2(c^\dagger)\\
V_2(c^\dagger)^{\prime} & V_3(c^\dagger)
\end{pmatrix}
\right).
\]
Moreover, by Assumption~\ref{assump:debias}(3),
$
\hat H\xrightarrow{p} H_0,
\;
\hat a\xrightarrow{p} a_0,
\;
\sup_{c\in\mathcal C_{\mathrm{adm}}}\|\hat b(c)-b^*(c)\|=o_p(1).
$
Since
$
\hat b(c)=\hat\lambda(c)\hat H^{-1}\hat a
$
is continuously differentiable as a map of \(\hat\lambda(c)\), \(\hat H\), and
\(\hat a\) in a neighborhood of the population limit, the delta method implies that
$
\sqrt n\bigl(\hat b(c^\dagger)-b^*(c^\dagger)\bigr)
$
admits a linear influence representation jointly with
$
\sqrt n\bigl(\hat\theta(c^\dagger)-\theta^*(c^\dagger)\bigr).
$
Therefore,
$
\sqrt n\bigl(\hat\theta(c^\dagger)-\theta^*(c^\dagger)\bigr)
+
\sqrt n\bigl(\hat b(c^\dagger)-b^*(c^\dagger)\bigr)
$
is asymptotically normal. Denote its covariance matrix by
\(V_{\mathrm{db}}(c^\dagger)\). By construction, \(V_{\mathrm{db}}(c^\dagger)\) is induced
by the joint first-order influence of
$
\hat\theta(c^\dagger),\;
\hat\lambda(c^\dagger),\;
\hat H,\;
\hat a.
$

\paragraph{Step 6: Conclusion.}
Combining \eqref{eq:theta_db_expansion},
\eqref{eq:theta_fixedc_final}, \eqref{eq:b_fixedc_final}, and Slutsky's theorem yields
$
\sqrt n\bigl(\hat\theta_{\mathrm{db}}-\theta^{0}\bigr)
\xrightarrow{d}
N\!\bigl(0,V_{\mathrm{db}}(c^\dagger)\bigr),
$
which proves \eqref{eq:theta_db_clt}.
\qed

\subsection{Proof for Theorem \ref{thm:wild}}\label{sec:pf_wild-bootstrap}

\textbf{Step 1 (KKT map with optimized tolerance and linearization).}

In the active-constraint regime with optimized tolerance, define
\[
(\hat\theta,\hat\lambda,\hat c)
:=
(\hat\theta(\hat c),\hat\lambda(\hat c),\hat c).
\]
The sample KKT conditions are
\begin{equation*}
\begin{cases}
\nabla_\theta \phi_n(\hat\theta)
+ \hat\lambda\,\nabla_\theta h(\hat\theta)=0,\\
h(\hat\theta)-\hat c=0,\\
\hat\lambda>0.
\end{cases}
\end{equation*}

Define the augmented KKT map
\[
\tilde F_n(\theta,\lambda,c)
=
\begin{pmatrix}
\nabla_\theta \phi_n(\theta)+\lambda\nabla_\theta h(\theta)\\
h(\theta)-c\\
\nabla_c \mathcal R_n(c)
\end{pmatrix}.
\]

Let $(\theta^*,\lambda^*,c^\dagger):=(\theta^*(c^\dagger),\lambda^*(c^\dagger),c^\dagger)$ denote the population solution and define
$
a := \nabla_\theta h(\theta^*).
$
Under smoothness and identification conditions, the Jacobian
$
\tilde J :=
\nabla_{(\theta,\lambda,c)}\tilde F_0(\theta^*,\lambda^*,c^\dagger)
$
is nonsingular. A mean-value expansion yields
\[
\sqrt n
\begin{pmatrix}
\hat\theta-\theta^*\\
\hat\lambda-\lambda^*\\
\hat c-c^\dagger
\end{pmatrix}
=
-\tilde J^{-1}\sqrt n\,\tilde F_n(\theta^*,\lambda^*,c^\dagger)
+ o_p(1).
\]

Using the population KKT condition
$
\nabla_\theta\phi(\theta^*)+\lambda^* a=0,
$
we obtain
$
\sqrt n\big(\nabla_\theta\phi_n(\theta^*)-\nabla_\theta\phi(\theta^*)\big)
\xrightarrow{d} N(0,\Sigma_\psi).
$
Similarly,
$
\sqrt n\,\nabla_c \mathcal R_n(c^\dagger)
\xrightarrow{d} N(0,\Sigma_c).
$

\medskip

\noindent\textbf{Step 2 (Bootstrap analogue).}

Define the bootstrap KKT map
\[
\tilde F_n^{\mathcal{B}}(\theta,\lambda,c)
=
\begin{pmatrix}
\nabla_\theta \phi_n^{\mathcal{B}}(\theta)+\lambda\nabla_\theta h(\theta)\\
h(\theta)-c\\
\nabla_c \mathcal R_n^{\mathcal{B}}(c)
\end{pmatrix}.
\]

A linear expansion around $(\hat\theta,\hat\lambda,\hat c)$ gives
\[
\sqrt n
\begin{pmatrix}
\hat\theta^{\mathcal{B}}-\hat\theta\\
\hat\lambda^{\mathcal{B}}-\hat\lambda\\
\hat c^{\mathcal{B}}-\hat c
\end{pmatrix}
=
-\hat{\tilde J}_n^{-1}
\sqrt n\big(
\tilde F_n^{\mathcal{B}}-\tilde F_n
\big)(\hat\theta,\hat\lambda,\hat c)
+ o_p^{\mathcal{B}}(1).
\]

Assumption~\ref{ass:wildB} yields
$
\sqrt n\big(\nabla_\theta\phi_n^{\mathcal{B}}-\nabla_\theta\phi_n\big)
\xrightarrow{d^{\mathcal B}} N(0,\Sigma_\psi),
$
and similarly for the risk gradient. Applying Slutsky’s theorem yields
\[
\sqrt n
\begin{pmatrix}
\hat\theta^{\mathcal{B}}-\hat\theta\\
\hat\lambda^{\mathcal{B}}-\hat\lambda
\end{pmatrix}
\xrightarrow{d^{\mathcal{B}}}
\sqrt n
\begin{pmatrix}
\hat\theta-\theta^*\\
\hat\lambda-\lambda^*
\end{pmatrix}.
\]
\qed

\subsection{Proof for Theorem \ref{thm:wildISP}}
\label{sec:pf_wildISP}

Under Assumption~\ref{assump:sign_map}, the sign map is locally constant with probability approaching one, so the ISP map is continuously differentiable in a neighborhood of $(\theta^*,\lambda^*,c^\dagger)$. Define
$
ISP(\theta,\lambda)
=
2\lambda\,S(\theta)\,\Sigma^{-1}g(\theta),
$
where $S(\theta)=\mathrm{diag}(\mathrm{sign}(g(\theta)))$. Let $S^{*}=S(\theta^*)$. The derivatives at the population point are
\begin{align*}
\nabla_\theta ISP &= 2\lambda^* S^{*}\Sigma^{-1}G(\theta^*),\\
\nabla_\lambda ISP &= 2S^{*}\Sigma^{-1}g(\theta^*).
\end{align*}

\medskip

By Theorem~\ref{thm:wild}, we have joint convergence
\[
\sqrt n
\begin{pmatrix}
\hat\theta^{\mathcal{B}}-\hat\theta\\
\hat\lambda^{\mathcal{B}}-\hat\lambda\\
\hat c^{\mathcal{B}}-\hat c
\end{pmatrix}
\xrightarrow{d^{\mathcal{B}}}
\sqrt n
\begin{pmatrix}
\hat\theta-\theta^*\\
\hat\lambda-\lambda^*\\
\hat c-c^\dagger
\end{pmatrix}.
\]

Applying the delta method to the map $(\theta,\lambda,c)\mapsto ISP(\theta,\lambda)$ yields
\begin{align*}
\sqrt n(\widehat{ISP}^{\mathcal B}-\widehat{ISP})
&=
J_\theta \sqrt n(\hat\theta^{\mathcal B}-\hat\theta)
+ J_\lambda \sqrt n(\hat\lambda^{\mathcal B}-\hat\lambda)\\
&\quad
+ J_c \sqrt n(\hat c^{\mathcal B}-\hat c)
+ o_p^{\mathcal B}(1),
\end{align*}
where the term involving $J_c$ reflects the indirect dependence of ISP on $c$ through $(\theta,\lambda)$. Hence, the limiting covariance is
\begin{align*}
\Sigma_{ISP}
&=
J_\theta V_1 J_\theta^{\prime}
+ J_\theta V_2 J_\lambda^{\prime}
+ J_\lambda V_2^{\prime} J_\theta^{\prime}
+ J_\lambda V_3 J_\lambda^{\prime}\\
&\quad
+ J_c V_c J_c^{\prime}
+ J_\theta V_{\theta c} J_c^{\prime}
+ J_c V_{\theta c}^{\prime} J_\theta^{\prime}
+ J_\lambda V_{\lambda c} J_c^{\prime}
+ J_c V_{\lambda c}^{\prime} J_\lambda^{\prime},
\end{align*}
which matches the sampling variability of $\sqrt n(\widehat{ISP}-ISP(\theta^*,\lambda^*))$.
\qed

\end{document}